\newtheorem{remark}{Remark}
\newtheorem{theorem}{Theorem}
\algnewcommand\algorithmicinput{\textbf{Input:}}
\algnewcommand\Input{\item[\algorithmicinput]}
\algnewcommand\algorithmicoutput{\textbf{Output:}}
\algnewcommand\Output{\item[\algorithmicoutput]}
\DeclareRobustCommand*\cal{\@fontswitch\relax\mathcal}
\DeclareMathOperator*{\argmin}{arg\,min}
\DeclareMathOperator{\vol}{vol}
\DeclareMathOperator{\tr}{tr}
\DeclareMathOperator{\p}{\partial}
\DeclareMathOperator{\rank}{rank}
\DeclareMathOperator{\R}{\mathbb R}
\newcommand{\eps}{\varepsilon}
\begin{document}
\title{Practical lowest distortion mapping}

\author{Vladimir Garanzha}
\author{Igor Kaporin}
\author{Liudmila Kudryavtseva}

\affiliation{\institution{Dorodnicyn Computing Center FRC CSC RAS}, Moscow, Russia, \institution{Moscow Institute of Physics and Technology}, Moscow, Russia}
\author{Francois Protais}
\author{David Desobry}
\author{Dmitry Sokolov}
\authornote{Corresponding author: dmitry.sokolov@loria.fr}
\affiliation{\institution{Université de Lorraine}, \institution{CNRS}, \institution{Inria}, \institution{LORIA}, F-54000 Nancy, France}

\begin{abstract}

Construction of optimal deformations is one of the long standing problems of computational mathematics.
We consider the problem of computing quasi-isometric deformations with minimal possible quasi-isometry constant (global estimate for relative length change).

We build our technique upon~\cite{garanzha2021foldoverfree},
a recently proposed numerical optimization scheme that provably untangles 2D and 3D meshes with inverted elements by partially solving a finite number of minimization problems.
In this paper we show the similarity between continuation problems for mesh untangling and for attaining prescribed deformation quality threshold.
Both problems can be solved by a finite number of partial solutions of optimization problems which are based on finite element approximations of parameter-dependent hyperelastic functionals.
Our method is based on a polyconvex functional which admits a well-posed variational problem.

To sum up, we reliably build 2D and 3D mesh deformations with smallest known distortion estimates (quasi-isometry constants) as well as stable quasi conformal parameterizations for very stiff problems.

\end{abstract}

%
%
\begin{CCSXML}
<ccs2012>
<concept>
<concept_id>10010147.10010371.10010396.10010397</concept_id>
<concept_desc>Computing methodologies~Mesh models</concept_desc>
<concept_significance>500</concept_significance>
</concept>
</ccs2012>
\end{CCSXML}

\ccsdesc[500]{Computing methodologies~Mesh models}

%
%

\keywords{Parameterization, injective mapping, mesh untangling, bounded distortion, quality mapping}

\begin{teaserfigure}
	\centerline{
		\begin{minipage}[!t]{.12\linewidth}\vspace{0pt}
			\vspace{4mm}
			\centering
			\includegraphics[width=.7\linewidth]{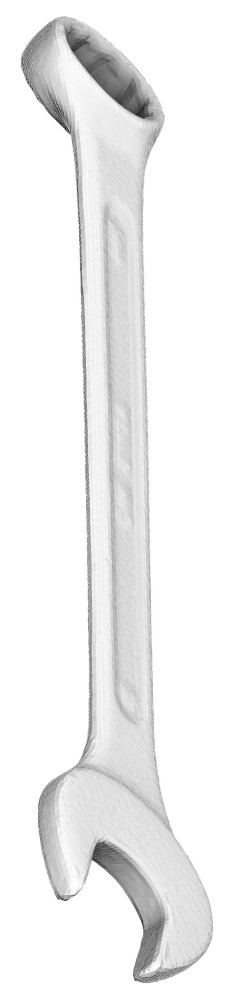}\\
			\vspace{4.25mm}
			\textbf{(a)}
		\end{minipage}
		\begin{minipage}[!t]{.29\linewidth}\vspace{0pt}
			\centering
			\includegraphics[width=\linewidth]{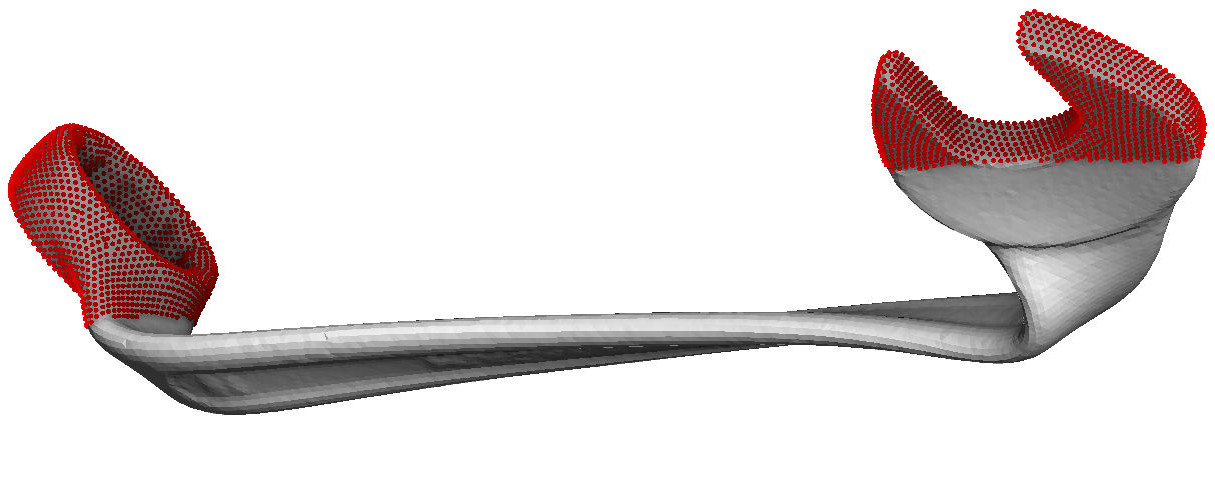}
			\includegraphics[width=\linewidth]{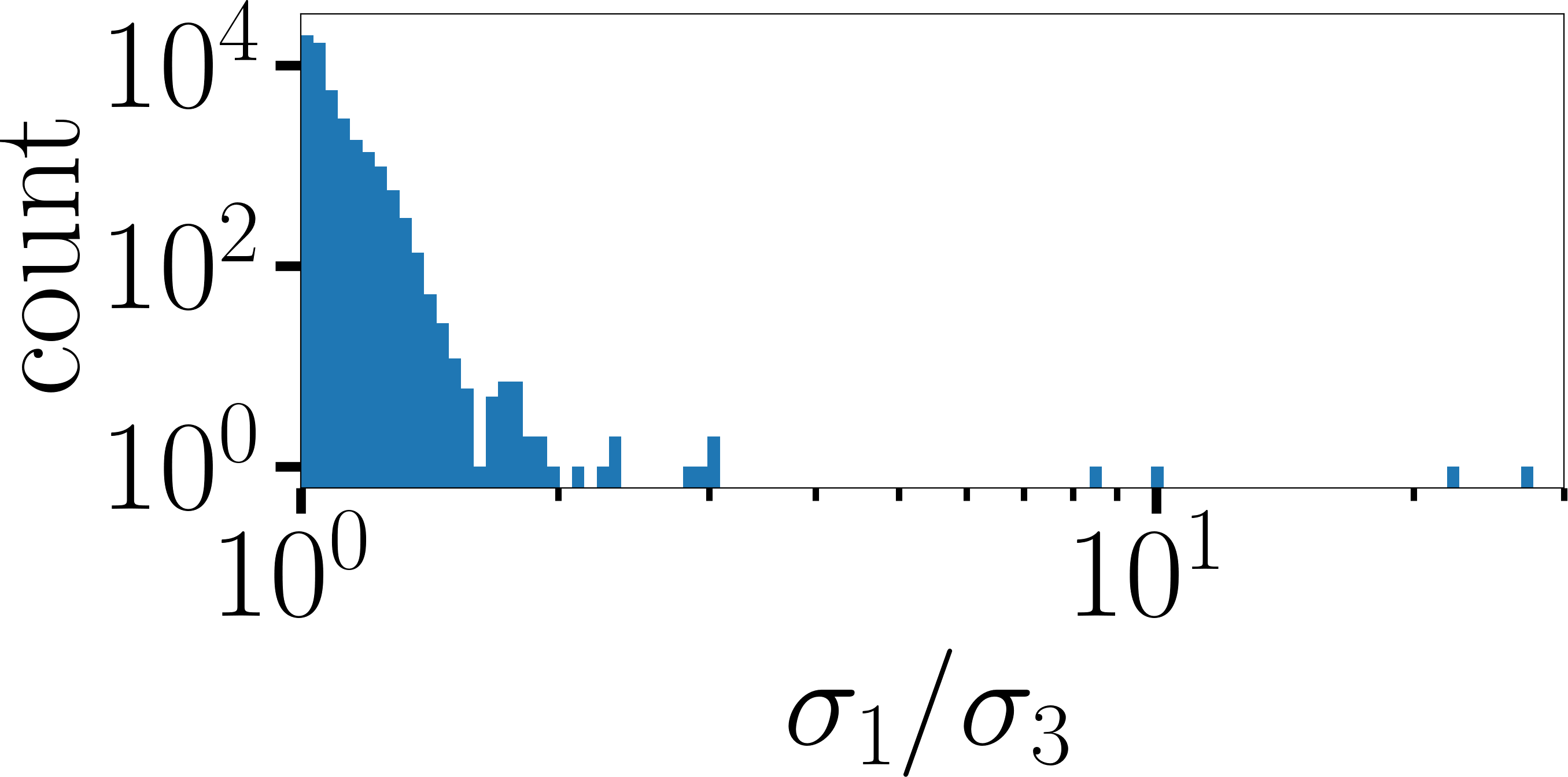}
			\includegraphics[width=\linewidth]{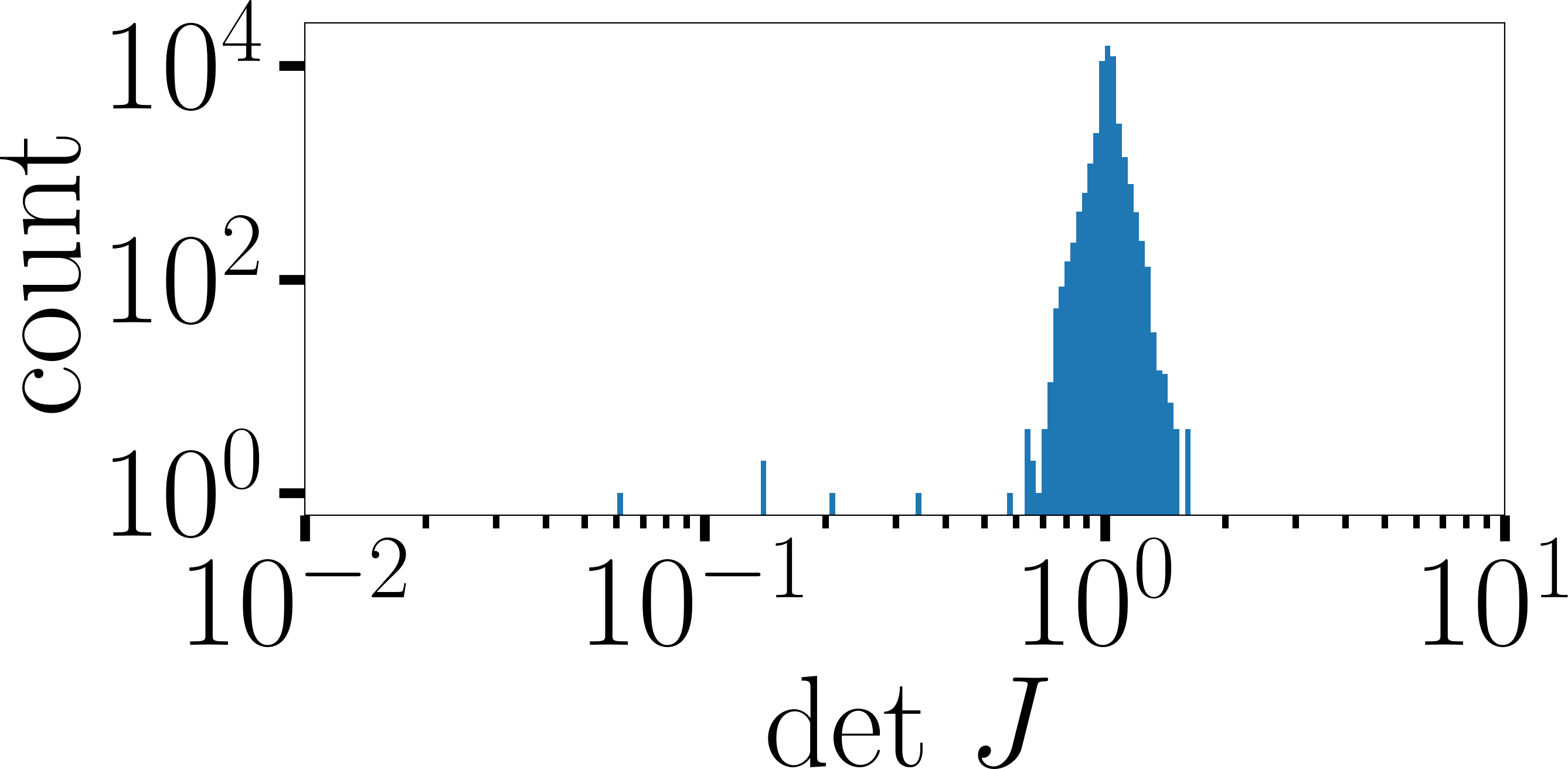}
			\textbf{(b)}
		\end{minipage}
		\begin{minipage}[!t]{.29\linewidth}\vspace{0pt}
			\centering
			\includegraphics[width=\linewidth]{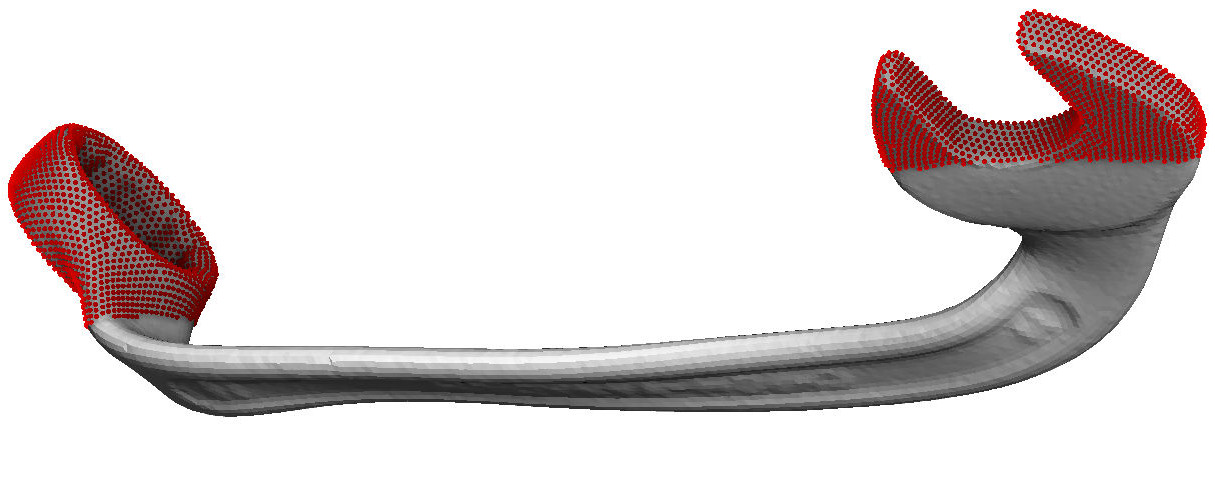}
			\includegraphics[width=\linewidth]{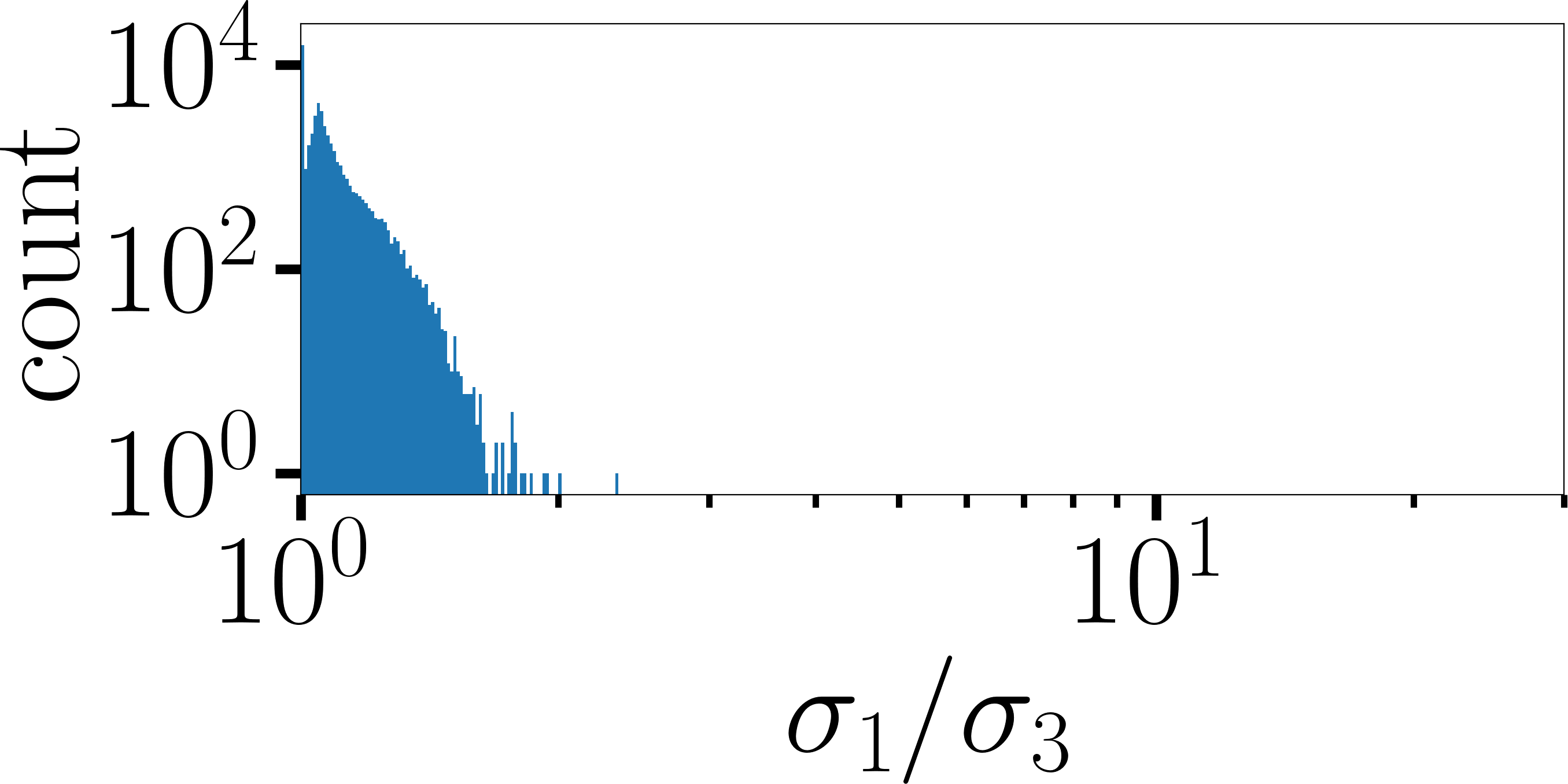}
			\includegraphics[width=\linewidth]{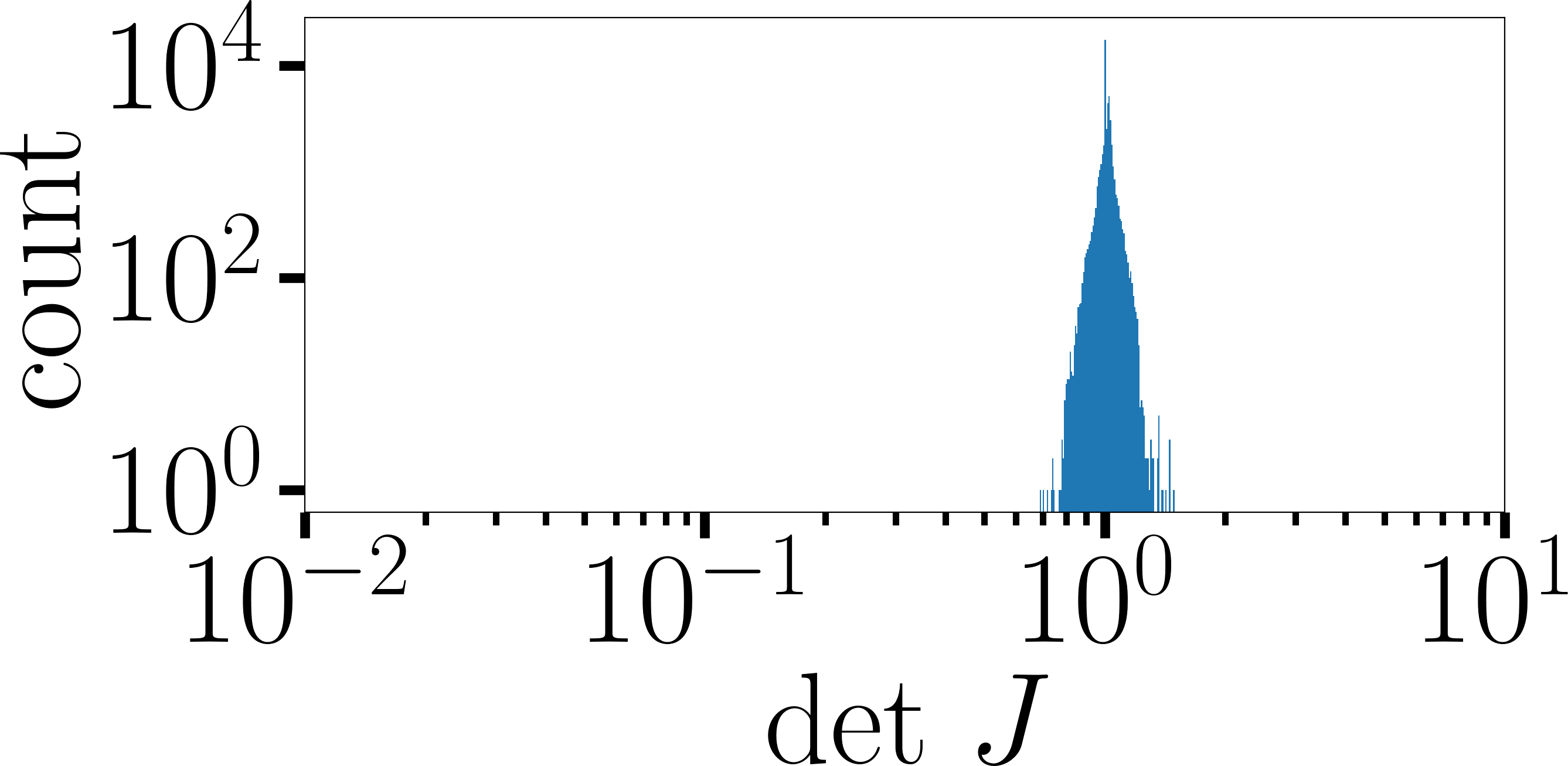}
			\textbf{(c)}
		\end{minipage}
		\begin{minipage}[!t]{.29\linewidth}\vspace{0pt}
			\centering
			\includegraphics[width=\linewidth]{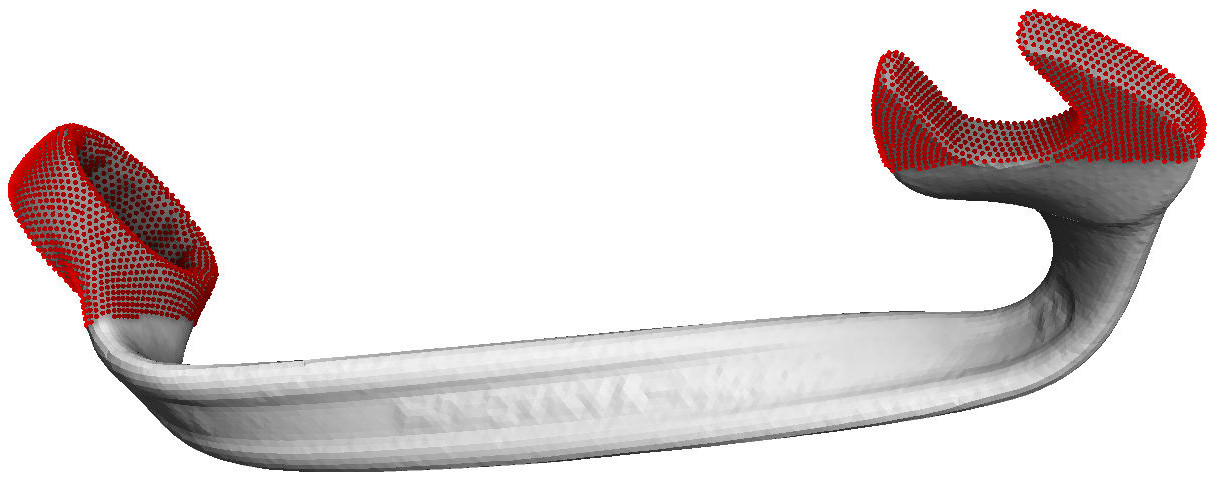}
			\includegraphics[width=\linewidth]{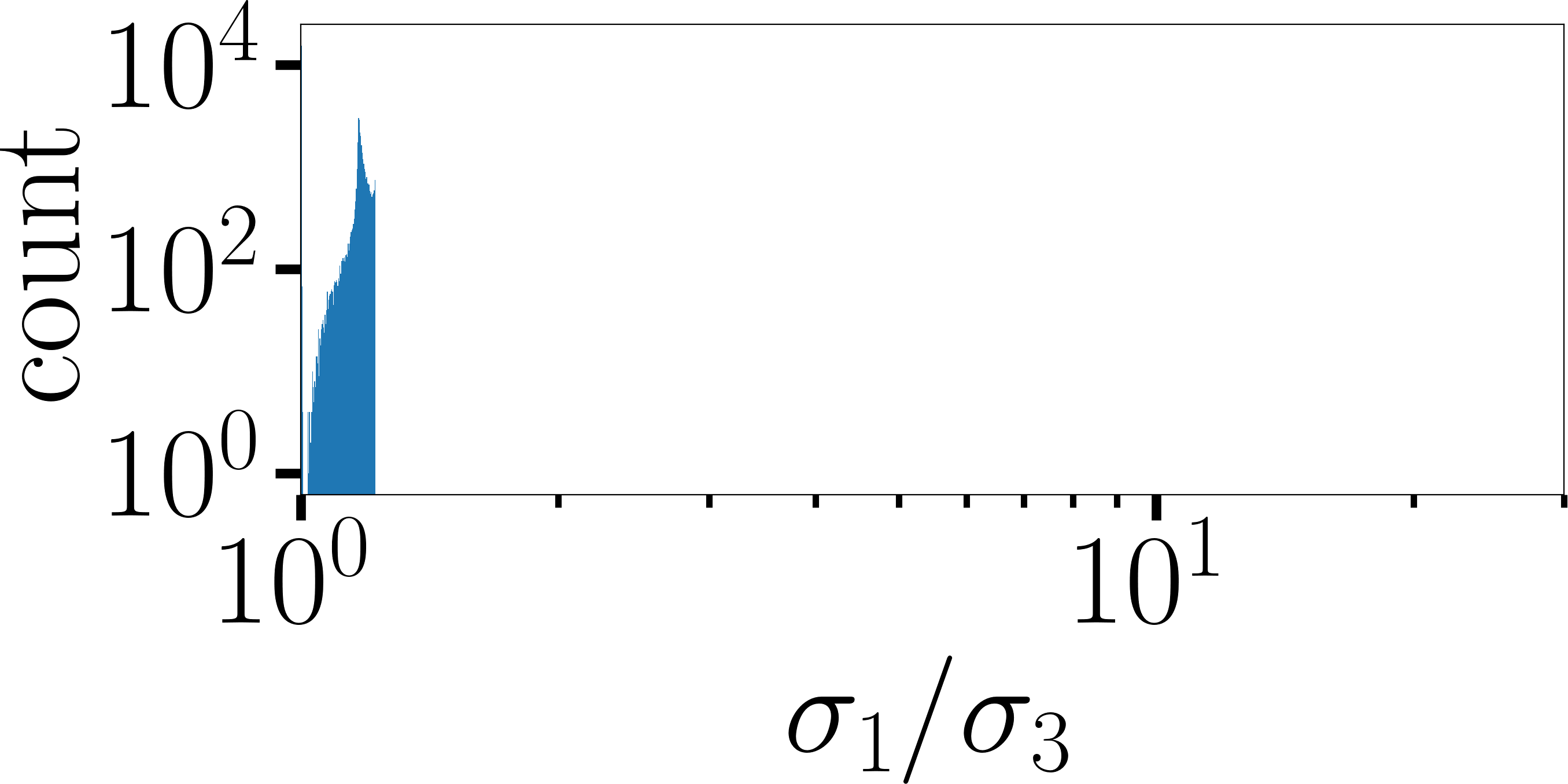}
			\includegraphics[width=\linewidth]{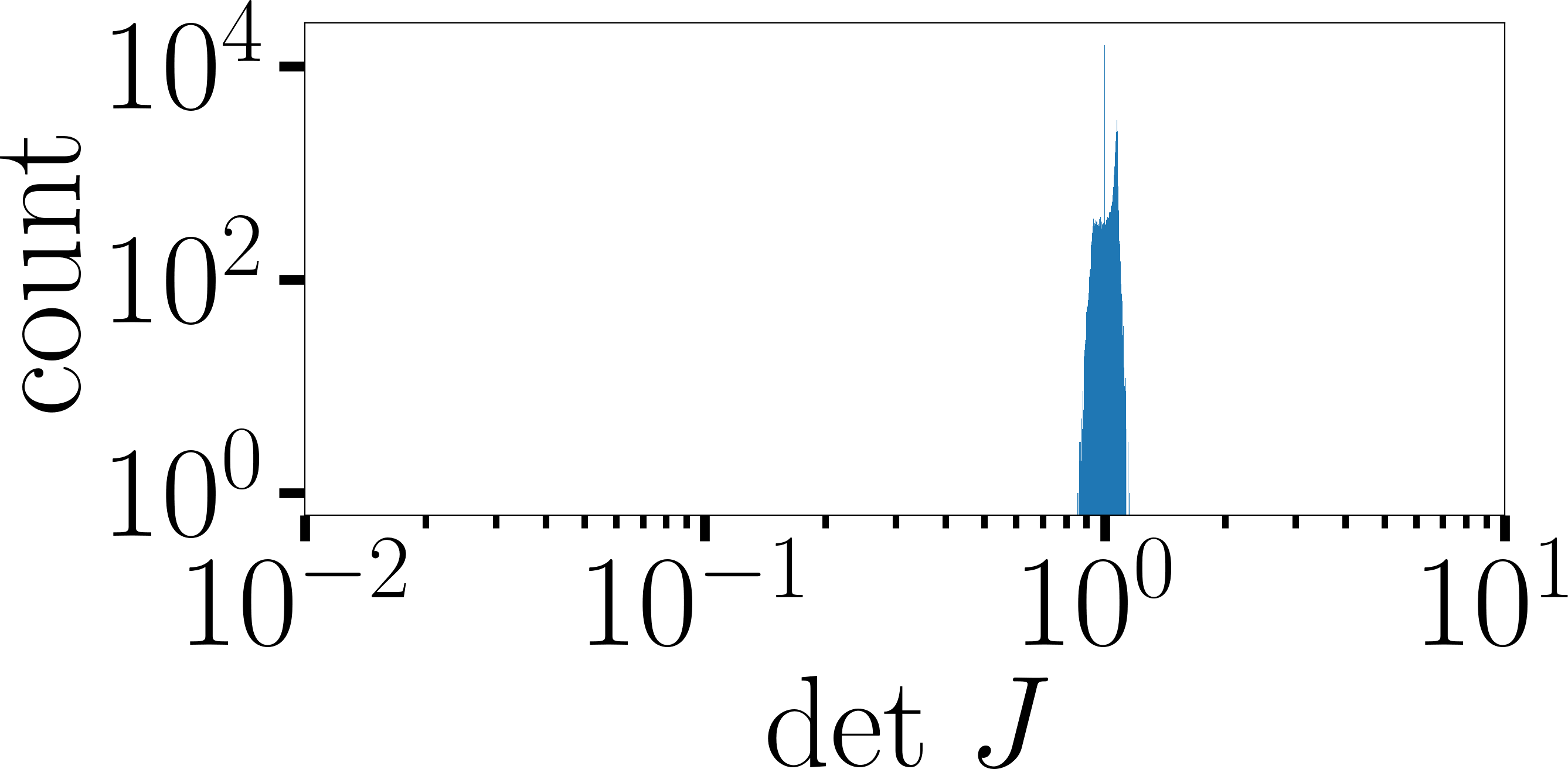}
			\textbf{(d)}
		\end{minipage}
	}
	\caption{Tetrahedral mesh deformation, locked vertices are shown in red. \textbf{(a):} Rest shape, \textbf{(b):} ABCD~\cite{Naitsat2019}, \textbf{(c):} untangling \cite{garanzha2021foldoverfree}, \textbf{(d)} our lowest distortion mapping with $\theta=\frac12$. \textbf{Top row:} deformations, \textbf{middle and bottom rows:} log-log histograms of element quality (condition number of the Jacobian matrix and Jacobian determinant).}
	\label{fig:teaser}
\end{teaserfigure}

\maketitle
\section{Introduction}

Construction of optimal deformations is one of central themes in mesh generation research.
Generally, for computing mesh deformations, using elasticity analogy was found to be very fruitful and resulted in efficient engineering mesh generation algorithms with sound theoretical foundations~\cite{jacquotte1988mechanical,rumpf1996variational}.
The idea is to say that a mesh represents an elastic material, whose stored energy of deformation can be measured as $\int\limits_\Omega f(J)$,
where $\Omega$ is the input domain, $J$ is the Jacobian matrix of elastic deformation, and $f$ is a measure of distortion.
Then, obviously, we want to minimize the stored energy of deformation.

More precisely, let us say that we want to compute a map $\vec x(\vec \xi) : \Omega\subset \mathbb R^d \rightarrow \mathbb R^d$,
where $d$ stands for the dimension (2 or 3), and the arrow denotes a $d$-dimensional vector.
Consider the following variational problem:
\begin{equation}
	\argmin\limits_{\vec{x}(\vec{\xi})}\int\limits_\Omega f(J)  \,d \xi,
	\label{eq:winslow}
\end{equation}
where $J$ is again the Jacobian matrix of the mapping $\vec x(\vec \xi)$.


While this formulation allows to minimize distortion on average, it does not allow to limit maximum distortion.
The problem of constructing bounded distorted deformations has a long-standing history in elasticity research and goes back to 1957 Prager's work on ``Ideal locking materials'' \cite{prager1957stiffening}.
Now this problem is referred to as ``stiffening'' of elastic material and generally is formulated as a set of nonlinear constraints on the Jacobian matrix of elastic deformation \cite{ciarlet1985stiffening}.

Constructing elastic deformations with bounded distortion constraints is a very hard non-convex and non-linear problem.
There were numorous attempts made to solve the problem.
For example, \cite{Sorkine2002} propose to lay triangles in a plane in a greedy manner without exceeding a user-specified distortion bound.
Obviously, the mesh is cut during the procedure, and since it is possible to lay individual triangles without any distortion, the method succeeds.
\cite{Fu2016} propose to enforce the distortion constraints with a penalty method, leading to conflicts between multiple terms in the energy to minimize.
\cite{LargeScaleBD:2015} alternate between energy optimization and a non-trivial projection to the highly non-convex set of constraints.
\cite{Lipman2012} formulates the problem as a second-order cone programming, relying on elaborated commercial solvers such as MOSEK~\cite{Andersen2000}.

All these papers try to incorporate the boundedness constraint into different black-box optimization toolboxes.
While the approach may work reasonably well in practice, it is hard to obtain any guarantees,
and solutions may exhibit undesirable, hard to explain and eliminate artifacts, say, noise and loss of symmetries.
We propose to explore another research direction based on a {\bf unconditional} optimization, avoiding altogether all issues related to constraints.

A very interesting idea~\cite{garanzha2000barrier} is to consider a quasi-isometric hyperelastic material,
which unlike other known models, provides admissible deformations with bounded global distortion (bounded quasi-isometry constant) as minimizers of elastic energy.
Invertibility theorem for deformation of this material was established in 2D and 3D cases  \cite{Garanzha2014}.
The main idea is to use controlled stiffening of material which is incorporated directly into definition of the density of deformation energy in such a way that when local measure of deformation exceeds certain threshold, the elastic material becomes infinitely stiff.
The stiffening threshold is introduced as a parameter, and max-norm optimization problem for deformation is formulated as a continuation problem for polyconvex functional
(minimization of stiffening threshold, or, alternatively, maximization of the quality threshold).
Unfortunately, this work lacks a robust strategy of stiffening parameter choice.

Generally speaking, the stiffening technique may be expressed in the following way.
Having a distortion measure $f(J)$, we can try to minimize the following energy:
\[ \int\limits_\Omega w \cdot f(J) \, d \xi, \]
where $w$ is a weight function.
Typically, $w$ is chosen to be large in the regions where small values of $f(J)$ are required.
We can use this general weighted formulation to control pointwise behaviour of the spatial distribution of the distortion measure.
This idea was suggested in \cite{garanzha2000barrier} to build orthogonal mappings near boundaries which is one of the key requirements for CFD meshes.
\cite{MIQ2009} used such weights in a heuristic procedure for mesh untangling.
If an adaptation metric is prescribed in the computational domain, one can compute the weight function $w$ according to this metric~\cite{Ivanenko2000}.

Some other methods have also used this technique.
For example, by setting
\[
w := f^p(J), \ \ p > 0,
\]
one can get a power law enhancement, thus penalizing large values of local distortion~\cite{10.1007/978-3-030-10934-9_35}.
The same idea was used in IDP algorithm \cite{Fang2021IDP} where the authors suggested to use $p=4$ as a rule of thumb.
If function $f$ is polyconvex, function $f^{1 + p}(J)$ is polyconvex as well.
Note however, that in the continuous case one cannot prove that $f^{1 + p}(J)$ is bounded from above, and in the discrete case one cannot prove that $f(J)$ does not grow to infinity under mesh refinement.

Another weight
\[
w := \frac1{1 - t f(J)}, \quad 0 \leq t < 1
\]
corresponds to algorithm from \cite{garanzha2000barrier}.
The resulting functional has an inifinite barrier (refer to \S\ref{sec:stiffening} for more details) on the boundary of the set of quasi-isometric deformations \cite{Garanzha2014},
thus solving problem of quasi-isometric map generation formulated by Godunov~\cite{Godunov1995} for general domains.

Having carefully designed a strategy of choice for the stiffening parameter $t$, we obtain lowest distortion maps with our quasi-isometric stiffening (QIS) algorithm~(Alg.~\ref{alg:stiffening} + Eq.~\eqref{eq:t.update}).
With this new contribution, we were able to confirm the 20-years old conjecture that variational problem \cite{garanzha2000barrier} allows to build best deformations compared to state-of-the-art algorithms.

\paragraph*{Our contributions}

We propose a very simple algorithm that allows us to reliably build 2D and 3D mesh deformations with \textbf{smallest known distortion estimates} (quasi-isometry constants).
To the best of our knowledge, we are the first to provide \textbf{theoretical guarantees}

to this long standing problem.
Our approach is a discretization of a well-posed variational scheme, and it has an advantage that type, size and quality of mesh elements in the deformed object have a weak influence on the computed deformation.
We show that attainable quality threshold estimates (quasi-isometry constants) \textbf{do not deteriorate under mesh refinement} which is a unique property of the proposed algorithm.

By coupling our technique with~\cite{garanzha2021foldoverfree}, we obtain a complete unified mapping pipeline.
For a better reproductibility, we publish a complete \textbf{C++ implementation} \cite{supplemental}.
We start from an arbitrary initial deformation, untangle the mesh in a finite number of steps,
minimizing mean distortion, and finally we minimize the maximum distortion.
Just like for the untangling step, we can obtain a deformation with a prescribed quality threshold in a finite number of steps.
Both parts of the pipeline build upon the same ideas, and require only a linear solver~\cite{Hestenes1952MethodsOC} for positive definite matrices (if Newton minimization is adopted) or a L-BFGS solver~\cite{LBFGS} for a quasi-Newton scheme.

Last, but not least, we bring more robustness to global parameterizations.
\cite{garanzha2021foldoverfree} produce maps free of inverted elements, but do not prevent $k$-coverings, thus losing local injectivity.
We provide a very simple but effective way to handle this problem: we guarantee \textbf{local injectivity for global parameterizations}.

\section{Variational formulation for untangling and distortion minimization}

This section gives a necessary background on elastic deformations.
First, in \S\ref{sec:energy} we revisit main issues of mesh deformation based on the elasticity theory.
Then, in \S\ref{sec:untangling} we present the core idea behind the untangling procedure described in~\cite{garanzha2021foldoverfree}.
Next, in \S\ref{sec:stiffening}, we describe the idea behind generation of deformations of a given quality, until recently very heuristic.
Finally, having prepared all necessary concepts, we can present our latest guarantees and results (\S\ref{sec:finite-bounded-distortion} and \S\ref{sec:results}).

\subsection{Elastic material choice and main issues}
\label{sec:energy}

For our meshes we chose a material whose stored energy of deformation $\int\limits_\Omega f(J)  \,d \xi$ can be measured with density $f$ defined as follows~\cite{garanzha2000barrier,Hormann2000MIPS}:
\begin{equation}
	f(J) := (1 - \theta) f_s(J) + \theta f_v(J),
	\label{eq:distortion}
\end{equation}
where shape distortion is defined as
\begin{equation}
	f_s(J) := \left\{ \begin{array}{ll} \displaystyle\frac1d\frac{ \tr J^\top J}{(\det J)^\frac2d}, & \det J > 0 \\
		+\infty, & \det J \leq 0 \end{array} \right.
	\label{eq:shape}
\end{equation}
while volumetric distortion is defined 
\begin{equation}
	f_v(J) := \left\{ \begin{array}{ll} \frac12 \left( \det J + \displaystyle\frac{1}{\det J} \right), & \det J > 0 \\
		+\infty, & \det J \leq 0 \end{array} \right.
	\label{eq:volume}
\end{equation}
Note that functions $f_s(J)$ and $f_v(J)$ have concurrent goals, one preserves angles and the other preserves volumes, and thus $\theta$ serves as a trade-off parameter.
Density~\eqref{eq:distortion} is a polyconvex function satisfying ellipticity conditions, it is therefore very well suited for a numerical optimization.

Polyconvexity of the energy, ellipticity of the PDE and invertibility of the minimizer are very strong results;
moreover, the minimizer of Prob.~\eqref{eq:winslow} has a minimum average distortion.
There are, however, two issues with this formulation:
\begin{enumerate}[wide=0pt,itemindent=2em]
	\item[\textbf{(a)}] the variational problem makes sense and allows for minimization only when an \textbf{initial guess} is in the \textbf{admissible} domain\linebreak ${\min\limits_\Omega\det J>0}$, so a special untangling procedure is required for an arbitrary initial guess;
	\item[\textbf{(b)}] the fact that functional~\eqref{eq:winslow} is bounded \textbf{does not imply} that distortion measure~\eqref{eq:distortion} is \textbf{bounded}.
	It optimizes average deformation and admits integrable singularities. In order to provably suppress this behaviour, one has to consider modified hyperelastic material which forbids deformations with local distortion above prescribed threshold. 
\end{enumerate}
Two following sections discuss both points and lay the ground for our contribution in controlled stiffening of a hyperelastic material,
that provably allows us to build best known quasi-isometric maps.

\subsection{Untangling}
\label{sec:untangling}

With a slight abuse of notations, the density~\eqref{eq:distortion} can be rewritten as follows:
$$
f := (1 - \theta) \frac1d \frac{\tr J^\top J}{\left(\max(0, \det J)\right)^\frac2d} +  \theta \frac12  \frac{1 + \det^2 J}{\max(0, \det J)}
$$

\begin{figure}[!t]
	\centering
	\includegraphics[width=.6\linewidth]{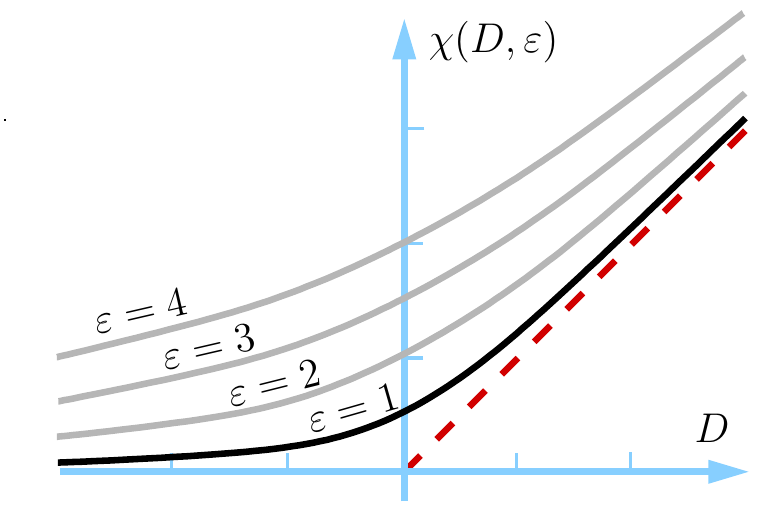}
	\caption{Regularization function for the denominator in Eq.~\eqref{eq:feps}.
		When $\varepsilon$ tends to zero, $\chi(\varepsilon, D)$ tends to $\max(0, D)$.}
	\label{fig:chi}
\end{figure}

\setcounter{figure}{\value{table}}
\begin{figure*}[!t]
	\centering
	\begin{minipage}{.45\textwidth}
		\captionsetup{name=Alg.}
		\hrulefill
		\begin{algorithmic}[1]
			\Input $X^{0}$ \Comment{arbitrary initial guess (vector of size $\#V \times d$)} 
			\Output $X$ \Comment{final foldover-free map (vector of size $\#V \times d$)}
			\State $k \leftarrow 0$;
			\Repeat
			\State compute $\varepsilon^k$; \Comment{decreasing sequence}
			\State $X^{k+1} \leftarrow \argmin\limits_{X} F(X,\varepsilon^k)$;
			\State $k \leftarrow k+1$;
			\Until{\footnotesize
				$\min\limits_{t\in 1\dots\#T} \det J_t^{k}>0$ \textbf{~and~} $F(X^{k}, \varepsilon^{k})>(1-10^{-3})\, F(X^{k-1}, \varepsilon^{k-1})$ 
			}
			\State $X\leftarrow X^k$;
		\end{algorithmic}
		\hrulefill
		\caption{Computation of a foldover-free map}\label{alg:untangling}
	\end{minipage}
	\hfill
	\begin{minipage}{.45\textwidth}
		\captionsetup{name=Alg.}
		\hrulefill
		\begin{algorithmic}[1]
			\Input $X^{0}$ \Comment{untangled initial guess (vector of size $\#V \times d$)} 
			\Output $X$ \Comment{final bounded distortion map (vector of size $\#V \times d$)}
			\State $k \leftarrow 0$;
			\Repeat
			\State compute $t^k$; \Comment{increasing sequence, Eq.~\eqref{eq:t.update}}
			\State $X^{k+1} \leftarrow \argmin\limits_{X} W(X,t^k)$;
			\State $k \leftarrow k+1$;
			\Until{\footnotesize
				$W(X^{k}, t^{k})>(1-10^{-3})\, W(X^{k-1}, t^{k-1})$
			}
			\State $X\leftarrow X^k$;
		\end{algorithmic}
		\hrulefill
		\caption{Quasi-isometric stiffening (QIS)}\label{alg:stiffening}
	\end{minipage}
	
\end{figure*}
\setcounter{figure}{\value{figure}}

Note that if an initial guess is not admissible (has inverted elements), then the function is not defined.
To overcome this problem, we can introduce function $\chi(D,\varepsilon)$ that will serve as a regularization for $\max(0,D)$ in the denominator:
\begin{equation}
	\chi(D, \eps) := \frac{D+\sqrt{\eps^2 + D^2}}{2}
	\label{eq:chi}
\end{equation}
When $\varepsilon$ tends to zero, $\chi(\varepsilon, D)$ tends to $\max(0, D)$, refer to Fig.~\ref{fig:chi} for an illustration.
Then, the density can be regularized as follows~\cite{Garanzha99}:
\begin{equation}
	f_\varepsilon(J) := (1 - \theta) \frac1d\frac{ \tr J^\top J}{\left(\chi(\det J, \varepsilon)\right)^\frac2d} +  \theta \frac12  \frac{1 + \det^2 J}{\chi(\det J, \varepsilon)}
	\label{eq:feps}
\end{equation}

Finally, Prob.~\eqref{eq:winslow} can be rewritten as follows:
\begin{equation}
	\lim\limits_{\varepsilon \rightarrow 0^+}\argmin\limits_{\vec{x}(\vec{\xi})}\int\limits_\Omega \ f_\varepsilon(J)  \,d \xi
	\label{eq:continuous}
\end{equation}
This formulation suggests an algorithm: build a decreasing sequence of the $\varepsilon^k\rightarrow 0$, and for each value $\varepsilon^k$ solve an optimization problem.
In other words, Prob.~\eqref{eq:continuous} offers a way of getting rid of foldovers by solving a continuation problem with respect to the parameter $\eps$.

The simplest way to discretize Prob.~\eqref{eq:continuous} is with first-order FEM:
the map $\vec x$ is piecewise affine with the Jacobian matrix $J$ being piecewise constant,
and can be represented by the coordinates of the vertices in the computational domain $\{\vec{x}_i\}_{i=1}^{\#V}$.
Let us denote the vector of all variables as $X := \left(\vec{x}_1^\top \dots \vec{x}_{\#V}^\top \right)^\top$.

A discretization of Prob.~\eqref{eq:continuous} can be written as follows:
\begin{gather}
	\label{eq:discrete}
	\lim\limits_{\varepsilon \rightarrow 0^+}\argmin\limits_{X} F(X,\varepsilon), \\
	\text{ where }\quad  F(X, \varepsilon) : =\sum\limits_{k=1}^{\#T} f_\eps(J_k) \vol(T_k) \nonumber,
\end{gather}
$\#V$ is the number of vertices, $\#T$ is the number of simplices, $J_k$ is the Jacobian matrix for the $k$-th simplex  and $\vol(T_k)$ is the signed volume of the simplex $T_k$ in the parametric domain. 

Prob.~\eqref{eq:discrete} can be solved with Alg.~\ref{alg:untangling}.
The algorithm itself is very simple, and has been published more than 20 years ago~\cite{Garanzha99}.
Note, however, that the crucial step here is the choice of the regularization sequence $\varepsilon^k$ (Alg.~\ref{alg:untangling}--line 3),
and until very recently only heuristics were used.
Last year~\cite{garanzha2021foldoverfree} have presented a way to build such a sequence that offers theoretical guarantees on untangling (refer to \S\ref{sec:finite-bounded-distortion} for a complete formulation).

Function $F(X,0)$ has an impenetrable infinite barrier on the boundary of the set of meshes with positive cell volumes\footnote{
	As a side note, this set has a quite complicated structure.
	For $k$-th simplex $\vol(\vec x(T_k))$ is a polylinear function of coordinates of its vertices, hence each term in \eqref{eq:discrete-admissible-set} defines a non-convex set.
	One can hardly expect that intersection of the sets in \eqref{eq:discrete-admissible-set} would result in a convex domain.
	Moreover, Ciarlet~\cite{Ciarlet} has proved that barrier property and convexity of the density of deformation energy are incompatible.
	Fortunately, barrier distortion measures can be polyconvex, as shown by J.~Ball \cite{ball1976convexity}.}

\begin{equation}
	\label{eq:discrete-admissible-set}
	\frac{\vol(\vec x(T_k))}{\vol(T_k)}  > 0, \quad k = 1, \dots, \#T
\end{equation}
which is a finite-dimensional approximation of the set ~$\det J > 0$.
Untangling in Prob.~\eqref{eq:discrete} is guaranteed because~\cite{garanzha2021foldoverfree} build a decreasing sequence $\varepsilon^k\rightarrow 0$
that forces the mesh to fall into the feasible set~\eqref{eq:discrete-admissible-set} of untangled meshes.
With some assumptions on the minimization toolbox chosen, untangling is guaranteed to succeed in a finite number of steps.

\subsection{Controlled quasi-isometric stiffening (QIS): minimization of maximum distortion}
\label{sec:stiffening}

In addition to untangling, by solving Prob.~\eqref{eq:discrete} we minimize \textbf{average} distortion of a map.
In this section we present the idea behind  \cite{garanzha2000barrier} that will allow us to compute a deformation with prescribed quality,
i.e. optimize the \textbf{maximum} distortion until it reaches a given bound.

Consider following variational problem related to construction of deformations with prescribed quality $0 \leq t^* < 1$:
\begin{equation}
	\argmin\limits_{\vec{x}(\vec{\xi})}\int\limits_\Omega \frac{f(J)}{1 - t^* f(J)}  \,d \xi,
	\label{eq:quality-integral}
\end{equation} 
Recall that $f(J)\geq 1$, so for this integral to be finite, a necessary condition is
\begin{equation}
	f(J) < \frac1{t^*}
	\label{eq:local-quality}
\end{equation}
Here parameter $t^*$ plays the role of the lower quality bound of the deformation. 
Note that the density in Prob.~\eqref{eq:quality-integral} is polyconvex and this variational problem is well-posed \cite{Garanzha2014}.

We propose following finite-dimensional approximation of Prob.~\eqref{eq:quality-integral}:
\begin{gather}
	\label{eq:discrete-quality}
	\lim\limits_{t \rightarrow t^*}\argmin\limits_{X} W(X,t), \\
	\text{ where }\quad  W(X, t) : =\sum\limits_{k=1}^{\#T} \frac{f(J_k)}{1 - t f(J_k)} \vol(T_k) \nonumber,
\end{gather}

It is important to note that a solution of Prob.~\eqref{eq:discrete} corresponds to a solution of Prob.~\eqref{eq:discrete-quality} with $t^*=0$,
i.e. when no bound on the maximum deformation is imposed.
But then, having reached the set~\eqref{eq:discrete-admissible-set}, we can build an increasing sequence of $t^k\rightarrow t^*$
to contract the set until the mesh falls into the set of deformations with prescribed quality $t^*$:
\begin{equation}
	\label{eq:discrete-admissible-set2}
	f(J_k)<\frac{1}{t^*}, \quad k = 1, \dots, \#T
\end{equation}

Alg.~\ref{alg:stiffening} sums up the optimization procedure, note how closely it is related to Alg.~\ref{alg:untangling}.
While the general idea was published more than 20 years ago, until now it remained unclear how to build this sequence $\{t^k\}$,
and this constitutes the main theoretical contribution of the present paper.

While we assume that parameter $t^*$ exists, fortunately we are not obliged to know it to make QIS algorithm work.
It is an important advantage over optimization algorithms which use prescribed distortion bound like LBD~\cite{LargeScaleBD:2015}, since in practice even rough estimates of this bound are not available.
Essentially, QIS algorithm by itself serves as a distortion bound estimation tool for problems of any complexity. 
Refer to App.~\ref{app:relation} for the argumentation of the fact that minimizing our distortion measure implies minimization of the upper bound for the quasi-isometry constant.

\section{Theoretical guarantees for quasi-isometric stiffening}
\label{sec:finite-bounded-distortion}

This section provides our main result, namely, a way to build an increasing sequence $\{t^k\}$ that allows us to effectively contract the feasible set until we reach the goal.
Untangling and stiffening are very closely related, so let us first restate the main result of \cite[Theorem 1]{garanzha2021foldoverfree},
it will allow us to highlight the similarity between the approaches.

\begin{theorem}
	\label{th:th}
	Let us suppose that the feasible set of untangled meshes~\eqref{eq:discrete-admissible-set} is not empty.
	We also suppose that for solving $X^{k+1} \leftarrow \argmin\limits_{X} F(X,\varepsilon^k)$
	we have a minimization algorithm satisfying the following efficiency conditions for some $0<\sigma<1$:
	
	For each iteration $k$,
	\begin{itemize}
		\item \textbf{either} the essential descent condition holds
		\begin{equation}
			F(X^{k+1}, \eps^k) \leq (1-\sigma) F(X^{k}, \eps^k),
		\end{equation}
		\item \textbf{or} the vector $X^{k}$ satisfies the quasi-minimality condition:
		\begin{equation}
			\min\limits_{X} F(X, \eps^k) > (1-\sigma)F(X^{k}, \eps^k).
		\end{equation}
	\end{itemize}
	Then the feasible set~\eqref{eq:discrete-admissible-set} is reachable by solving a finite number of minimization problems in $X$ with $\eps^k$ fixed for each problem.
\end{theorem}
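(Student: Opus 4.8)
The plan is to reduce finite termination to a single quantitative estimate that compares the energy of a \emph{tangled} mesh with the regularization parameter $\eps$, and then to play this estimate against the essential-descent / quasi-minimality dichotomy assumed of the inner solver. The key estimate I would prove first is a lower bound: there is a constant $c:=\theta\cdot\min_{k}\vol(T_k)>0$ (recall $0<\theta<1$) such that every mesh $X$ lying outside the feasible set~\eqref{eq:discrete-admissible-set}, i.e.\ with $\det J_{k_0}\le 0$ for at least one cell $k_0$, satisfies $F(X,\eps)\ge c/\eps$ for all $\eps>0$. This is elementary: retain only the volumetric part of $f_\eps(J_{k_0})$, bound $1+\det^2 J_{k_0}\ge 1$, and note that for $D\le 0$ one has $\chi(D,\eps)=\tfrac12\bigl(\sqrt{\eps^2+D^2}-|D|\bigr)\le \eps/2$; multiplying $f_\eps(J_{k_0})\ge\theta/\bigl(2\chi(\det J_{k_0},\eps)\bigr)\ge\theta/\eps$ by $\vol(T_{k_0})$ gives the claim. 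Contrapositively, $F(X,\eps)<c/\eps$ forces $X$ into the feasible set.

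Next I would record an $\eps$-uniform upper bound on the value of each subproblem. Since the feasible set is nonempty, fix any feasible mesh $X_\star$; each $f_\eps(J)$ is nonincreasing in $\eps$ and $f_0(J)$ is finite whenever $\det J>0$, hence $\min_X F(X,\eps)\le F(X_\star,\eps)\le F(X_\star,0)=:C_0<\infty$ for every $\eps\ge 0$. Now choose $\eps^k=\bar\eps$ for all $k$, with $\bar\eps>0$ small enough that $C_0/(1-\sigma)<c/\bar\eps$ (the hypothesis restricts only the behaviour of the inner minimizer, not the choice of $\eps^k$, so we are free to do this). As long as $X^k$ is still tangled, the quasi-minimality alternative cannot occur: it would give $F(X^k,\bar\eps)<\tfrac1{1-\sigma}\min_X F(X,\bar\eps)\le C_0/(1-\sigma)<c/\bar\eps$, and then Step 1 would force $X^k$ to be feasible already.

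Consequently every iteration at which $X^k$ is still tangled must realize the essential descent $F(X^{k+1},\bar\eps)\le(1-\sigma)F(X^{k},\bar\eps)$. Since $\bar\eps>0$, $F(\cdot,\bar\eps)$ is finite and strictly positive, so these contractions drive the energy below the threshold $c/\bar\eps$ after finitely many — indeed $O\!\bigl(\log\tfrac1{\bar\eps}\bigr)$ — steps; at the first such step the lower bound of Step 1 certifies that the mesh has entered the feasible set. Hence the feasible set is reached after a finite number of minimizations, each with $\eps$ held fixed, which is exactly the assertion.

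The crux is Step 1: it is the only place where the precise algebraic form of $f_\eps$ and of $\chi(D,\eps)$ enters, and one has to verify that the barrier the stiffened material places on a non-positive cell degrades \emph{exactly} like $1/\eps$ — a slower decay would not force feasibility through the energy bound, while a faster one would be unattainable as $\eps\to 0^+$. A secondary technicality appears only if one insists on following the literal monotonically decreasing schedule of Alg.~\ref{alg:untangling} rather than allowing $\eps$ to be frozen: one then has to exclude $\{\eps^k\}$ accumulating at a positive value, which follows from a uniform Lipschitz bound on $\eps\mapsto F(X,\eps)$ over the sublevel set $\{X:F(X,\eps)\le C_0/(1-\sigma)\}$ — on that set $\chi$, $\det J$ and $\tr J^\top J$ are bounded above and below, so $\partial_\eps F$ is bounded — which contradicts the fixed-size per-step energy drop that a positive accumulation point would require.
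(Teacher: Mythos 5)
Your proof is correct, but it takes a genuinely different route from the one behind this statement. The paper does not reprove Theorem~\ref{th:th}: it imports it from \cite{garanzha2021foldoverfree}, whose argument (mirrored in the proof of Theorem~\ref{th:th2} here) is built around an explicit adaptive update rule for the continuation parameter, a proof that the values $F(X^k,\eps^k)$ remain bounded under the descent/quasi-minimality dichotomy, and a contradiction step: if the feasible set were never reached, the parameter increments would be summable and a subsequence of barrier terms would blow up against the established bound. You instead freeze the parameter at a single sufficiently small $\bar\eps$, justified by two estimates --- the barrier lower bound $F(X,\eps)\ge \theta\min_k\vol(T_k)/\eps$ for any mesh violating \eqref{eq:discrete-admissible-set} (correct, via $\chi(D,\eps)\le\eps/2$ for $D\le 0$), and the $\eps$-uniform upper bound $\min_X F(X,\eps)\le F(X_\star,0)=C_0$ obtained from a feasible mesh $X_\star$ --- so that quasi-minimality is impossible while the mesh is tangled, and geometric descent must push $F$ below the infeasibility threshold $c/\bar\eps$ after finitely many solves. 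This is shorter and more elementary, and it does prove the reachability claim exactly as worded. What it does not buy is the constructive content of the original result: your choice of $\bar\eps$ depends on $C_0$, i.e.\ essentially on knowing the energy of some feasible mesh, which is precisely what is unavailable in practice, whereas the cited proof yields a computable schedule (the analogue of the update rule~\eqref{eq:t.update}) that needs no such knowledge, drives $\eps^k\to 0$ so the subproblems approach the unregularized functional, and transfers verbatim to the stiffening setting of Theorem~\ref{th:th2}. Two minor caveats: your constant $c=\theta\min_k\vol(T_k)$ degenerates when $\theta=0$ (a case the paper actually uses for conformal maps; there the volumetric term is absent and a collapsed cell with $J=0$ carries no barrier), so you should state $\theta>0$ as a hypothesis; and your closing remark about strictly decreasing schedules and a Lipschitz bound in $\eps$ is only a sketch, though nothing in the theorem as stated requires it.
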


In this theorem Garanzha et. al. not only proved that there exists a regularization parameter sequence $\{\varepsilon^k\}_{k=0}^K$ leading to $F(X^{K}, 0) < + \infty$,
but also provided an actual update rule for $\varepsilon^k$, refer to~\cite[Eq. (6)]{garanzha2021foldoverfree}.
Inspired by these results, we formulate a very similar theorem allowing us to build maps with bounded distortion in a finite number of steps.

We also provide a way to build an increasing sequence $\{t^k\}$ to be used in Alg.~\ref{alg:stiffening}--line 3:
denote by $f_i(X^{k+1})$ the distortion for the element $i$, and by $f^{k+1}_+$ the maximal distortion value over the mesh $X^{k+1}$,
$f_+^{k+1} := \max_i f_i(X^{k+1})$.
We  propose to use the following update rule for $t^{k+1}$:
\begin{equation}
	t^{k+1} := t^k + \sigma \frac{1 - t^k f_+^{k+1}}{f_+^{k+1}},
	\label{eq:t.update}
\end{equation}
where $0<\sigma<1$ is again the performance index of the minimization toolbox.
Clearly, formula~\eqref{eq:t.update} does not involve $t^*$.
Alg.~\ref{alg:stiffening} along with this update rule define our quasi-isometric stiffening (QIS) algorithm.

Now we are ready to formulate the stiffening theorem.
\begin{theorem}
\label{th:th2}
	Let us suppose that the feasible set of bounded distortion meshes~\eqref{eq:discrete-admissible-set2} is not empty, namely there exists a constant
	$0 < t^* < 1$ and a mesh $X^*$ satisfying $W(X^*,t^*) < + \infty$.
	We also suppose that
	for solving $X^{k+1} \leftarrow \argmin\limits_{X} W(X,t^k)$
	we have a minimization algorithm satisfying the following efficiency conditions for some $0<\sigma<1$:
	
	For each iteration $k$,
	\begin{itemize}
		\item \textbf{either} the essential descent condition holds
		\begin{equation}
			\label{teorem2.cond6}
			W(X^{k+1}, t^k) \leq (1-\sigma) W(X^{k}, t^k),
		\end{equation}
		\item \textbf{or} the vector $X^{k}$ satisfies the quasi-minimality condition:
		\begin{equation}
			\label{teorem2.cond6-2}
			\min\limits_{X} W(X, t^k) > (1-\sigma)W(X^{k}, t^k).
		\end{equation}
	\end{itemize}
	Then the feasible set~\eqref{eq:discrete-admissible-set2} is reachable by solving a finite number of minimization problems in $X$ with $t^k$ fixed for each problem.
	In other words, under a proper choice of the continuation parameter sequence $\{t^k\}_{k=0}^K$, we obtain $W(X^{K}, t^K) < + \infty$.
\end{theorem}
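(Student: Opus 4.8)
The plan is to transpose, almost line for line, the proof of Theorem~\ref{th:th}: the $\eps$-continuation that forces a mesh into the untangled set~\eqref{eq:discrete-admissible-set} is replaced by the $t$-continuation~\eqref{eq:t.update} that forces it into the bounded-distortion set~\eqref{eq:discrete-admissible-set2}, and the role of ``smaller $\eps$ is a harder functional'' is played by ``larger $t$ is a harder functional''. The technical heart will be a \emph{continuation lemma}: the prescribed update~\eqref{eq:t.update} inflates the energy of the current mesh by at most the factor $1/(1-\sigma)$ --- precisely the slack that the essential descent condition~\eqref{teorem2.cond6} recovers.

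I would start with the unconditional bookkeeping, by induction on $k$ from $t^0=0$ and the untangled $X^0$ (so $W(X^0,t^0)<+\infty$). Assuming $W(X^k,t^k)<+\infty$, the minimization routine returns $X^{k+1}$ with $W(X^{k+1},t^k)\le W(X^k,t^k)<+\infty$ in both branches of the efficiency hypothesis (the routine never increasing the energy), so $t^k f_i(X^{k+1})<1$ for every element $i$, in particular $t^k f_+^{k+1}<1$, so~\eqref{eq:t.update} is well defined and yields $t^{k+1}>t^k$. The crucial pointwise estimate is $(t^{k+1}-t^k)\,f_i(X^{k+1})\le\sigma\bigl(1-t^k f_i(X^{k+1})\bigr)$ for every $i$; substituting~\eqref{eq:t.update} it reduces, after cancelling the $t^k f_i(X^{k+1})$ terms on both sides, to $f_i(X^{k+1})\le f_+^{k+1}$, which is the definition of $f_+^{k+1}$. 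From it: $t^{k+1}f_i(X^{k+1})=\sigma+(1-\sigma)t^k f_i(X^{k+1})<1$, hence $W(X^{k+1},t^{k+1})<+\infty$, which already yields the last assertion of the statement and closes the induction; and, dividing by $1-t^{k+1}f_i(X^{k+1})$ and summing over $i$ with the weights $\vol(T_i)$, the continuation lemma $W(X^{k+1},t^{k+1})\le\frac{1}{1-\sigma}\,W(X^{k+1},t^k)$.

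Next I would derive a \emph{uniform} energy bound along the run, under the contradiction hypothesis $t^k<t^*$ for all $k$ (to be discharged below). Combining the continuation lemma with the essential descent condition~\eqref{teorem2.cond6} gives $W(X^{k+1},t^{k+1})\le W(X^k,t^k)$, i.e. the energy does not increase at a descent step. At a quasi-minimality step~\eqref{teorem2.cond6-2} I use the non-emptiness hypothesis: $\min_X W(X,t^k)\le W(X^*,t^k)\le W(X^*,t^*)=:C^*<+\infty$ by monotonicity of $W(X^*,\cdot)$ together with $t^k<t^*$, hence $W(X^k,t^k)<C^*/(1-\sigma)$, and then $W(X^{k+1},t^{k+1})<C^*/(1-\sigma)^2$. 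Since between two quasi-minimality steps the energy only decreases, it follows that $W(X^k,t^k)\le C_0:=\max\{W(X^1,t^1),\,C^*/(1-\sigma)^2\}$ for every $k$. Dropping all but the maximal-distortion term in $W(X^k,t^k)$ and writing $v_{\min}:=\min_i\vol(T_i)>0$, this gives $\frac{f_+^k}{1-t^k f_+^k}\le C_0/v_{\min}=:C_1$, i.e. $f_+^k\le C_1/(1+C_1 t^k)$: the worst element distortion stays bounded throughout.

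Finally I would close by contradiction. The sequence $\{t^k\}$ is strictly increasing and, under the hypothesis, bounded by $t^*<1$, so $t^k\uparrow t^\infty\in(0,t^*]$ and $t^{k+1}-t^k\to0$. Rewriting~\eqref{eq:t.update} as $t^{k+1}-t^k=\sigma\bigl(1/f_+^{k+1}-t^k\bigr)$ shows $f_+^{k+1}\to1/t^\infty$; passing to the limit in $f_+^{k+1}\le C_1/(1+C_1 t^{k+1})$ then yields $1/t^\infty\le C_1/(1+C_1 t^\infty)$, i.e. $1\le0$ --- absurd. Hence $t^K\ge t^*$ for some finite $K$, and then the invariant $t^K f_+^K<1$ gives $f_+^K<1/t^K\le1/t^*$, so $X^K$ satisfies~\eqref{eq:discrete-admissible-set2} and $W(X^K,t^*)\le W(X^K,t^K)<+\infty$, the continuation having used only finitely many minimizations. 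The main obstacle --- the counterpart of the hard step in Theorem~\ref{th:th} --- is the uniform energy bound: one must rule out that a long run of quasi-minimality steps, each allowed to inflate the energy by $1/(1-\sigma)$, lets $W(X^k,t^k)$ and hence $f_+^k$ escape to infinity while $t^k$ stalls below $t^*$; this is exactly where both the continuation lemma and the comparison with the feasible mesh $X^*$ are indispensable.
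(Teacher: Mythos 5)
Your proposal is correct and follows essentially the same route as the paper's proof: the key inequality $W(X^{k+1},t^{k+1})\le\frac1{1-\sigma}W(X^{k+1},t^k)$ induced by the update rule~\eqref{eq:t.update} (you derive it element-wise, the paper via monotonicity of $(1-t_1\psi)/(1-t_2\psi)$ in $\psi$ — the same computation), the same two-case analysis (descent vs.\ quasi-minimality compared against $W(X^*,t^*)$) yielding a uniform bound on $W(X^k,t^k)$, and the same final contradiction that if $t^k<t^*$ forever the update rule forces $1-t^kf_+^{k+1}\to0$, which the energy bound forbids (your quantitative bound $f_+^k\le C_1/(1+C_1t^k)$ is just a sharper phrasing of the paper's subsequence argument). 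The only presentational difference is that you state explicitly the benign assumption, used implicitly by the paper in the quasi-minimality case, that the solver never increases the energy.
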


\begin{proof}
	The main idea is very simple: update rule~\eqref{eq:t.update} defines an increasing sequence $\{t^k\}_{k=0}^\infty$.
	We will show that the corresponding sequence $\{W(X^k, t^k)\}_{k=0}^\infty$ is bounded from above.
	Then we can prove that the admissible set~\eqref{eq:discrete-admissible-set2} is reachable in a finite number of steps by a simple \emph{reductio ad absurdum} argument.
	More precisely, if the feasible set is not reachable, then $W(X^k, t^k)$ must grow without bounds, what contradicts the boundedness.
	
	To prove that $\{W(X^k, t^k)\}_{k=0}^\infty$ is bounded from above, we analyse the behavior at some iteration $k$.
	First of all, if we couple update rule~\eqref{eq:t.update} with the fact that for any $t_2 > t_1$ the ratio $ \frac{1 - t_1 \psi}{1 - t_2 \psi}$ is increasing function of argument $\psi$,
	we can see that the following inequality holds:
	\begin{equation}
		\label{f_bounded}
		(1 - \sigma) W(X^{k+1}, t^{k+1}) \leq W(X^{k+1},t^{k}).
	\end{equation}
	More precisely,
	\[
	W(X^{k+1}, t^{k+1}) = \sum_i \frac{1 - t^k f_i(X^{k+1})}{1 - t^{k+1} f_i(X^{k+1})} \frac{f_i(X^{k+1})}{1 - t^k f_i(X^{k+1})} \leq
	\]
	\[
	\sum_i \frac{1 - t^k f_+(X^{k+1})}{1 - t^{k+1} f_+(X^{k+1})} \frac{f_i(X^{k+1})}{1 - t^k f_i(X^{k+1})} = \frac1{1 - \sigma} W(X^{k+1},t^{k}).
	\]
	
	Then, at each iteration $k$, either condition~\eqref{teorem2.cond6} or condition~\eqref{teorem2.cond6-2} must be satisfied. Let us consider both cases.
	\begin{itemize}[wide=0pt,itemindent=2em]
		\item[\textbf{Cond.~\eqref{teorem2.cond6} holds:}] in this case function $W$ actually decreases. Eq.~\eqref{f_bounded} combined with Cond.~\eqref{teorem2.cond6} directly imply that 
		$$
		W(X^{k+1}, t^{k+1}) \leq W(X^k,t^{k}).
		$$
		\item[\textbf{Cond.~\eqref{teorem2.cond6-2} holds:}] under assumption that $t^{k+1} < t^*$, by combining Eq.~\eqref{f_bounded} and Cond.~\eqref{teorem2.cond6-2}, we obtain:
		\[
		W(X^{k+1}, t^{k+1}) \leq \frac1{1-\sigma} W(X^{k+1}, t^{k}) \leq
		\]
		\[
		\frac1{(1-\sigma)^2} \min_X W(X, t^{k}) \leq \frac1{(1-\sigma)^2}  W(X^*, t^{k}) \leq 
		\]
		\[
		\leq \frac1{(1-\sigma)^2}  W(X^*, t^*)
		\]
	\end{itemize}
	To sum up, in the first case the value of function $W$ decreases, and in the second case it does not exceed a global bound, therefore, the sequence $\{W(X^k, t^k)\}_{k=0}^\infty$ is bounded from above.
	
	Now we are ready for the main result. Suppose that for an infinite sequence $\{X^k, t^k\}_{k=0}^{\infty}$ we never reach the given quality bound $t^*$.
	In other words, we have $t^k < t^*, \ k=0,\dots\infty$.
	Then the following inequality holds (apply formula~\eqref{eq:t.update} $k$ times):
	\[
	t^k - t^0 = \sigma \sum\limits_{j=0}^{k-1} \left(\frac1{f_+^{j+1}} - t^j\right) \leq 1.
	\]
	In the infinite sum each term is strictly positive, hence we can extract a subsequence
	\[
	1 - t^{j_s} f_+^{j_s+1} \to 0^+.
	\]
	This fact obviously contradicts boundedness of the functional, allowing us to conclude our proof.
\end{proof}

\begin{remark}
	An important corollary of Th.~\ref{th:th2} is that, provided that the admissible set~\eqref{eq:discrete-admissible-set2} is not empty,
	there exists an iteration $K<\infty$ such that the global minimum of the function $W(X, t^K)$ belongs to the admissible set.
	The proof is rather obvious: suppose we have an idealized minimizer such that $X^{k+1} = \argmin\limits_X W(X, t^k)$.
	This minimizer always satisfies the conditions of Th.~\ref{th:th2}, therefore it can reach the distortion bound in a finite number of steps.
\end{remark}

In practice, just like in \cite{garanzha2021foldoverfree}, the global estimate $\sigma$ is not known in advance.
Garanzha et al. suggest to compute the local descent coefficient for each minimization step, and so do we.
Instead of $\sigma$ in the update rule \eqref{eq:t.update}, we use $\sigma^k$ defined as follows:
\[
\sigma^k := \max \left(1 -  \frac{W(X^{k+1}, t^{k})}{ W(X^{k}, t^{k})}, \sigma_0\right),
\]
where $\sigma_0 > 0$ is a constant (we chose $\sigma_0=1/10$ in all our experiments).

\section{Results and discussion}
\label{sec:results}

Recall that our contribution is two-fold:
\textbf{(a)} we provide an algorithm to compute a deformation where all cells have a deformation quality above given threshold,
and \textbf{(b)} we guarantee local invertibility of a global parameterization for quad generation.
This section is organized in two subsections accordingly.

\subsection{Quality optimization results}

In order to check the ability of the variational method \eqref{eq:discrete-quality} to improve worst quality elements in the deformed meshes, we have performed a series of tests.

\begin{figure}[!t]
	\centering
	\includegraphics[width=.6\linewidth]{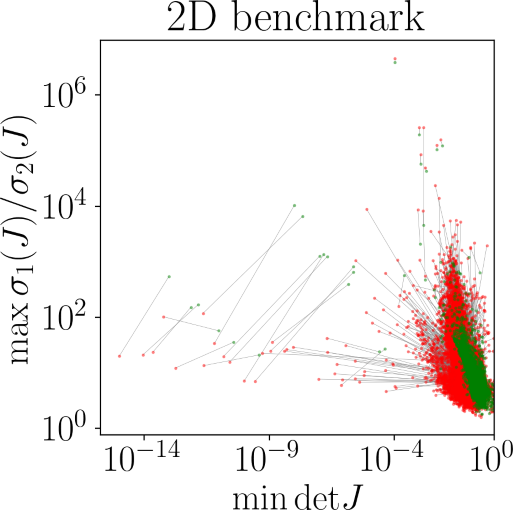}
	
	\vspace{5mm}
	
	\includegraphics[width=.6\linewidth]{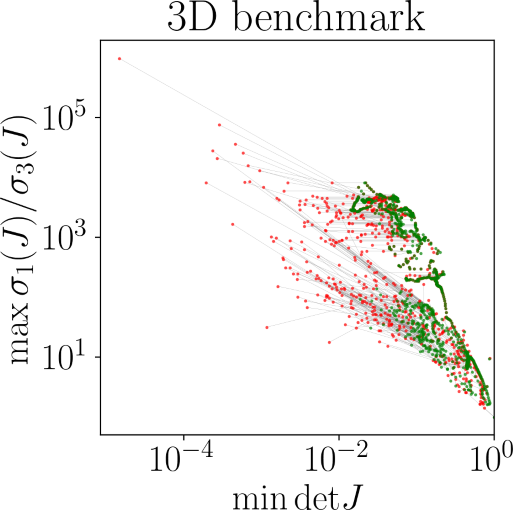}
	\caption{Quality plot of the resulting locally injective maps for every challenge from the database provided by~\cite{Du2020}.
		Each dot corresponds to a quality of the corresponding map reduced to two numbers: the maximum stretch and the minimum scale.
		Untangling results~\cite{garanzha2021foldoverfree} are shown in red, whereas our results are shown in green (we took $\theta=\frac12$).
		The gray lines show the correspondence in the results.
		\textbf{Top:} mapping quality on the 2D dataset (10743 challenges). \textbf{Bottom:} mapping quality on the 3D dataset (904 challenges).
	}
	\label{fig:scatter}
\end{figure}

\paragraph*{Constrained boundary injective mapping}
Along with their paper~\cite{Du2020}, Du et al. have published a benchmark
with a huge number of 2D and 3D constrained boundary injective mapping challenges.
We have computed initial injective maps with~\cite{garanzha2021foldoverfree}, and optimized the quality of the maps.
Fig.~\ref{fig:scatter} shows the scatter plot for every mapping challenge from the database.
For each problem the plot has two dots:
the red one corresponds to the input injective map reduced to two numbers, namely, the maximum stretch and the minimum scale.
The green dot is the quality of the map after our optimization.
Gray segments show the correspondence between the dots.
As can be seen from the plot, in the vast majority of cases our optimization improves both the maximum stretch and minimum scale quality measures
despite the fact that the boundary is locked.


\begin{figure*}[!t]
	\begin{minipage}[!t]{.3\linewidth}\vspace{0pt}
		\centering
		\includegraphics[width=\linewidth]{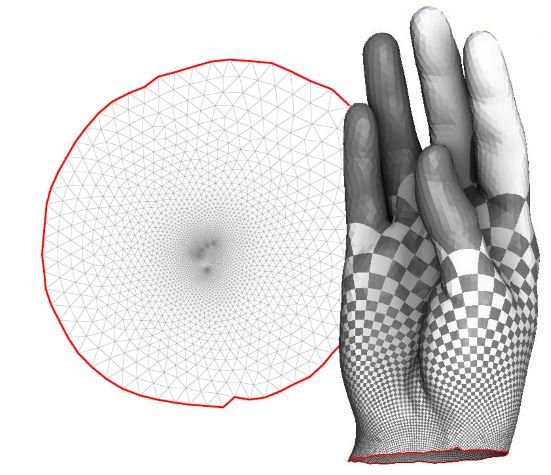}
		$\max \frac{\sigma_1}{\sigma_2} \approx 1.61$
		\includegraphics[width=.8\linewidth]{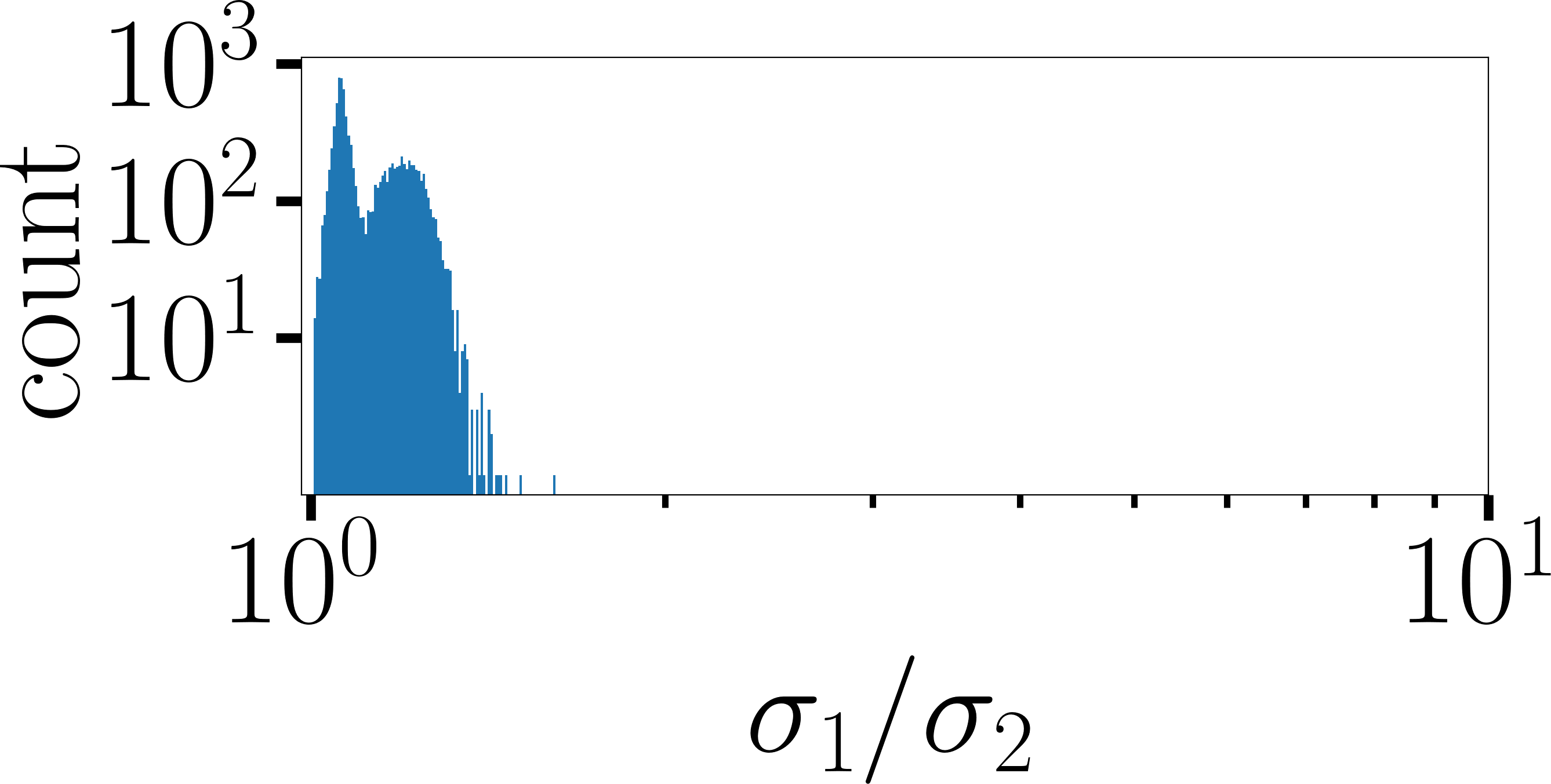}\\
		\textbf{(a)}
	\end{minipage}
	\begin{minipage}[!t]{.3\linewidth}\vspace{0pt}
		\centering
		\includegraphics[width=\linewidth]{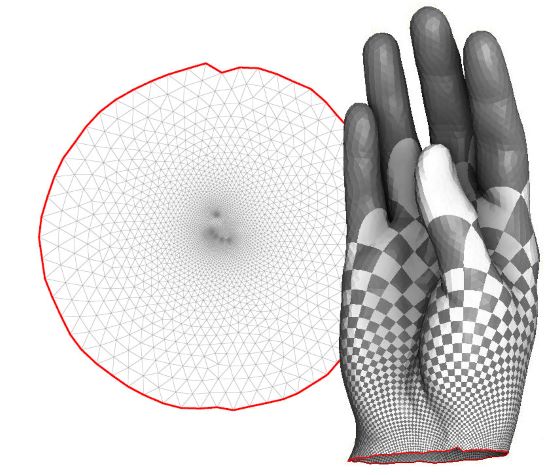}
		$\max \frac{\sigma_1}{\sigma_2} \approx 2.0$
		\includegraphics[width=.8\linewidth]{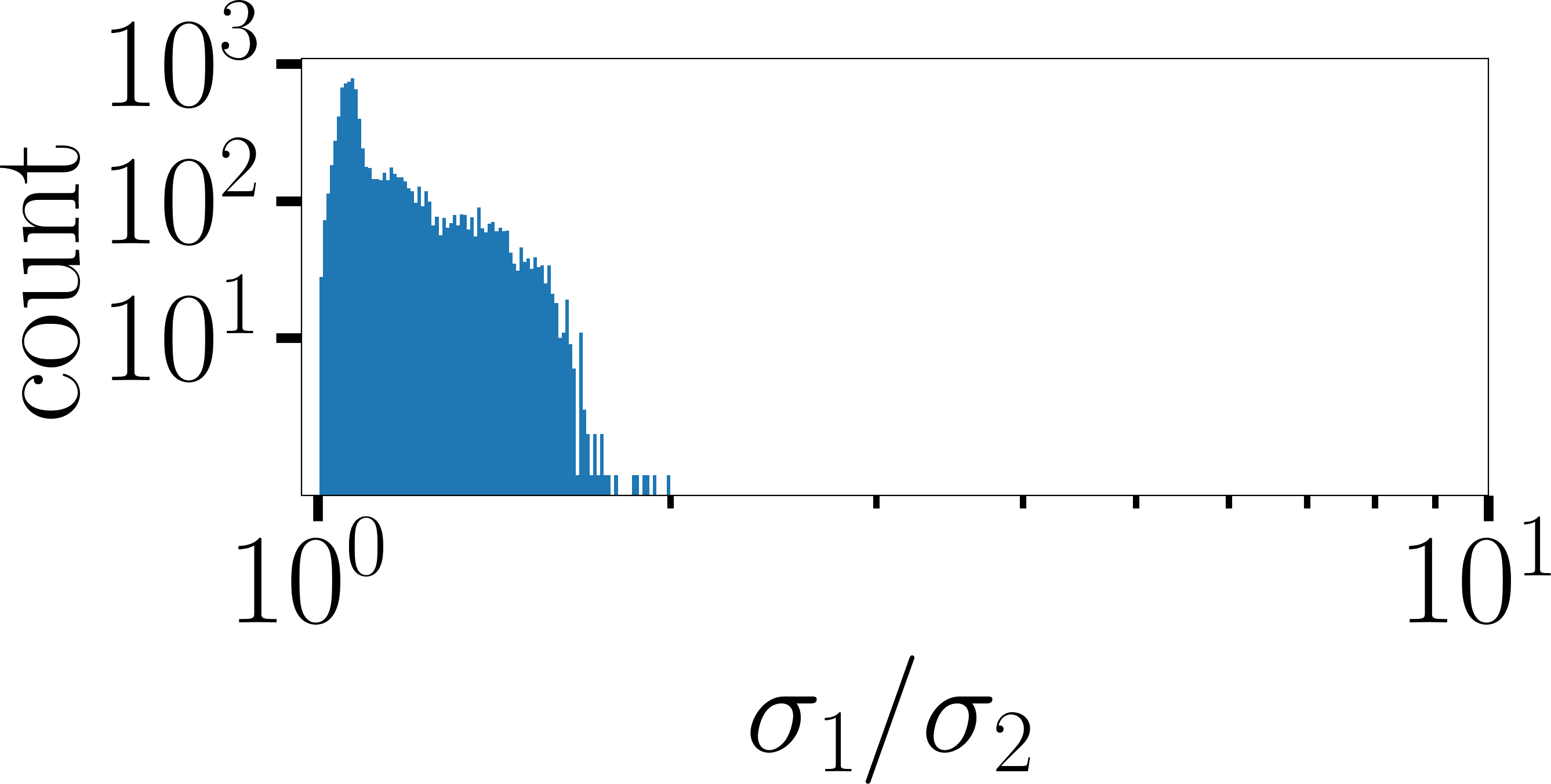}\\
		\textbf{(c)}
	\end{minipage}
	\begin{minipage}[!t]{.3\linewidth}\vspace{0pt}
		\centering
		\includegraphics[width=\linewidth]{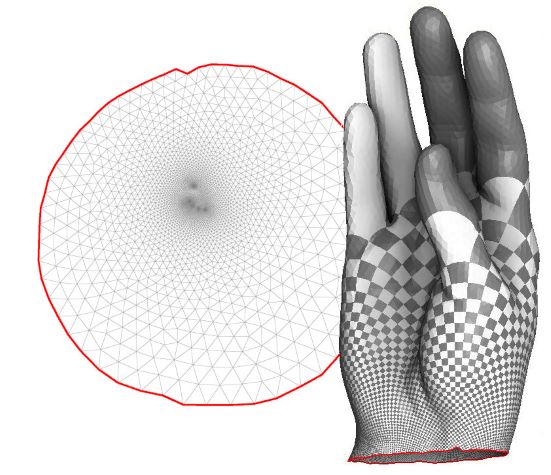}
		$\max \frac{\sigma_1}{\sigma_2} \approx 1.33$
		\includegraphics[width=.8\linewidth]{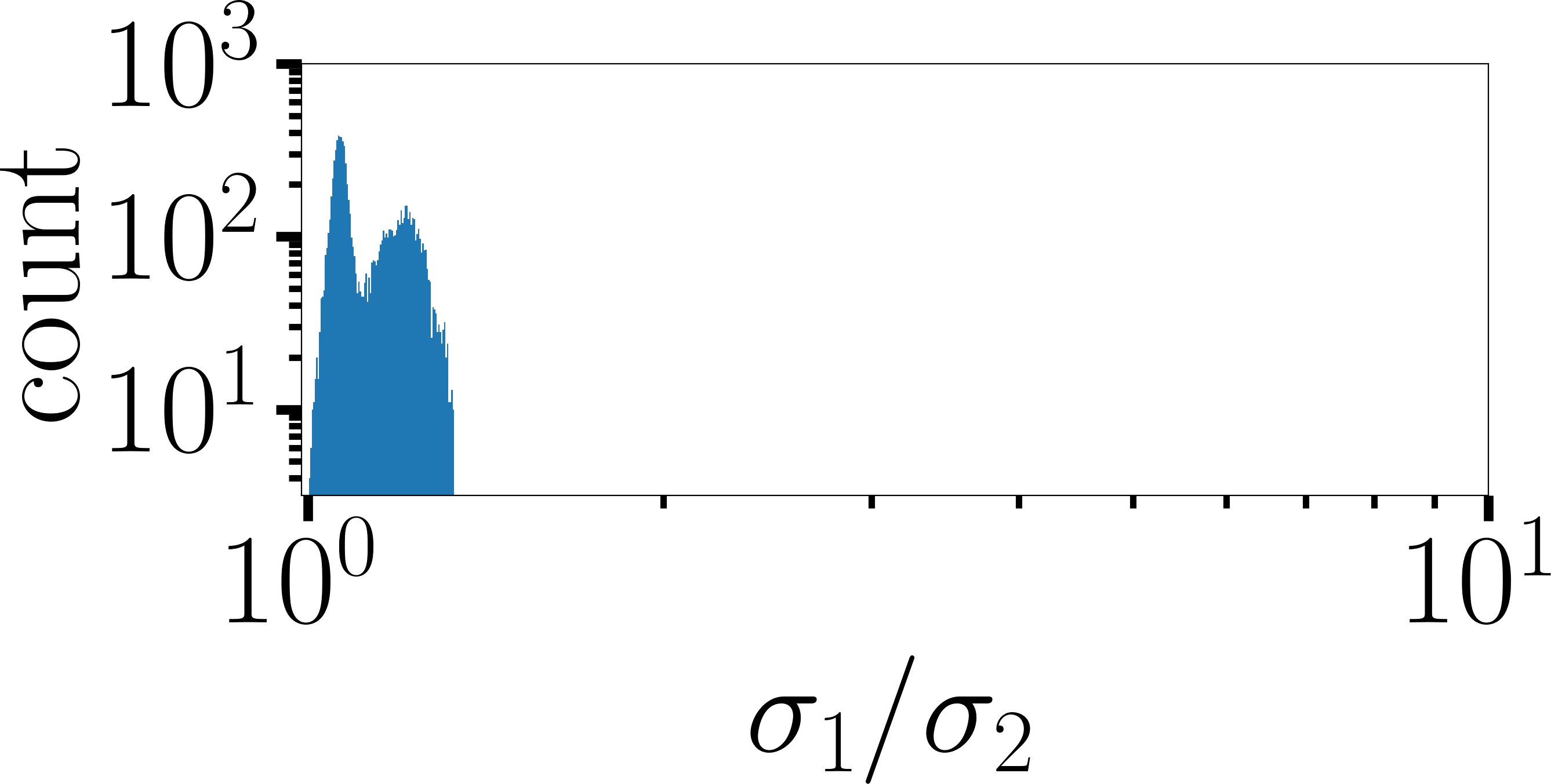}\\
		\textbf{(e)}
	\end{minipage}
	
	\begin{minipage}[!t]{.3\linewidth}\vspace{0pt}
		\centering
		\includegraphics[width=\linewidth]{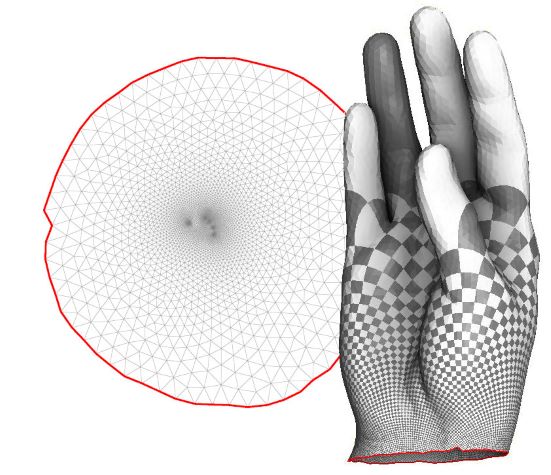}
		$\max \frac{\sigma_1}{\sigma_2} \approx 1.63$
		\includegraphics[width=.8\linewidth]{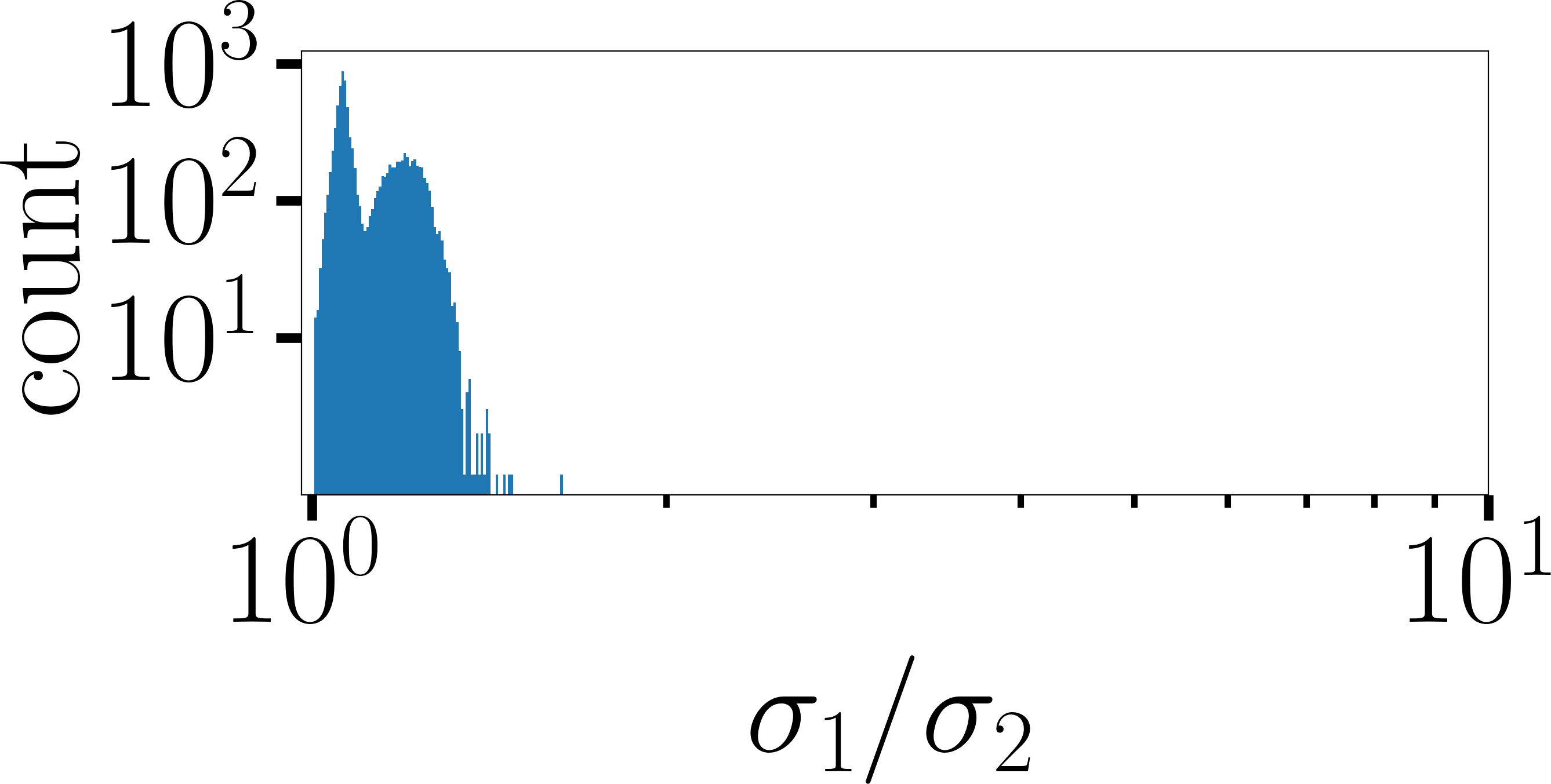}\\
		\textbf{(b)}
	\end{minipage}
	\begin{minipage}[!t]{.3\linewidth}\vspace{0pt}
		\centering
		\includegraphics[width=\linewidth]{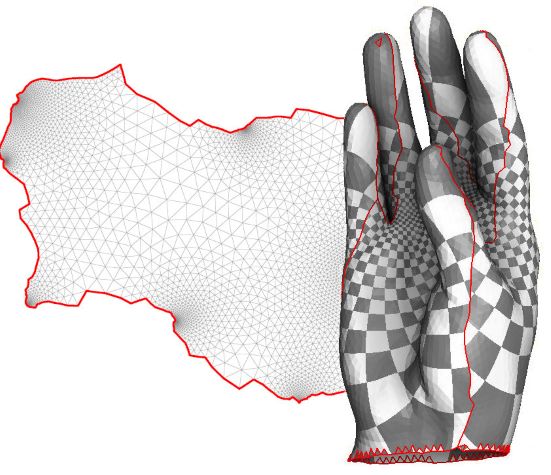}
		$\max \frac{\sigma_1}{\sigma_2} \approx 1.87$
		\includegraphics[width=.8\linewidth]{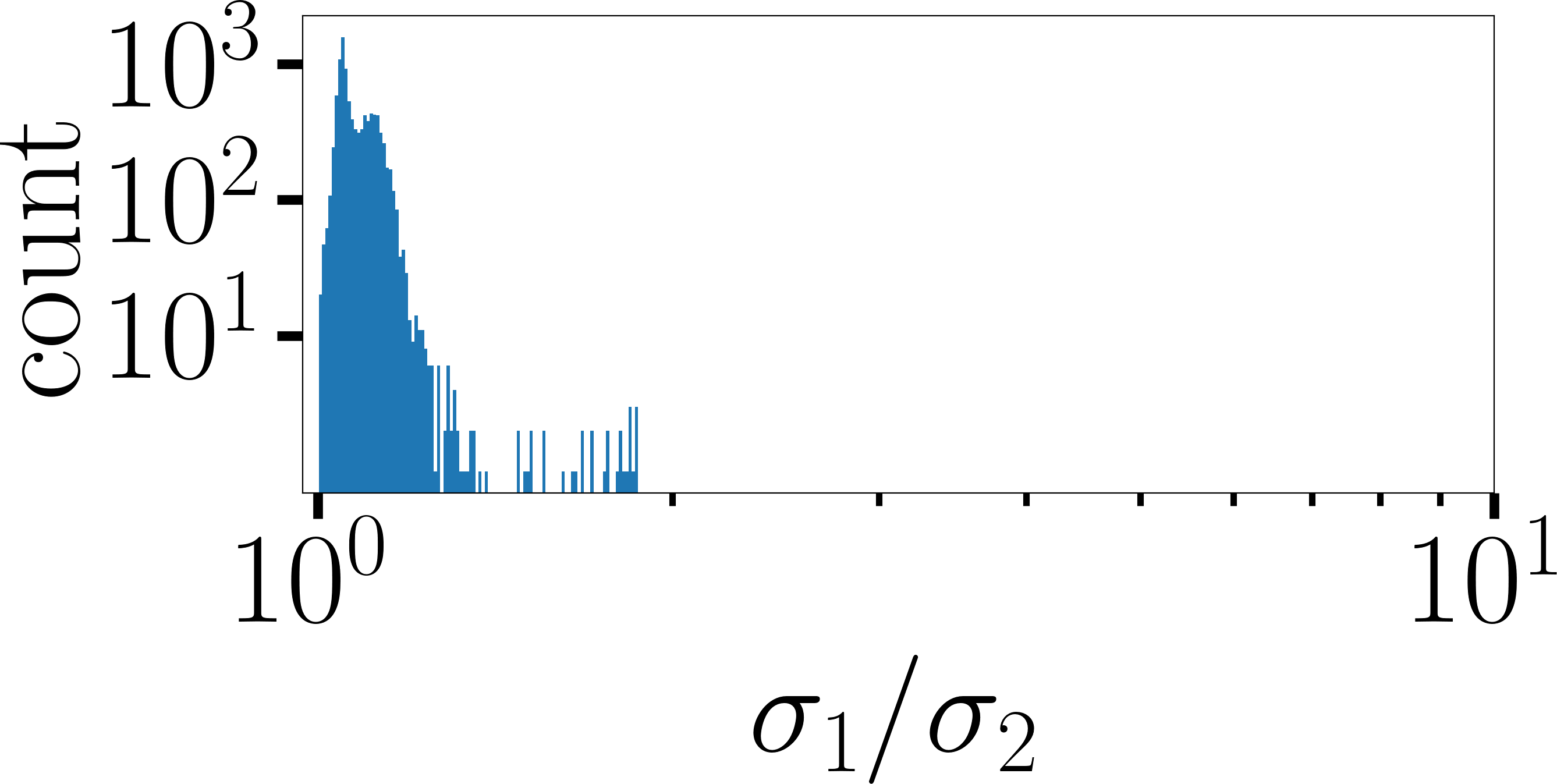}\\
		\textbf{(d)}
	\end{minipage}
	\begin{minipage}[!t]{.3\linewidth}\vspace{0pt}
		\centering
		\includegraphics[width=\linewidth]{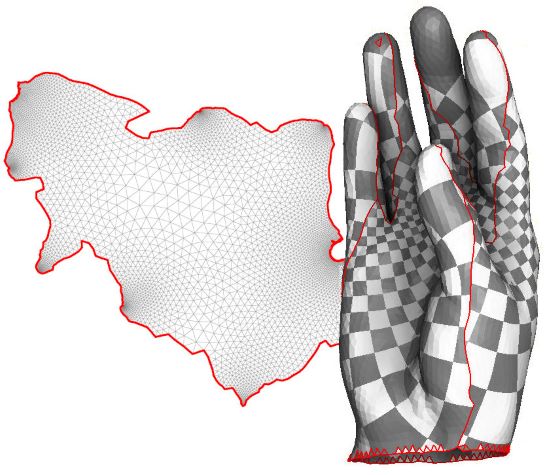}
		$\max \frac{\sigma_1}{\sigma_2} \approx 1.28$
		\includegraphics[width=.8\linewidth]{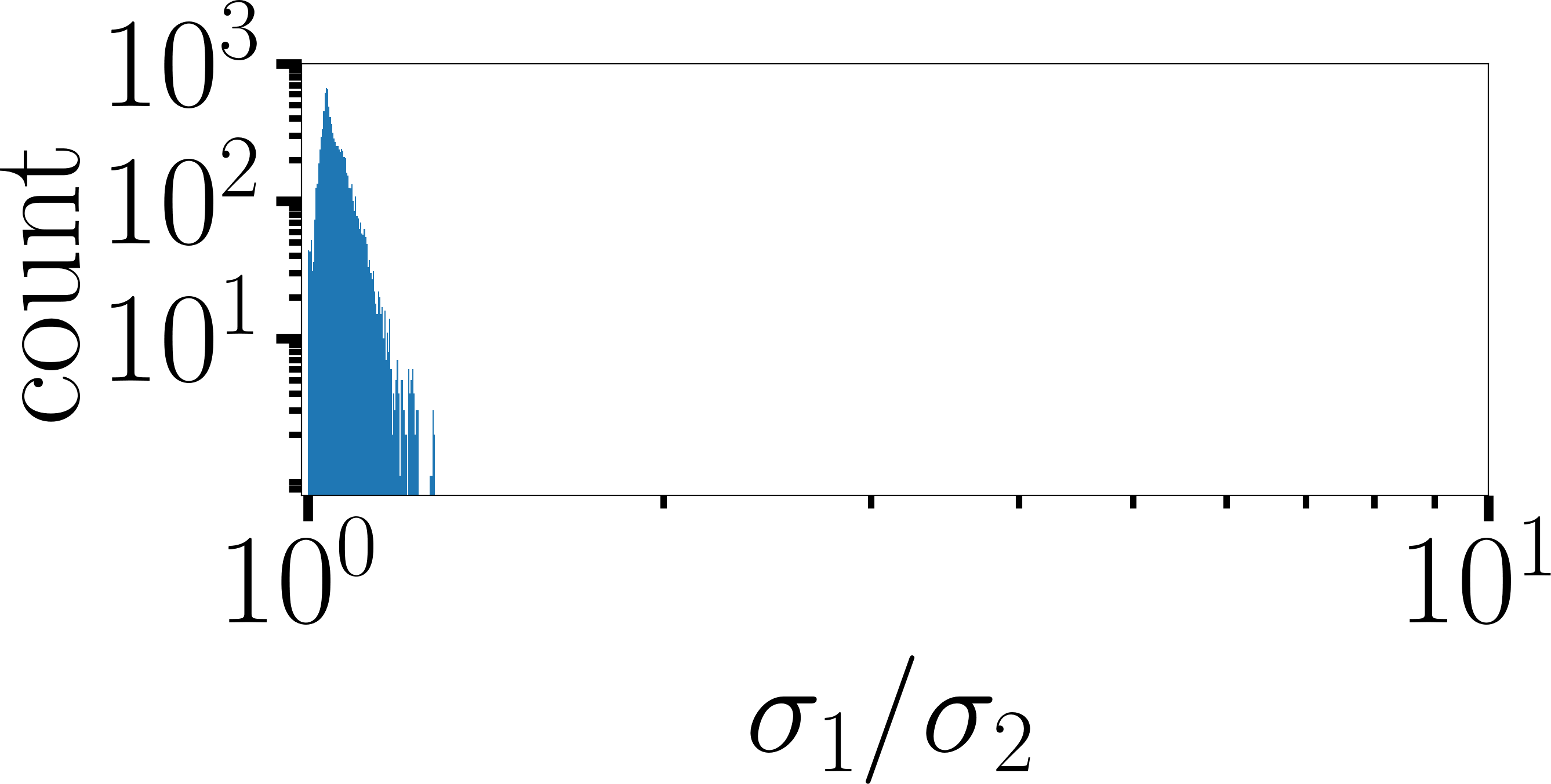}\\
		\textbf{(f)}
	\end{minipage}
	\caption{Discrete conformal mapping test. \textbf{(a):} LSCM~\cite{levy2002}, \textbf{(b):} BFF~\cite{BFF}, \textbf{(c):} untangling~\cite{garanzha2021foldoverfree},
		\textbf{(d):} CEPS~\cite{CEPS}, \textbf{(e):} QIS applied to the untangling result (see \textbf{(c)}), \textbf{(f):} QIS applied to the surface cut by CEPS (see \textbf{(e)}).
	}
	\label{fig:conformal}
\end{figure*}

\paragraph*{Conformal mapping} While in theory conformal maps are beyond the scope of QIS algorithm, in practice it was found to be highly successful.
As was suggested in~\cite{garanzha2021foldoverfree}, approximation of conformal mappings with quasi-isometric ones is indeed a good and stable numerical solution.
We have computed six discrete conformal maps for a triangular mesh of a hand (refer to Fig.\ref{fig:conformal}).
To compare quality of the maps, we use the condition number of the Jacobian matrix $\frac{\sigma_1(J)}{\sigma_2(J)}$, where $\sigma_1$ and $\sigma_2$ stand for the (ordered) singular values of $J$.
First we have computed four maps with state-of-the art methods:
\begin{itemize}[wide=0pt,itemindent=2em]
	\item[\textbf{(a)}] The easiest one to compute is the least squares conformal map \cite{levy2002}.
	This well-known method requires solving one linear system with a symmetric positive definite matrix.
	The idea behind LSCM is to compute a $P^1$ finite element approximation of the Cauchy-Riemann conditions over all triangles of the mesh.
	\item[\textbf{(b)}] Second map was obtained by applying the boundary first flattening method \cite{BFF} with boundary log-scale factors set to zero.
	This choice of boundary condition leads to the conformal map with minimal area distortion \cite[App. E]{CETM}.
	\item[\textbf{(c)}] Third map is the result of the elliptic smoother~\cite{garanzha2021foldoverfree},
	we have obtained it by solving Eq.~\eqref{eq:quality-integral} with $\theta = 0$, $t=0$.
	\item[\textbf{(d)}] Fourth map was obtained with CEPS~\cite{CEPS}. Note that this method can alter the input triangulation, and the way it handles surfaces with boundary is
	to glue together two copies of the input mesh along the boundary, introduce cone singularities and cut the mesh.
	The seams are shown in red in Fig.~\ref{fig:conformal}-\textbf{(e)}.
\end{itemize}

Finally, starting from the untangling \textbf{(c)} and CEPS \textbf{(d)}, we have computed optimal discrete conformal maps by solving Eq.~\eqref{eq:quality-integral} with $\theta = 0$ while maximizing for $t$.
The results are shown in Fig.~\ref{fig:conformal}-\textbf{(e)} and Fig.~\ref{fig:conformal}-\textbf{(f)}.
Log-log element quality histograms show that we improve considerably the quality of input maps.
It is easy to see that the maximum condition number of the Jacobian matrix is consistently better in our maps (1.33 and 1.28, respectively).

We were quite surprised by the fact that QIS algorithm outperforms highly elaborated algorithms for conformal parameterizations, since this problem is beyond its derivation principles.
One possible explanation, which is referred to in \cite{garanzha2021foldoverfree}, is the hypothesis that approximation of conformal mappings by quasi-isometric mappings with growing quasi-isometry constants is a good theoretical and engineering way for conformal parameterizations.


\paragraph*{Comparison with ABCD}
Fig.~\ref{fig:teaser} shows a comparison with Adaptive Block Coordinate Descent for Distortion Optimization~\cite{Naitsat2019}.
We took a tetrahedral mesh of a combination wrench, and we imposed positional constraints on the vertices located on both ends of the wrench.
As before, we optimize the quality starting from the untangling result (Fig.~\ref{fig:teaser}-\textbf{(c)}).

While the measures of deformation differ (ABCD uses ARAP-like energy), most of our input map elements have comparable quality to ABCD.
Note however that ABCD produces several elements of very bad quality, and untangling gains an order of magnitude over the worst element quality:
the maximum stretch $\max \sigma_1/\sigma_3\approx 28$, the minimum scale $\min \det J \approx 0.06$ for ABCD
and $\max \sigma_1/\sigma_3\approx 2.35$, $\min \det J \approx 0.69$ for the untangling.
In our turn, we improve the quality even further, our quality measures are $\max \sigma_1/\sigma_3\approx 1.22$ and $\min \det J \approx 0.86$.


\paragraph*{Comparison with Injective Deformation Processing}

\begin{figure}[!p]
	\begin{minipage}[!t]{.45\linewidth}\vspace{0pt}
		\centering
		\includegraphics[width=\linewidth]{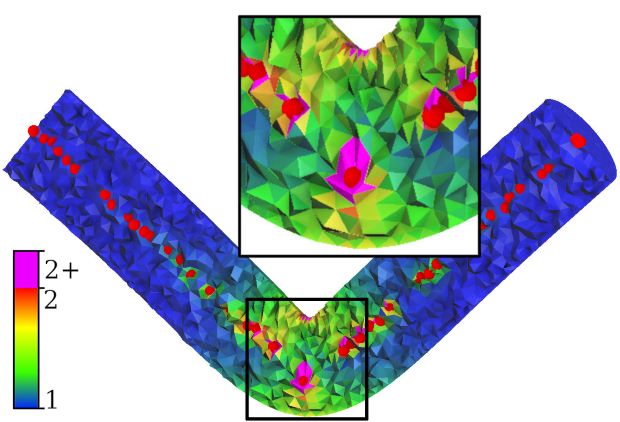}
		\includegraphics[width=\linewidth]{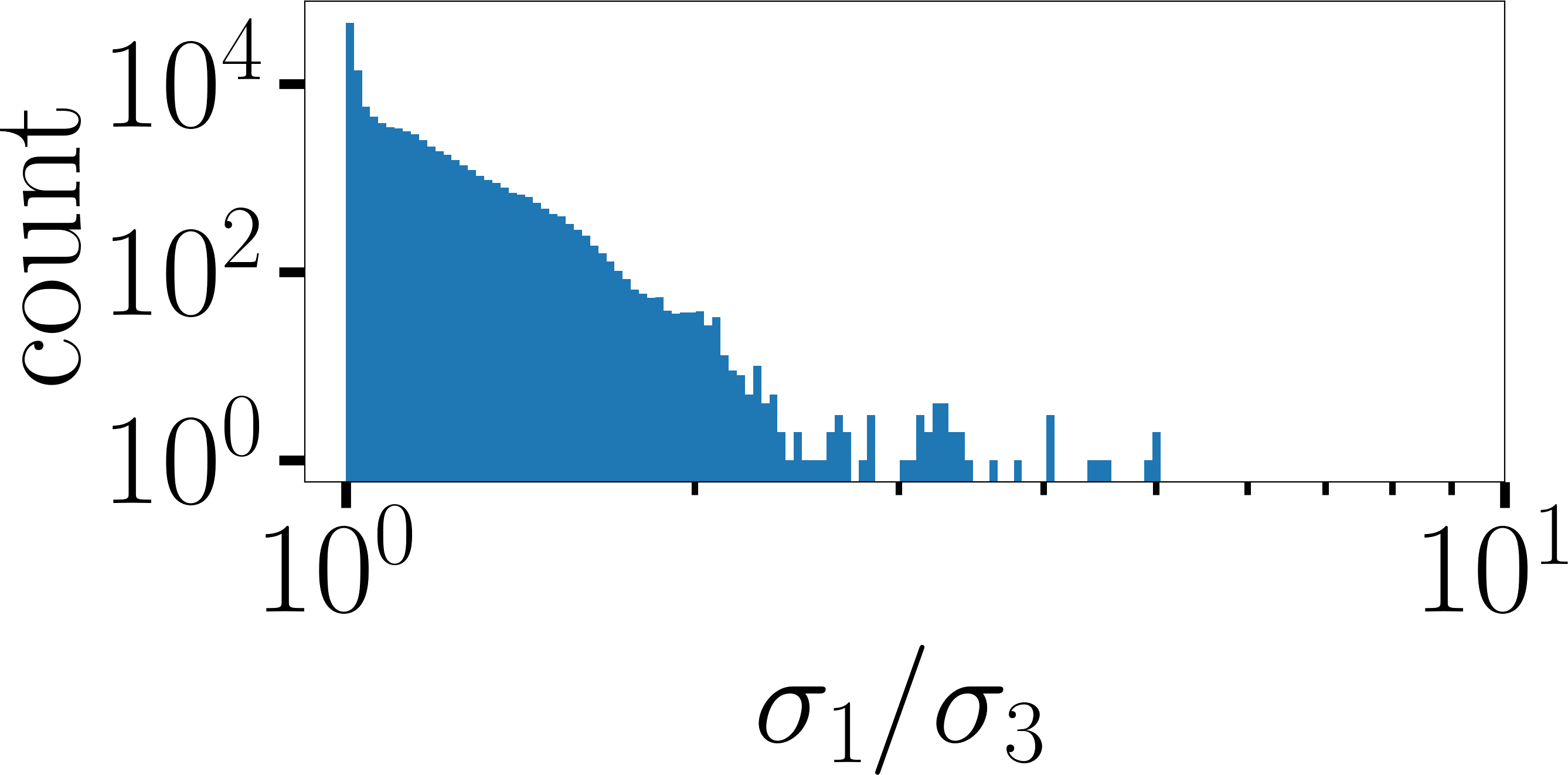}
		\vspace{2mm}
		\includegraphics[width=\linewidth]{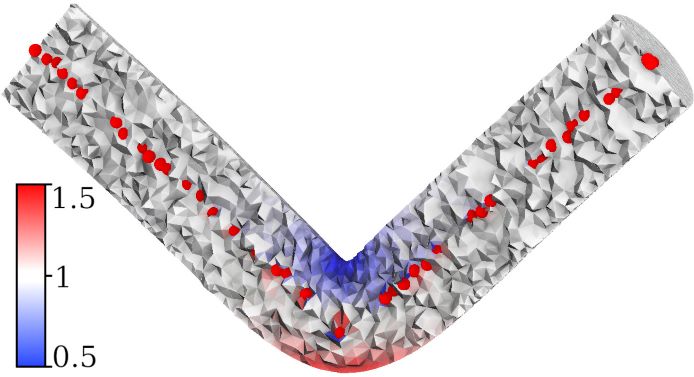}
		\includegraphics[width=\linewidth]{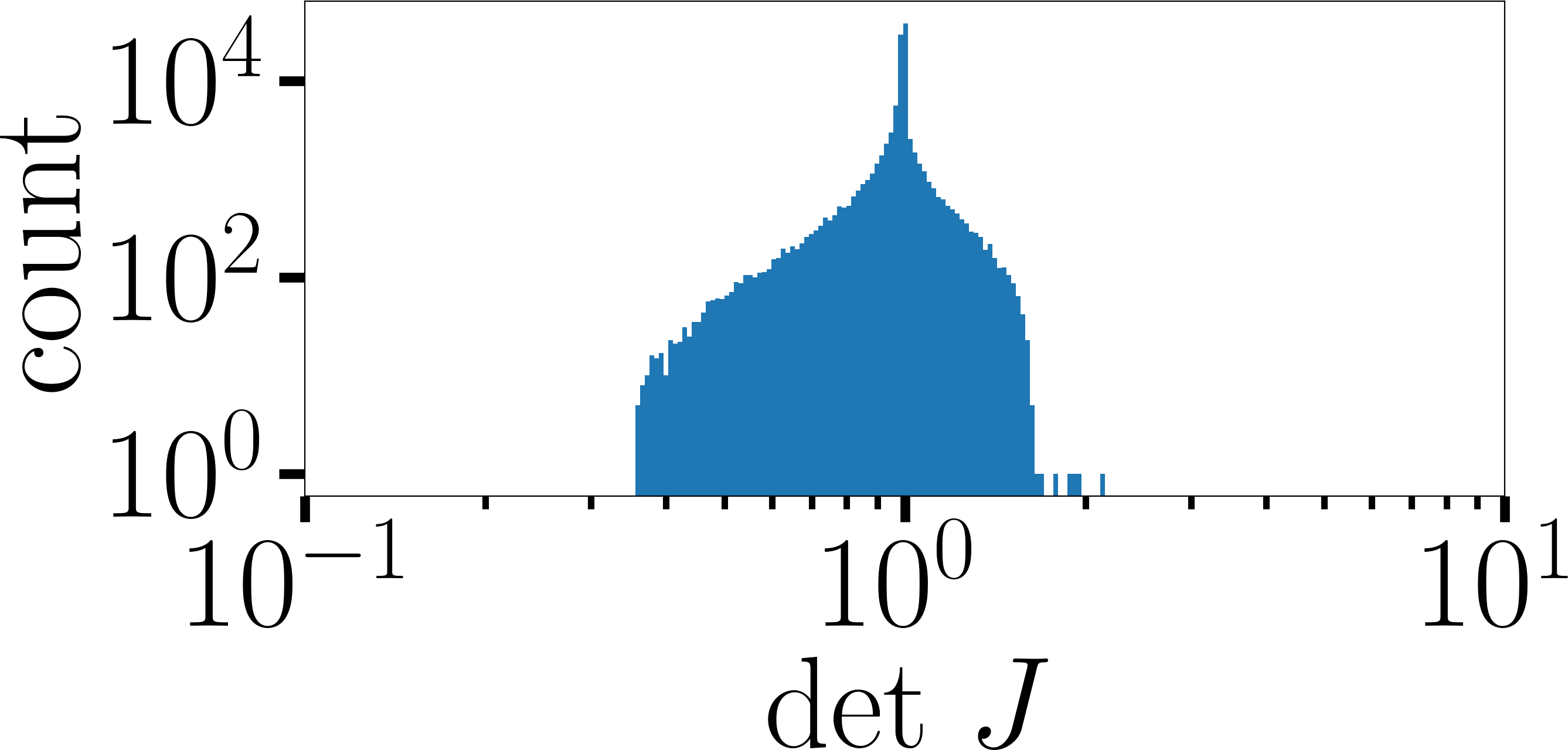}
		\textbf{(a)}
	\end{minipage}
	\begin{minipage}[!t]{.45\linewidth}\vspace{0pt}
		\centering
		\includegraphics[width=\linewidth]{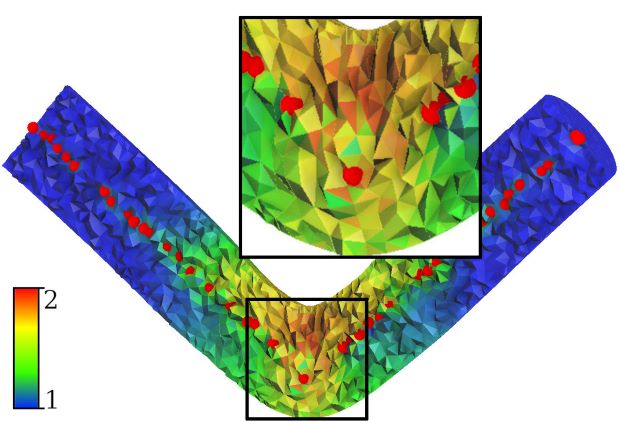}
		\includegraphics[width=\linewidth]{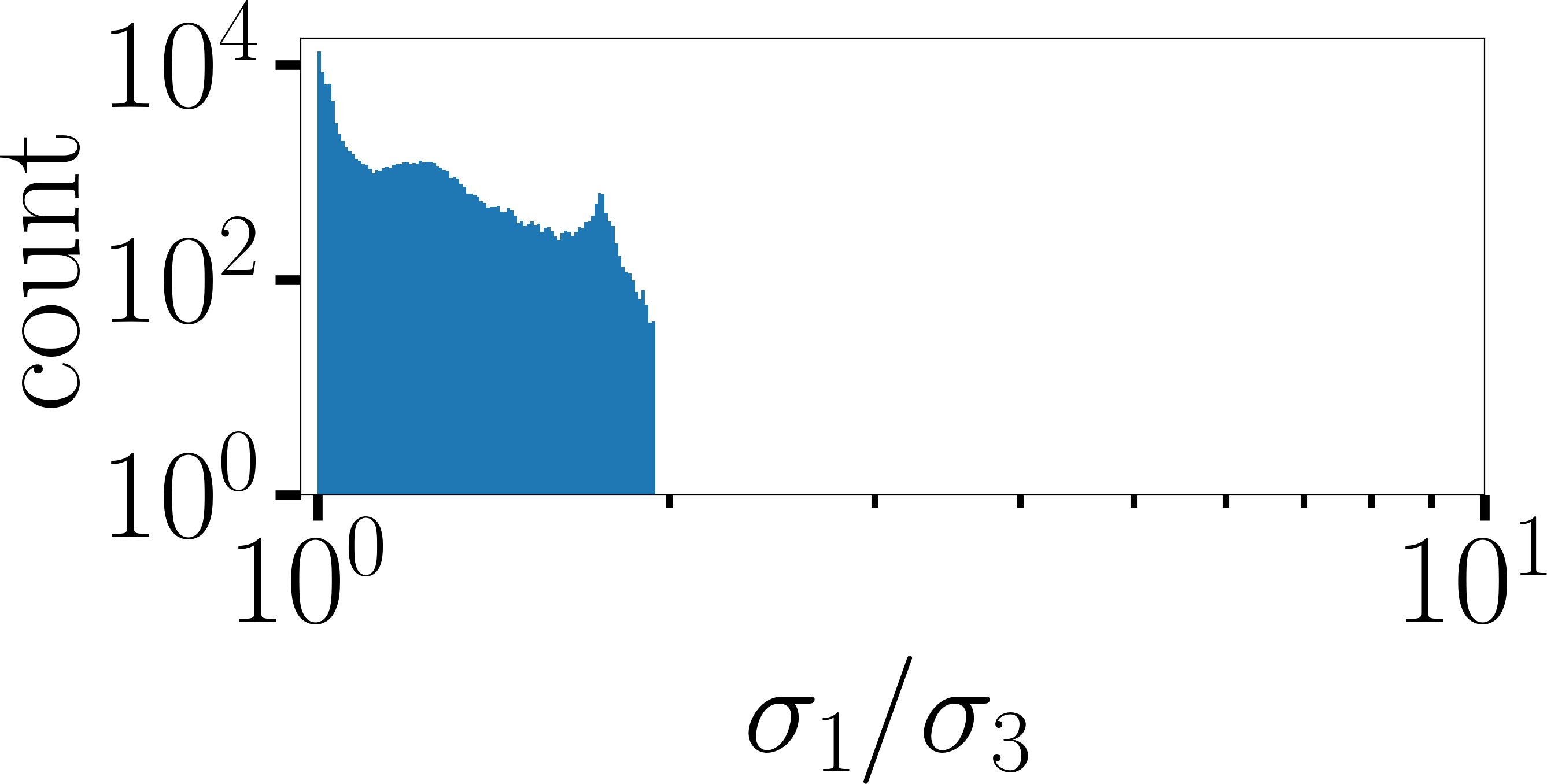}
		\vspace{2mm}
		\includegraphics[width=\linewidth]{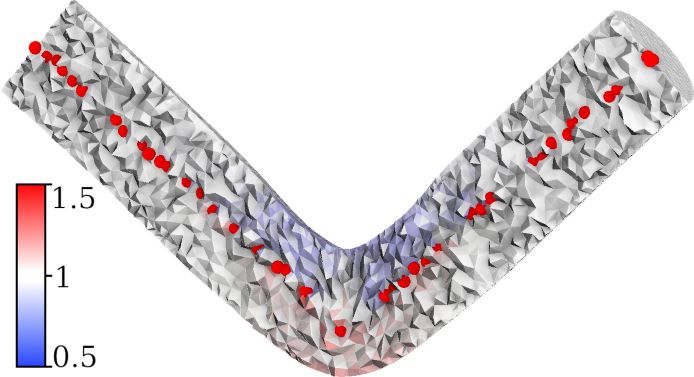}
		\includegraphics[width=\linewidth]{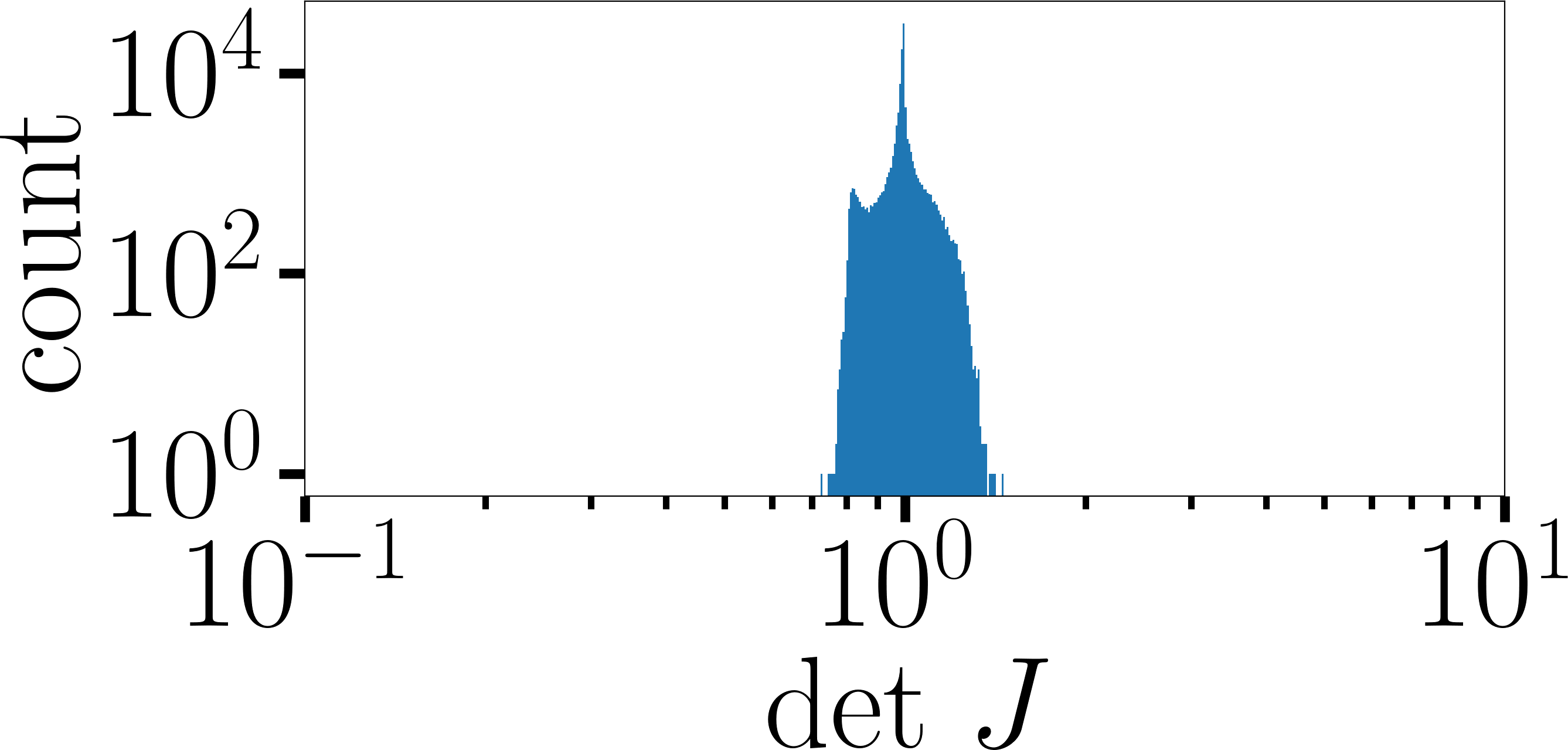}
		\textbf{(b)}
	\end{minipage}
	\caption{Bending test for a tetrahedral mesh of a cylinder, locked vertices are shown in red. \textbf{(a):} IDP~\cite{Fang2021IDP}, \textbf{(b):} QIS deformation with $\theta=\frac12$.
		From top to bottom: Jacobian matrix condition number and the Jacobian determinant are shown in log-log histograms and corresponding color plots.
	}
	\label{fig:idp}
	
	\vspace{10mm}
	
	\begin{minipage}[!t]{.15\linewidth}\vspace{0pt}
		\centering
		\includegraphics[width=\linewidth]{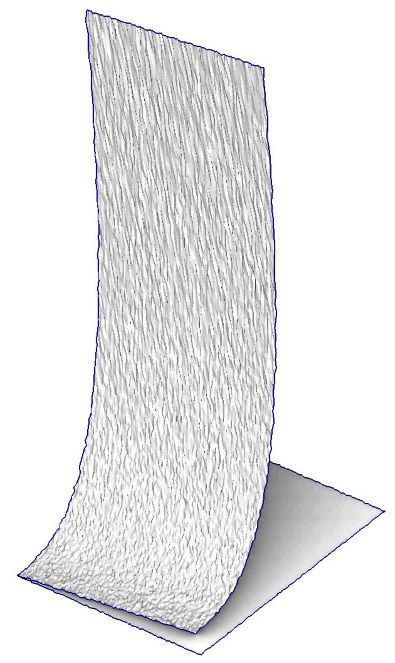}\\
		\textbf{(a)}
	\end{minipage}
	\begin{minipage}[!t]{.41\linewidth}\vspace{0pt}
		\centering
		\includegraphics[width=.18\linewidth]{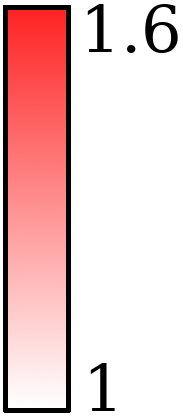}
		\includegraphics[width=.79\linewidth]{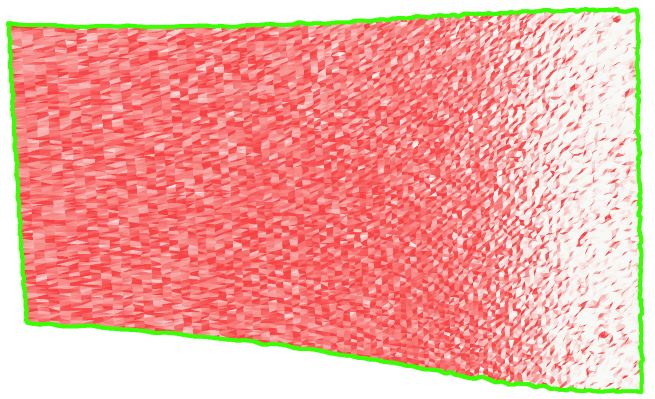}
		\includegraphics[width=\linewidth]{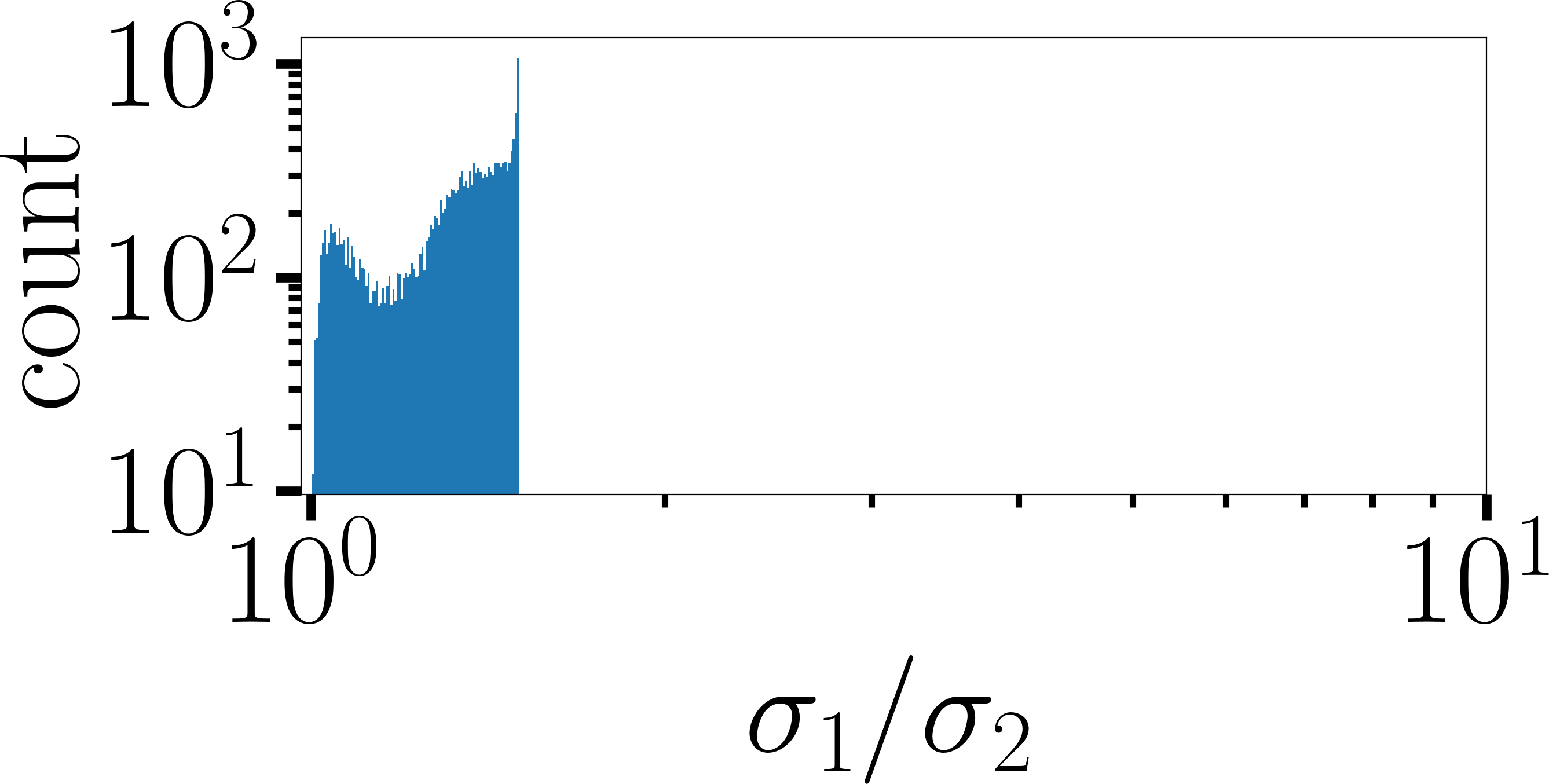}
		\includegraphics[width=\linewidth]{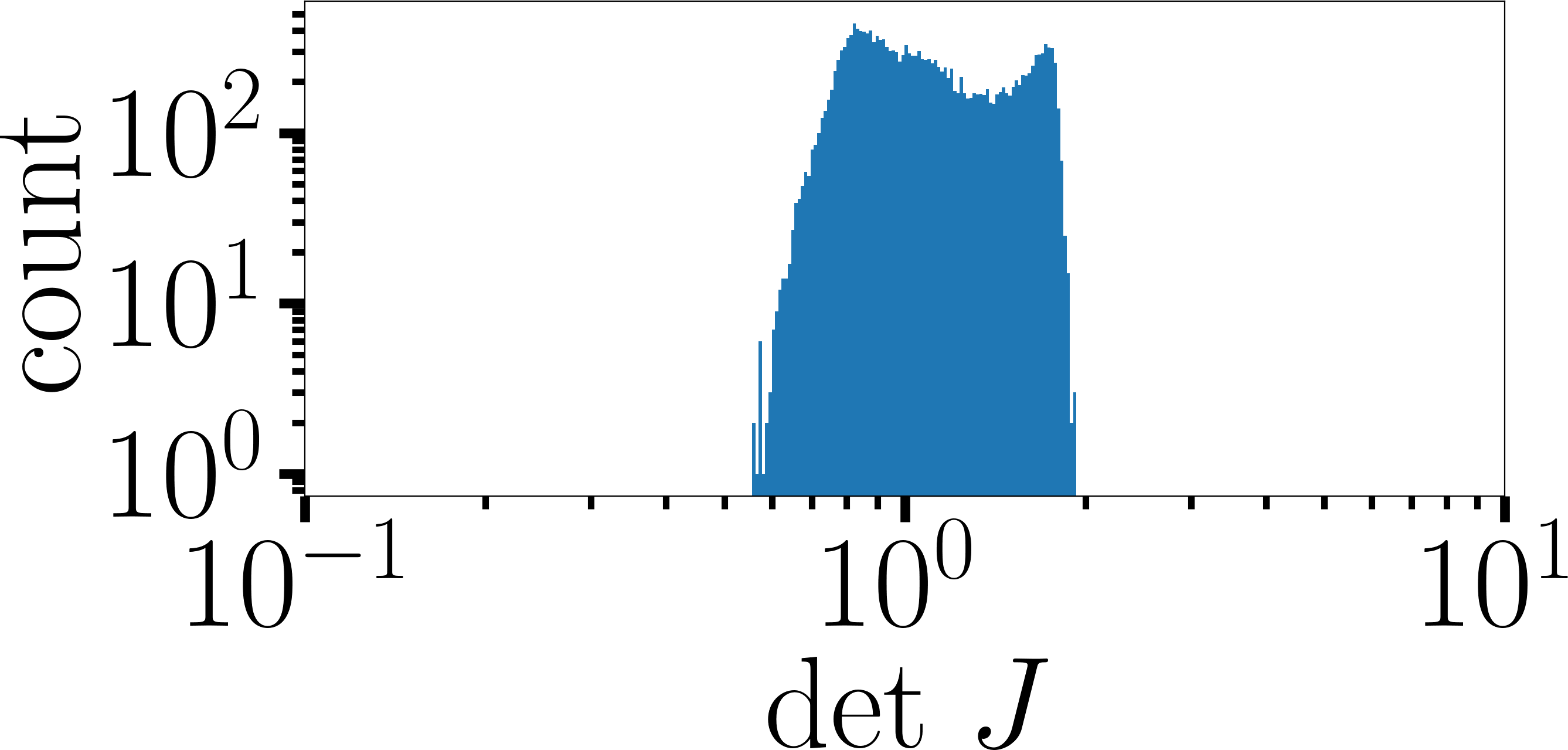}
		\textbf{(b)}
	\end{minipage}
	\begin{minipage}[!t]{.41\linewidth}\vspace{0pt}
		\centering
		\includegraphics[width=\linewidth]{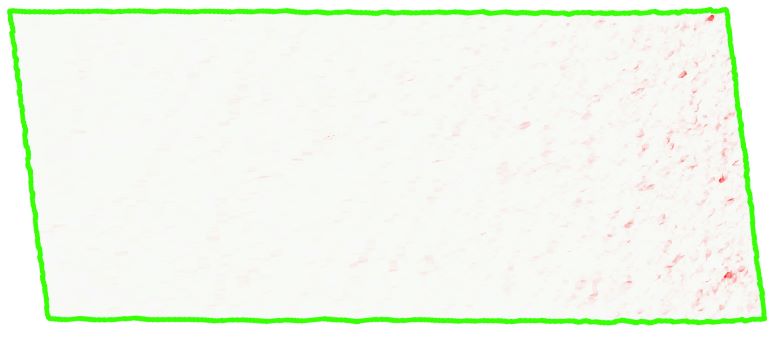}\\
		\vspace{2mm}
		\includegraphics[width=\linewidth]{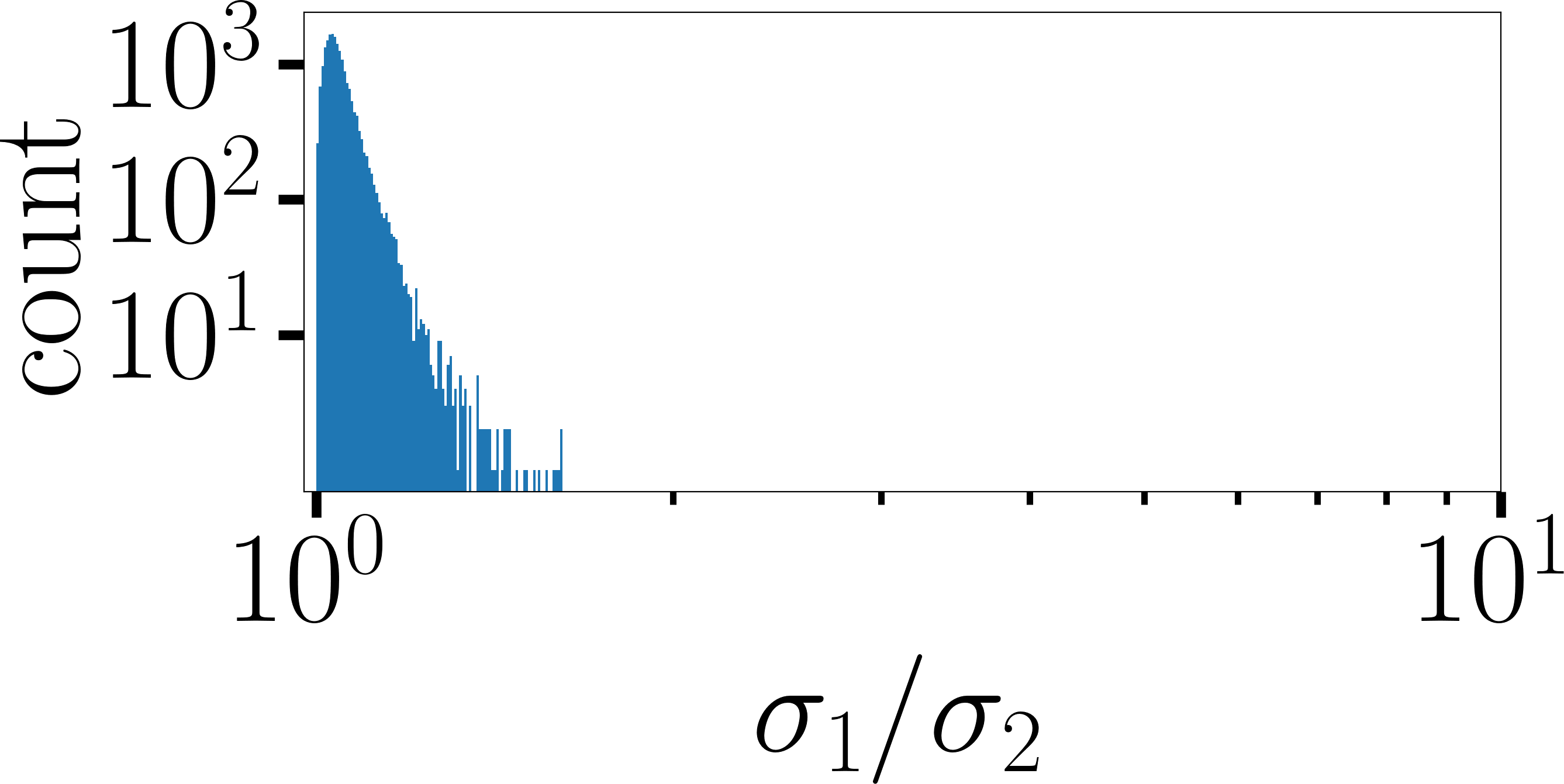}
		\includegraphics[width=\linewidth]{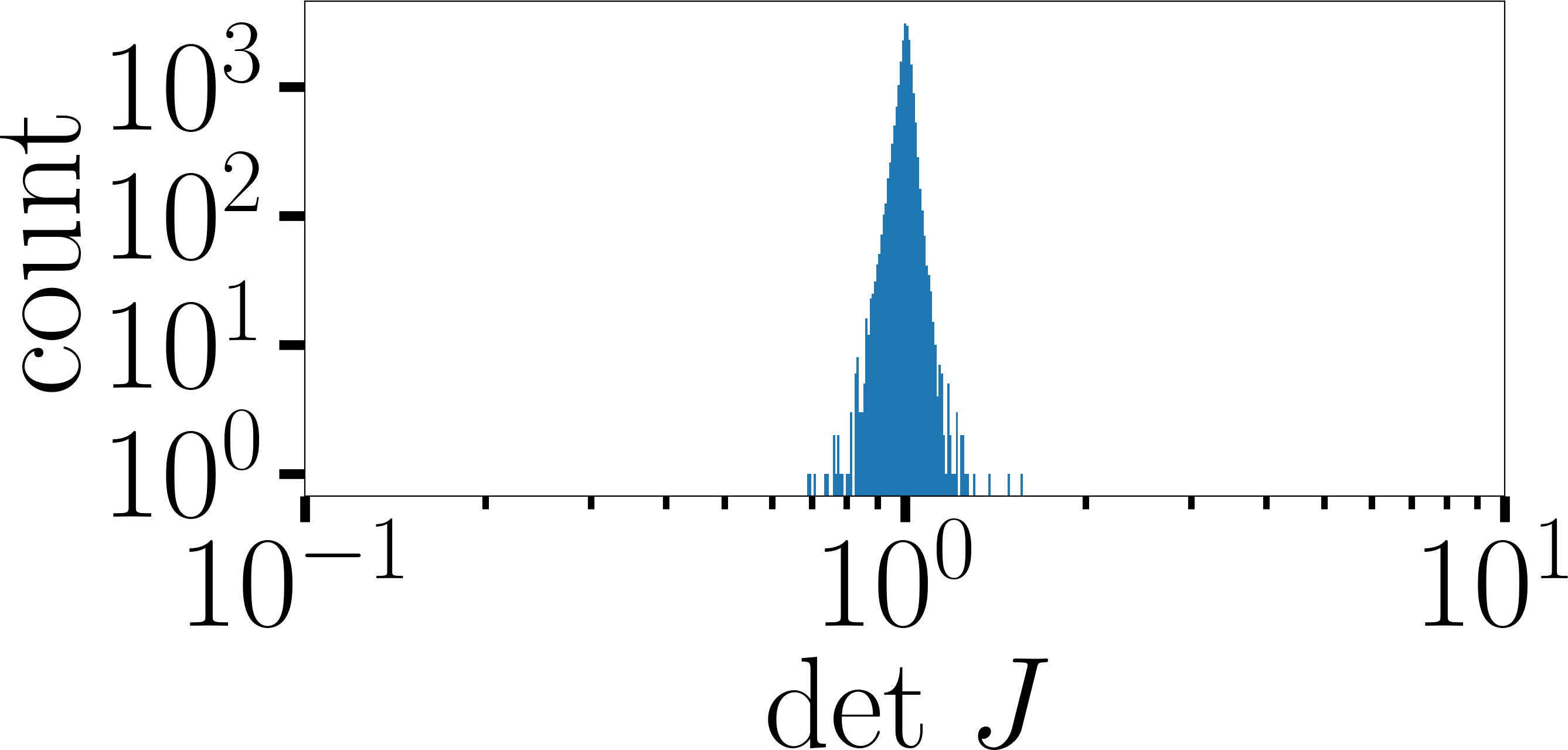}
		\textbf{(c)}
	\end{minipage}
	\caption{
		Comparison of LBD~\cite{LargeScaleBD:2015} vs QIS. \textbf{(a):} 3D surface to flatten is a regular triangular mesh of a square patch that was lifted and noised. \textbf{(b):} The map obtained by LBD. \textbf{(c):} QIS ($\theta=\frac12$).
		\textbf{Top row:} flattenings of \textbf{(a)}, colors correspond to the quality of elements (conditioning of the Jacobian). \textbf{Middle and bottom rows:} log-log element quality histograms.
	}
	\label{fig:LBD}
\end{figure}

\begin{figure}[!p]
	
	\begin{minipage}[!t]{.45\linewidth}\vspace{0pt}
		\centering
		\includegraphics[width=\linewidth]{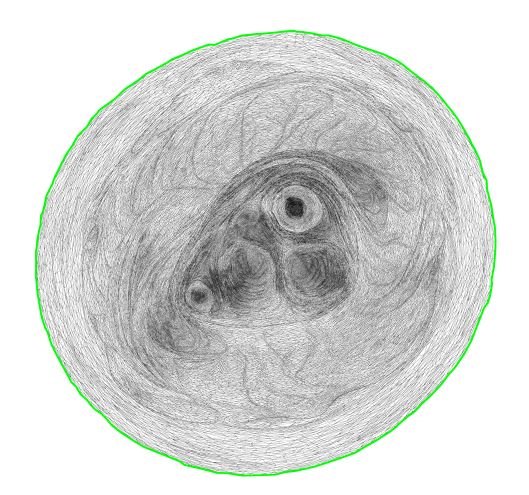}
		\includegraphics[width=.79\linewidth]{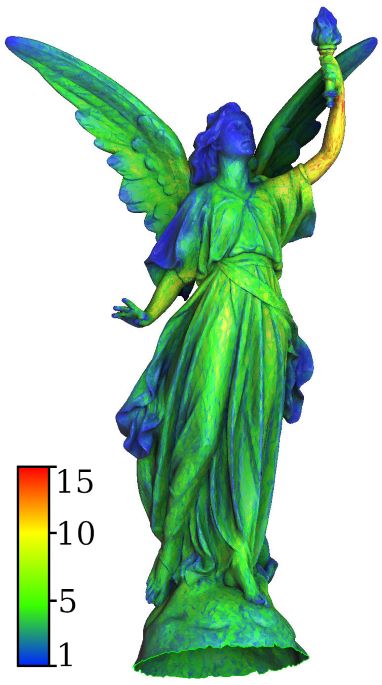}
		\includegraphics[width=\linewidth]{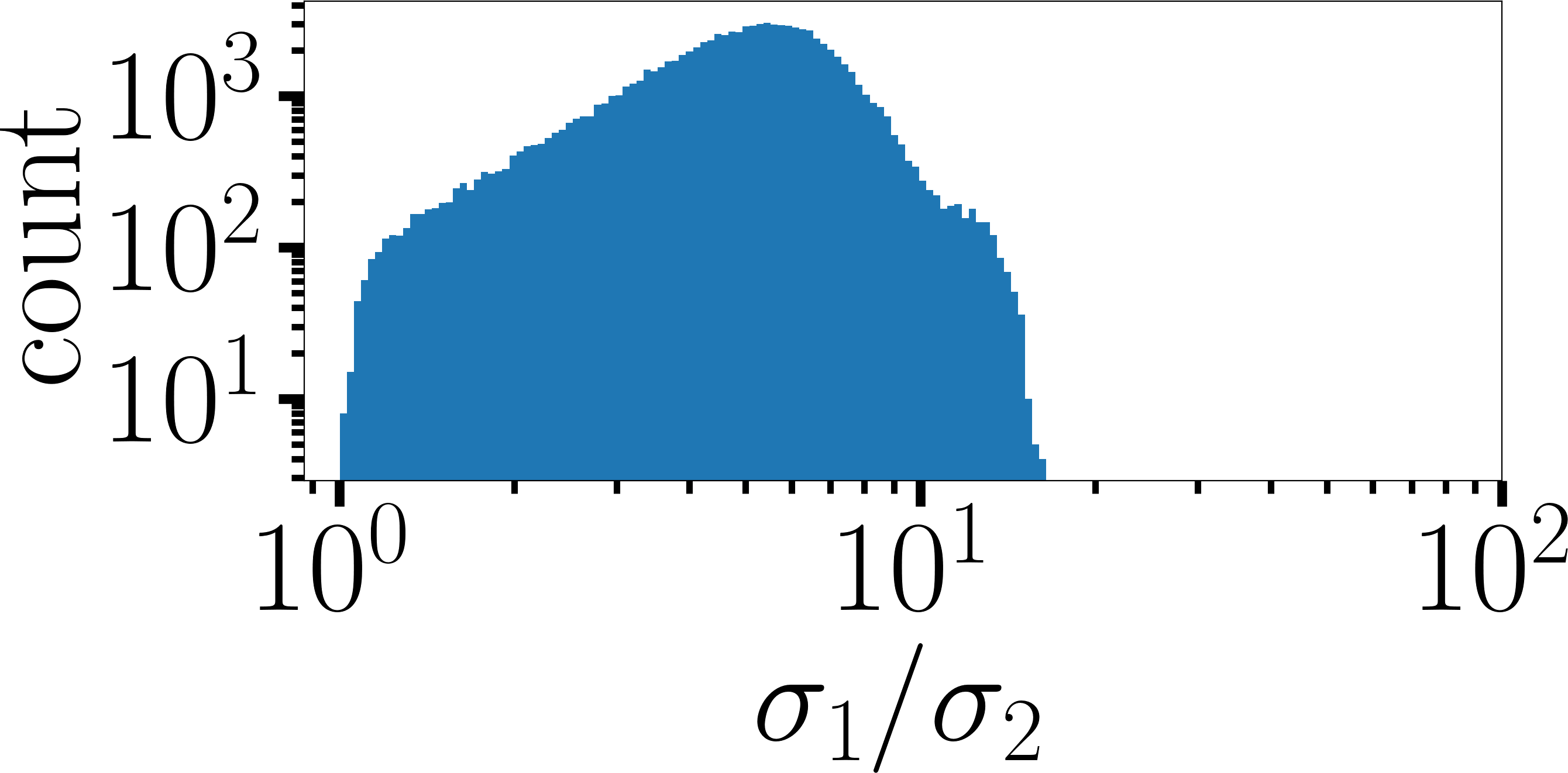}
		\vspace{3mm}
		\includegraphics[width=.79\linewidth]{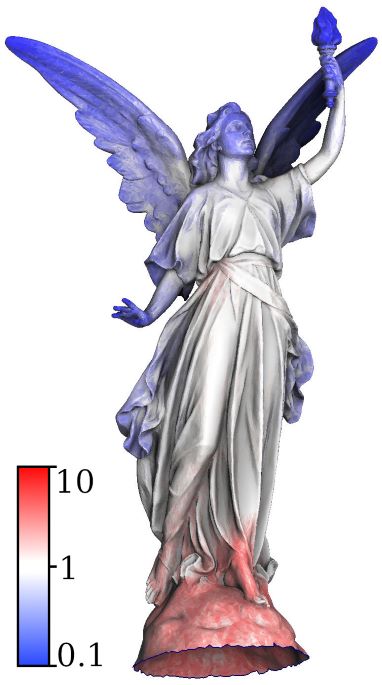}
		\includegraphics[width=\linewidth]{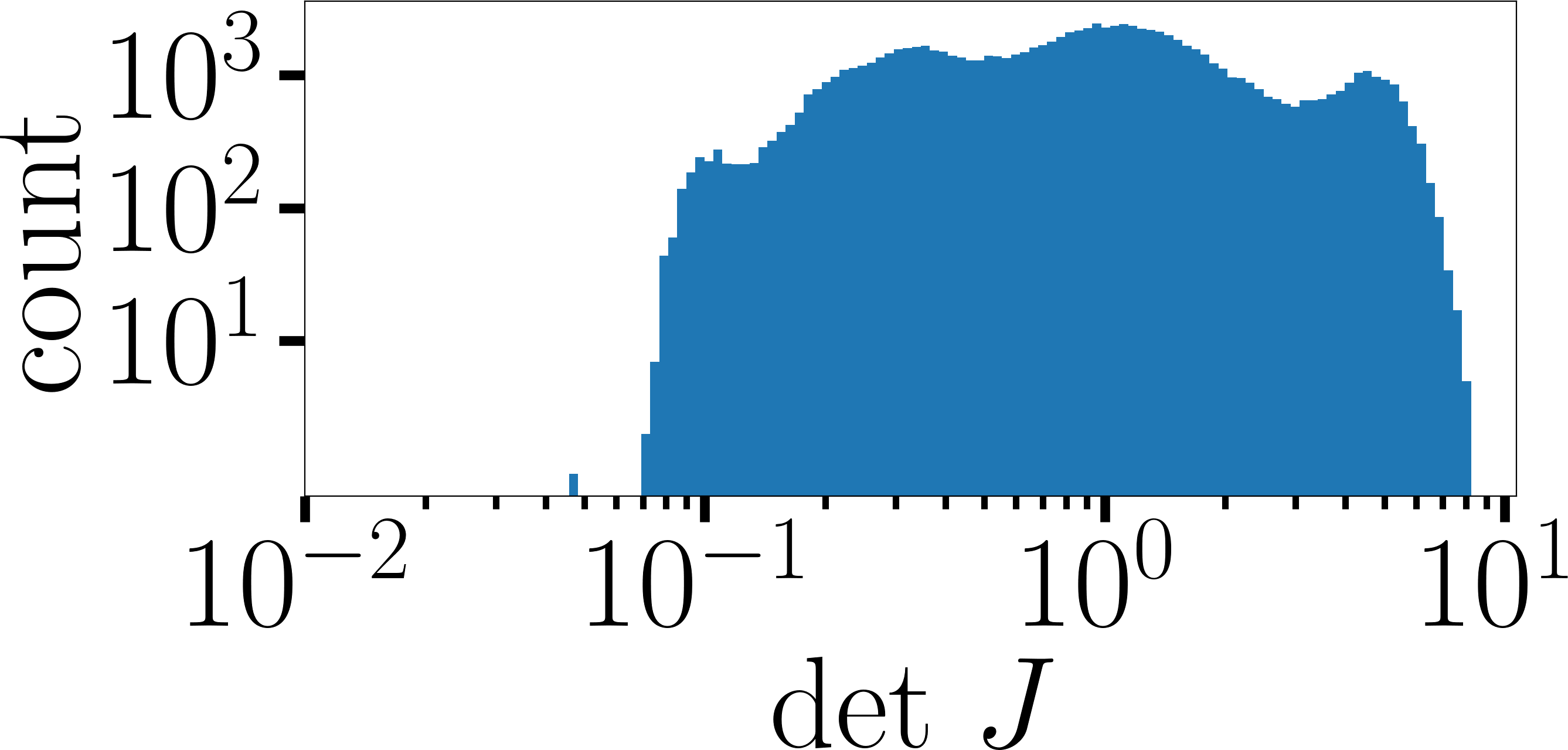}
		\textbf{(a)}
	\end{minipage}
	\begin{minipage}[!t]{.45\linewidth}\vspace{0pt}
		\centering
		\includegraphics[width=.9\linewidth]{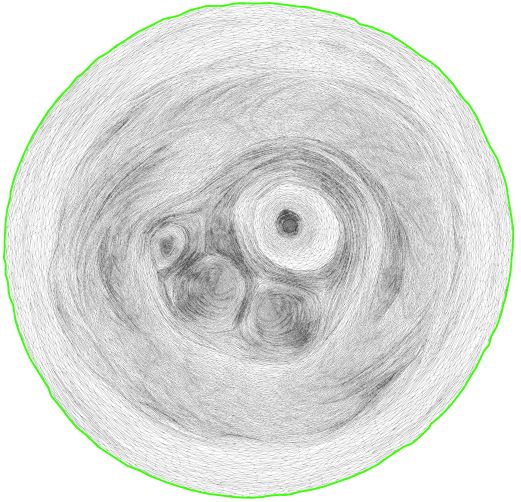}\\
		\vspace{2mm}
		\includegraphics[width=.79\linewidth]{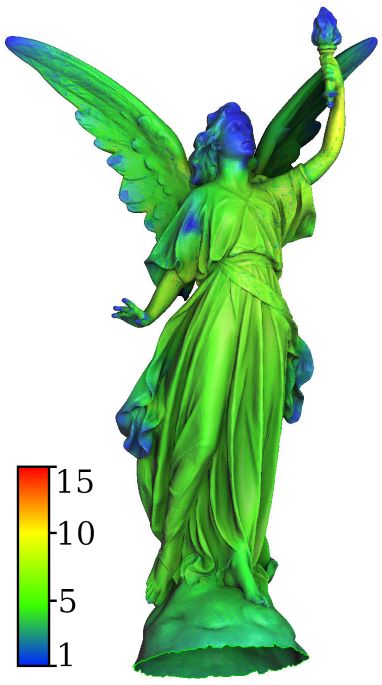}
		\includegraphics[width=\linewidth]{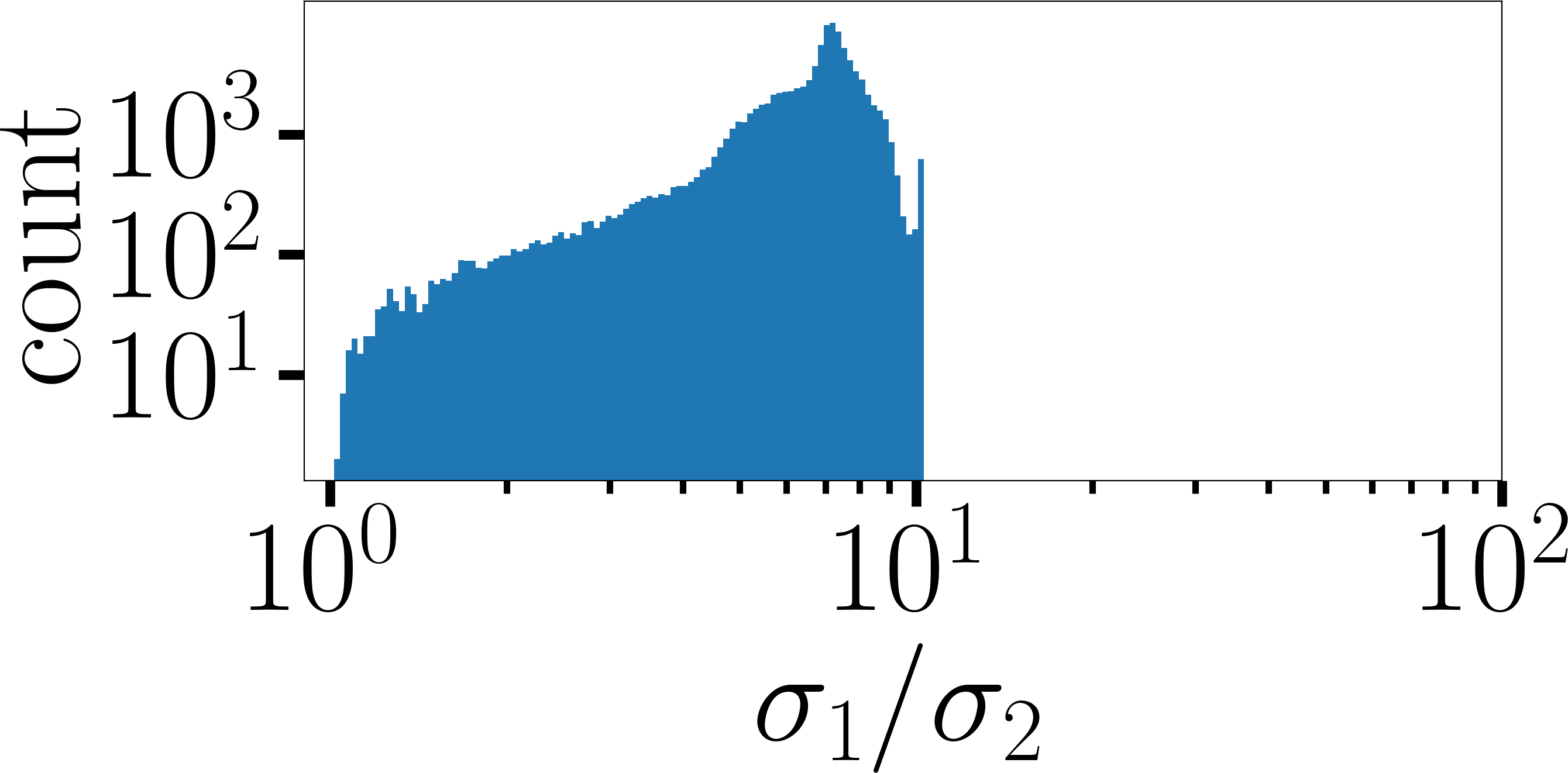}
		\vspace{3mm}
		\includegraphics[width=.79\linewidth]{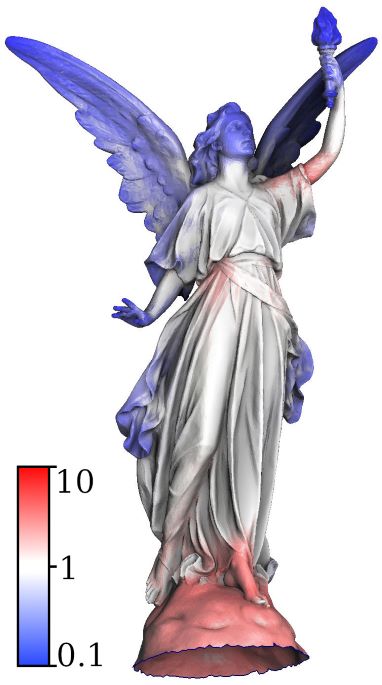}
		\includegraphics[width=\linewidth]{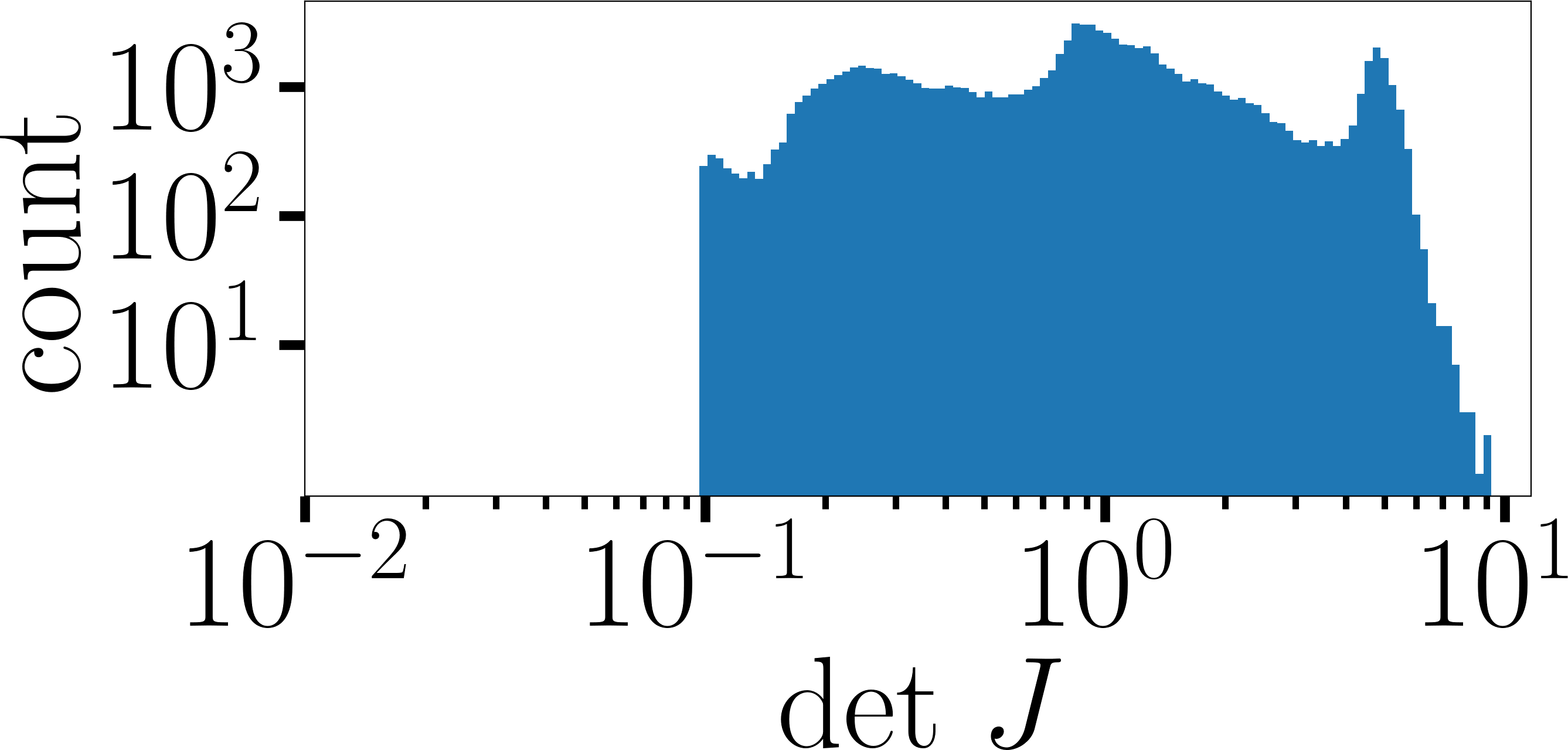}
		\textbf{(b)}
	\end{minipage}
	\caption{Two quasi-isometric maps for the ``Lucy'' mesh. \textbf{(a):} Simplex assembly, \textbf{(b):} QIS, $\theta=\frac12$.
		The quality of elements and its distribution over the surface is shown in log-log histograms and corresponding color plots.}
	\label{fig:lucy_theta}
\end{figure}


Fang et.al~\cite{Fang2021IDP} attempt to improve worst-element distortions by formulating a regularized min-max optimization
for IDP by applying a $p$-norm extension to the symmetric Dirichlet (SD) energy with exponential factor $p>1$.

In our next test we deform a cylinder tetrahedral mesh.
We applied two bone handles (two thin boxes of interior axis vertices) to bend it.
Fig.~\ref{fig:idp} shows the comparison of our results with IDP.
Locked vertices are shown in red.

As advised by Fang et al., we chose $p=5$. It improves slightly the worst-element distortion w.r.t regular IDP, but does not allow to eliminate it completely.
As can be seen in Fig.~\ref{fig:idp}--left, the stress is concentrated around the locked vertices (shown in magenta).
Our optimization ($\theta=\frac12$) allows to dissipate the stress over a larger area, thus improving both distortion measures:
the maximum stretch decreases from 5.05 to 1.94, and the minimum scale increases from 0.36 to 0.72.


\paragraph*{Comparison with Large-scale Bounded Distortion Mappings}
Our next test is LBD~\cite{LargeScaleBD:2015}.
Given an input map (potentially with folds), LBD looks for an injective map as close as possible to the input map, but satisfying some constraints such as the orientation as well as distortion bounds.
Generally speaking, the problem of minimizing an energy subject to bounded distortion constraints is known to be difficult and computationally demanding.
LBD alternates between energy minimization steps and projection to the constraints.

In our test (refer to Fig.~\ref{fig:LBD}), the 3D surface to flatten is a regular simplicial mesh of a rectangular patch that was lifted and noised.
Since LBD has an explicit optimization of the distortion bounds, the comparison is of a particular interest.
In our test the minimum Jacobian determinant increases from 0.30 to 0.69, while the maximum stretch \textit{increases} from 1.50 to 1.61.
Despite the increase in the max stretch, our variational problem leads to a much better overall element quality distribution.
Note that this test was performed with default stretch/scale trade-off parameter $\theta=\frac12$.
If we choose, for example, $\theta=\frac13$, both quality measures improve:
$\max \sigma_1/\sigma_2\approx 1.43$ and $\min \det J \approx 0.76$.


\paragraph*{Comparison with Simplex Assembly}
Our final test is confronting our optimization to Simplex Assembly (SA)~\cite{Fu2016}.
Simplex assembly is a method to compute inversion-free mappings with bounded distortion on simplicial meshes.
The idea is to project each simplex into the inversion-free and distortion-bounded space.
Having disassembled the mesh, the simplices are then assembled by minimizing the mapping distortion, while keeping the mapping feasible.
Fig.~\ref{fig:lucy_theta} provides a quality comparison of SA with our quasi-isometric ($\theta=\frac12$) map for a free-boundary mapping of the ``Lucy'' mesh.
This comparison is interesting for two different reasons: first, SA offers an explicit optimization for the distortion bound, and second, in 2D Fu et. al use exactly the same distortion measure as we do.

Our method shows consistently better maps w.r.t. Simplex Assembly.
In this test, the worst condition number $\max\frac{\sigma_1(J)}{\sigma_2(J)}$ is 16.46 for SA and 10.32 for our method.
The minimum scale $\min\det J$ is equal to 0.05 for SA and 0.10 for our method.

\begin{figure}
	\centering
	\begin{minipage}[!t]{.45\linewidth}\vspace{0pt}
		\centering
		\includegraphics[width=.9\linewidth]{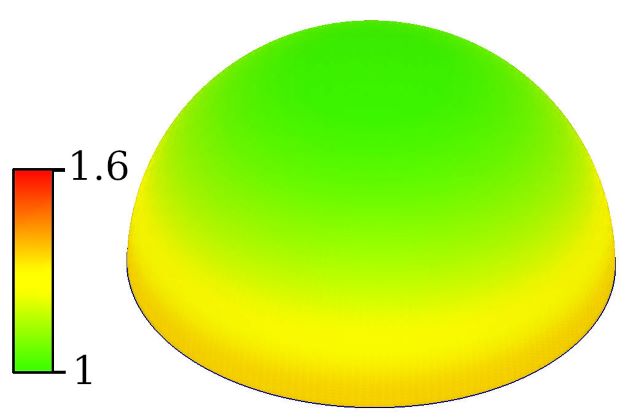}
		\includegraphics[width=.9\linewidth]{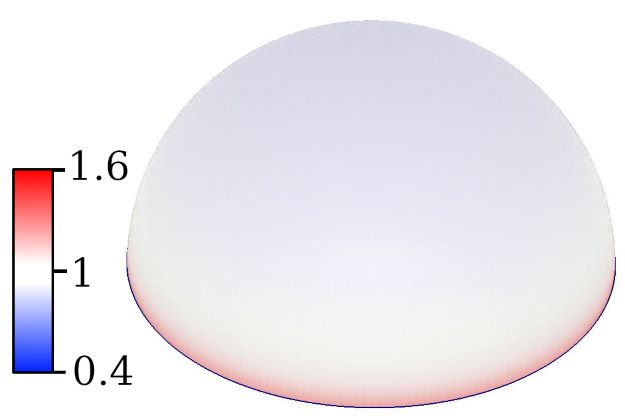}
		\textbf{(a)}
	\end{minipage}
	\begin{minipage}[!t]{.45\linewidth}\vspace{0pt}
		\centering
		\includegraphics[width=.9\linewidth]{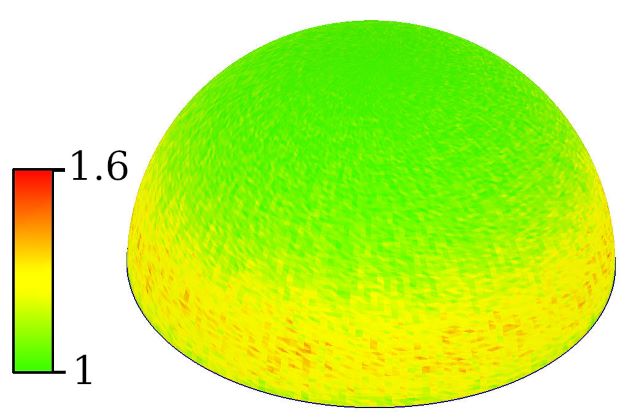}
		\includegraphics[width=.9\linewidth]{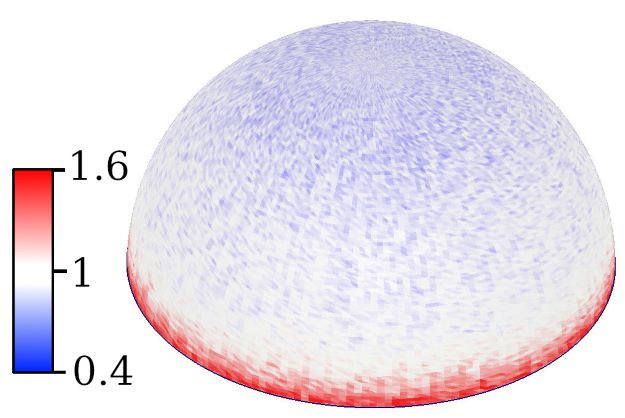}
		\textbf{(b)}
	\end{minipage}
	\caption{Mapping a half-sphere on a disc: our solution with $\theta=\frac12$ \textbf{(a)} vs Simplex Assembly \textbf{(b)}.
		\textbf{Top row:} condition number of the Jacobian matrix, \textbf{bottom row:} Jacobian determinant.
	}
	\label{fig:half-sphere}
\end{figure}

Note that while in 2D Simplex Assembly has the same objective function as our method, the way SA poses the problem (minimization of maximal distortion) leads to non-smooth solutions.
The fact that the solution is noisy can already be seen in Fig.~\ref{fig:lucy_theta},
but we chose to perform a second test to highlight the fact: we map a half-sphere onto a disc (Fig.~\ref{fig:half-sphere}).


This test case is interesting because it has a closed form solution.
It is possible to build an analytical flattening which has the smallest known quasi-isometry constant $\Gamma = \sqrt{\pi / 2}$.
This mapping can be obtained by isometric projection of meridians onto straight segments on the plane starting from the north pole while keeping angular projection uniform.
Obviously, singular values of this mapping range from $1$ to $\pi / 2$, and using best scaling we get $\Gamma = \sqrt{\pi / 2} \approx 1.253$.

As before, our result is better:
the distortion for our flattening is equal to $\sqrt{\frac{\max_i\sigma_1(J_i)}{\min_i\sigma_2(J_i)}} = 1.279$, which is within 2\% from the ideal bound,
and the quasi-isometry constant for SA is equal to 1.47.
This difference is due to the optimization scheme choice. Since we discretize a well-posed variational problem, our method provides smooth solutions, whereas SA result is noisy and loses angular symmetry.


\begin{figure*}
	\centerline{
		\begin{minipage}[!t]{.22\linewidth}\vspace{0pt}
			\centering
			\includegraphics[width=\linewidth]{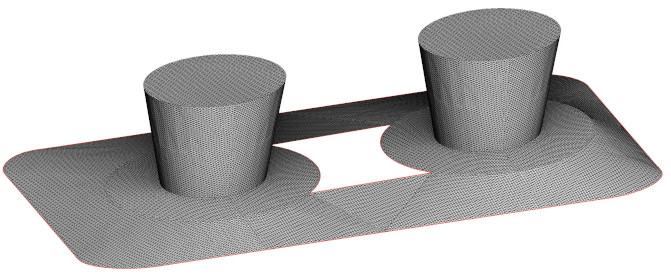}
			\includegraphics[width=\linewidth]{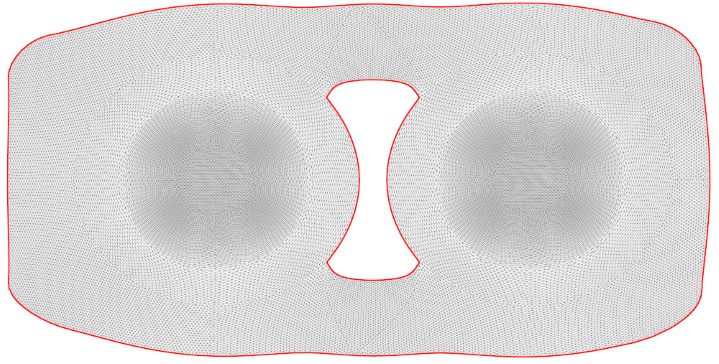}
			\includegraphics[width=\linewidth]{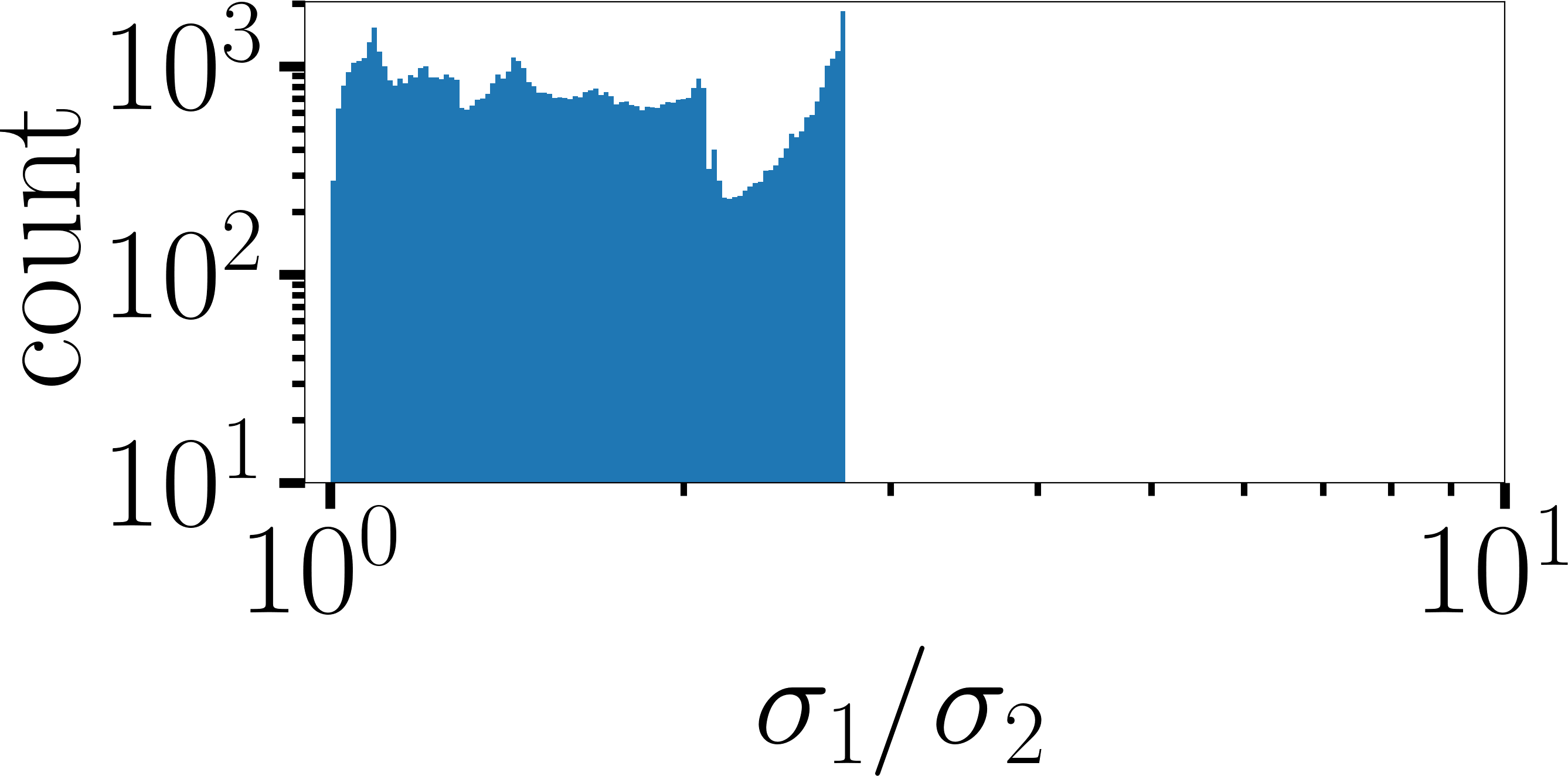}
			\includegraphics[width=\linewidth]{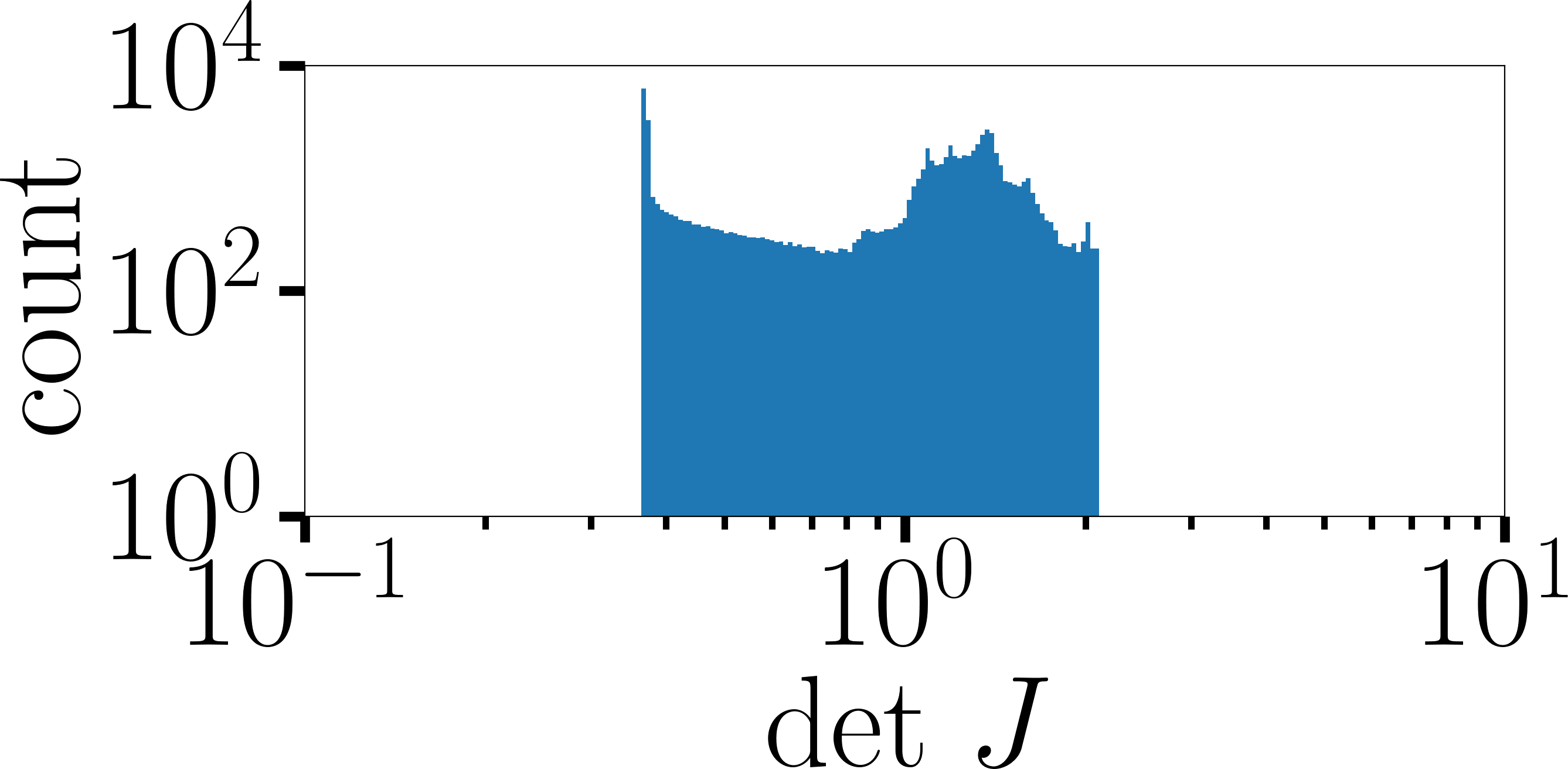}
			\textbf{(a)}
		\end{minipage}
		\begin{minipage}[!t]{.22\linewidth}\vspace{0pt}
			\centering
			\includegraphics[width=\linewidth]{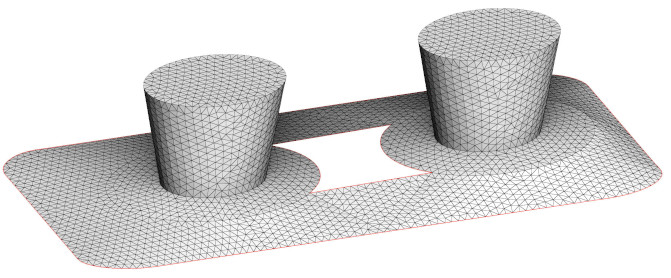}
			\includegraphics[width=\linewidth]{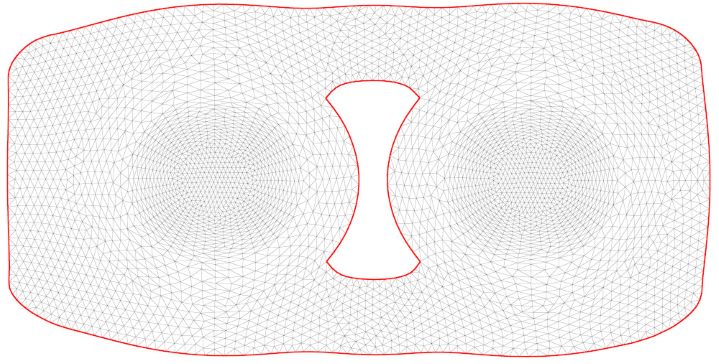}
			\includegraphics[width=\linewidth]{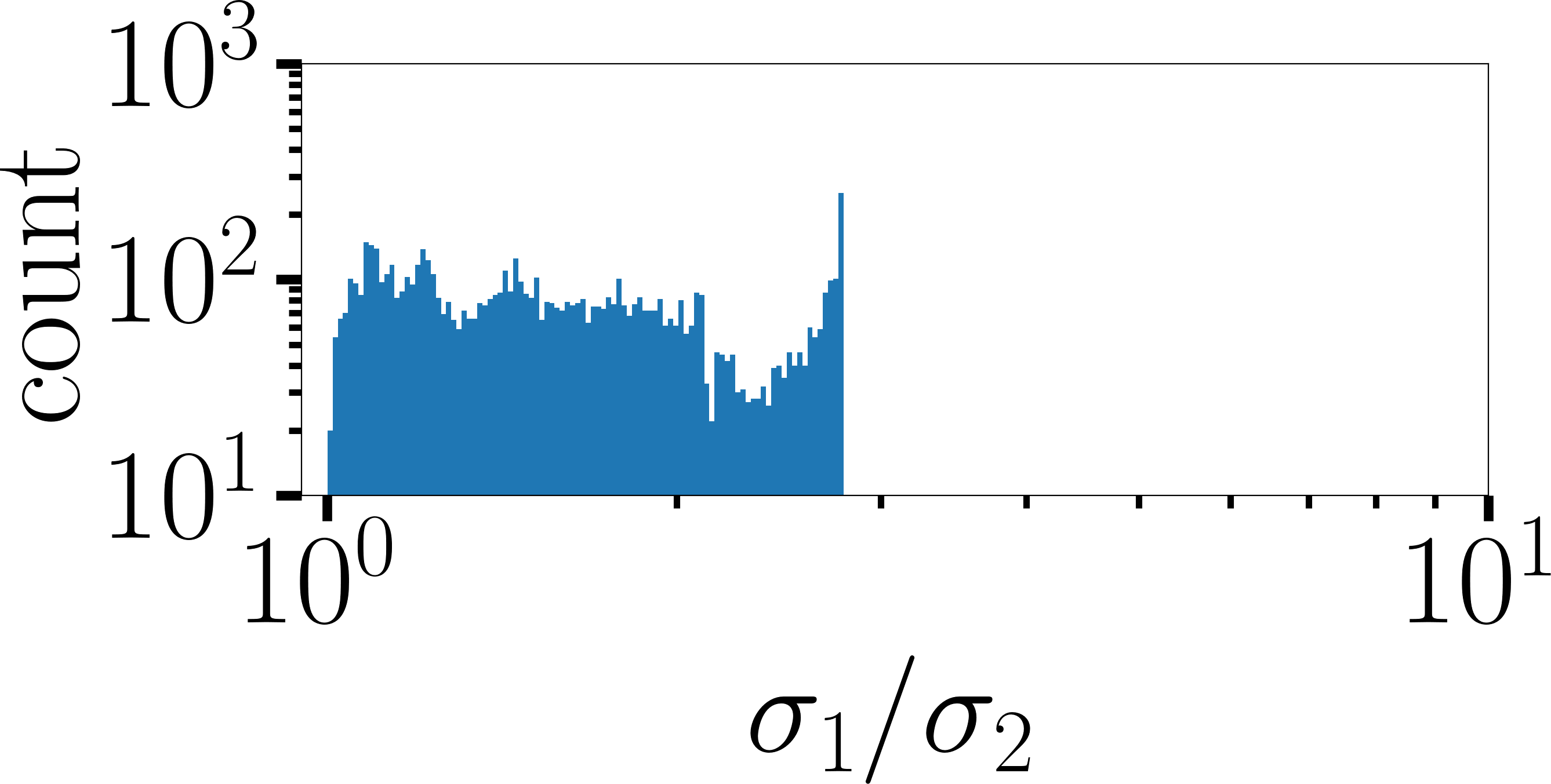}
			\includegraphics[width=\linewidth]{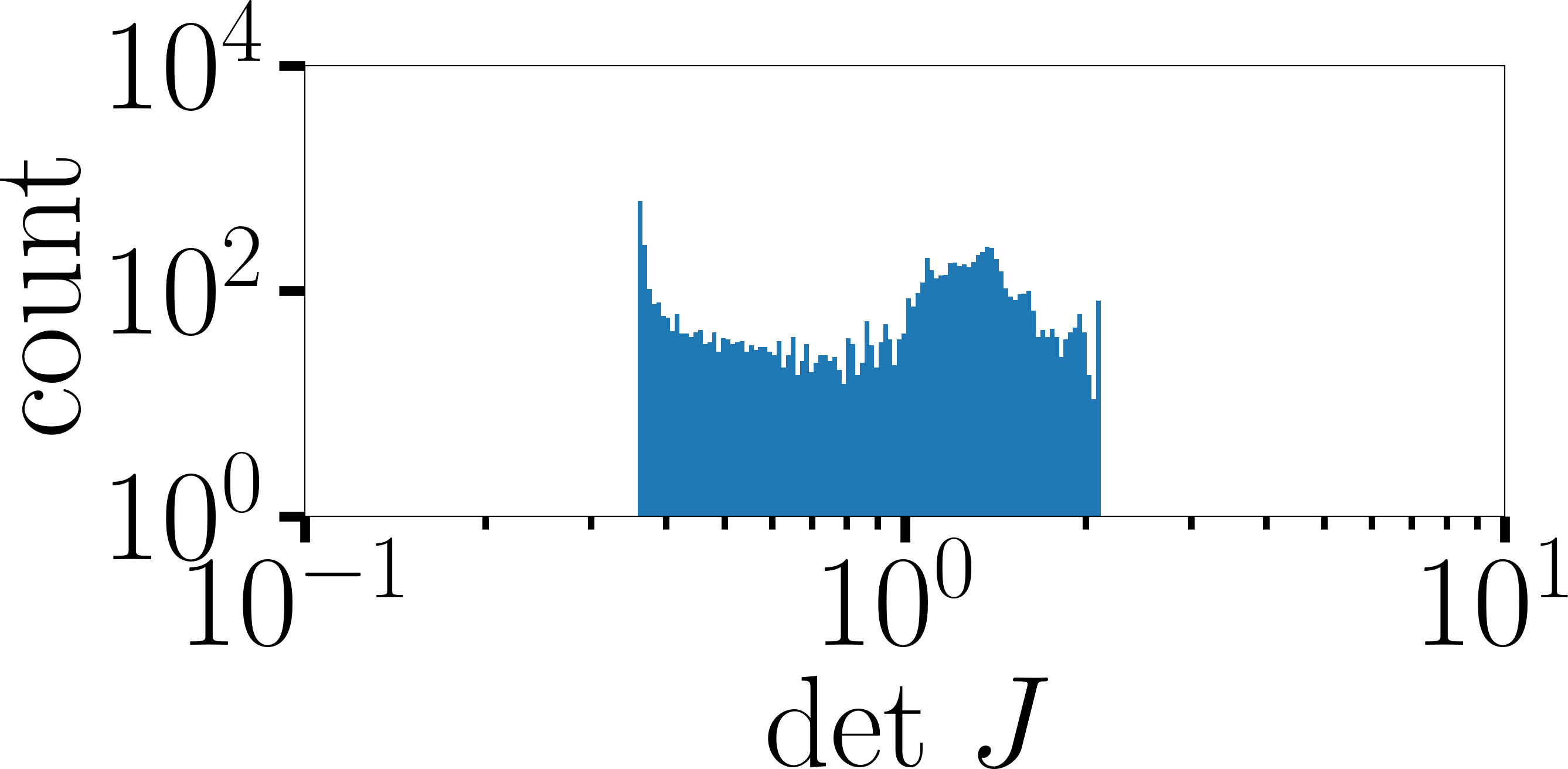}
			\textbf{(b)}
		\end{minipage}
		\begin{minipage}[!t]{.22\linewidth}\vspace{0pt}
			\centering
			\includegraphics[width=\linewidth]{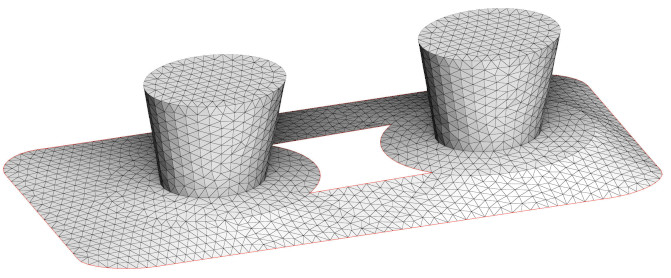}
			\includegraphics[width=\linewidth]{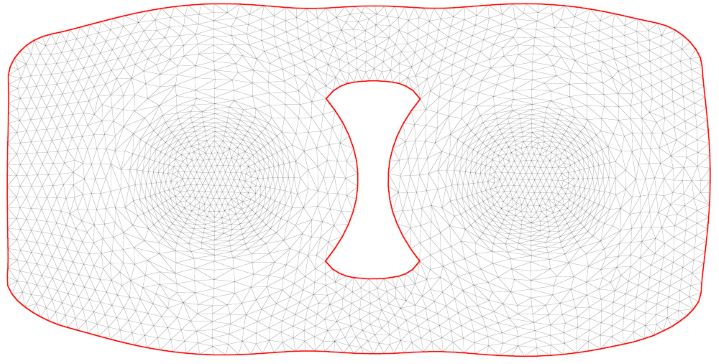}
			\includegraphics[width=\linewidth]{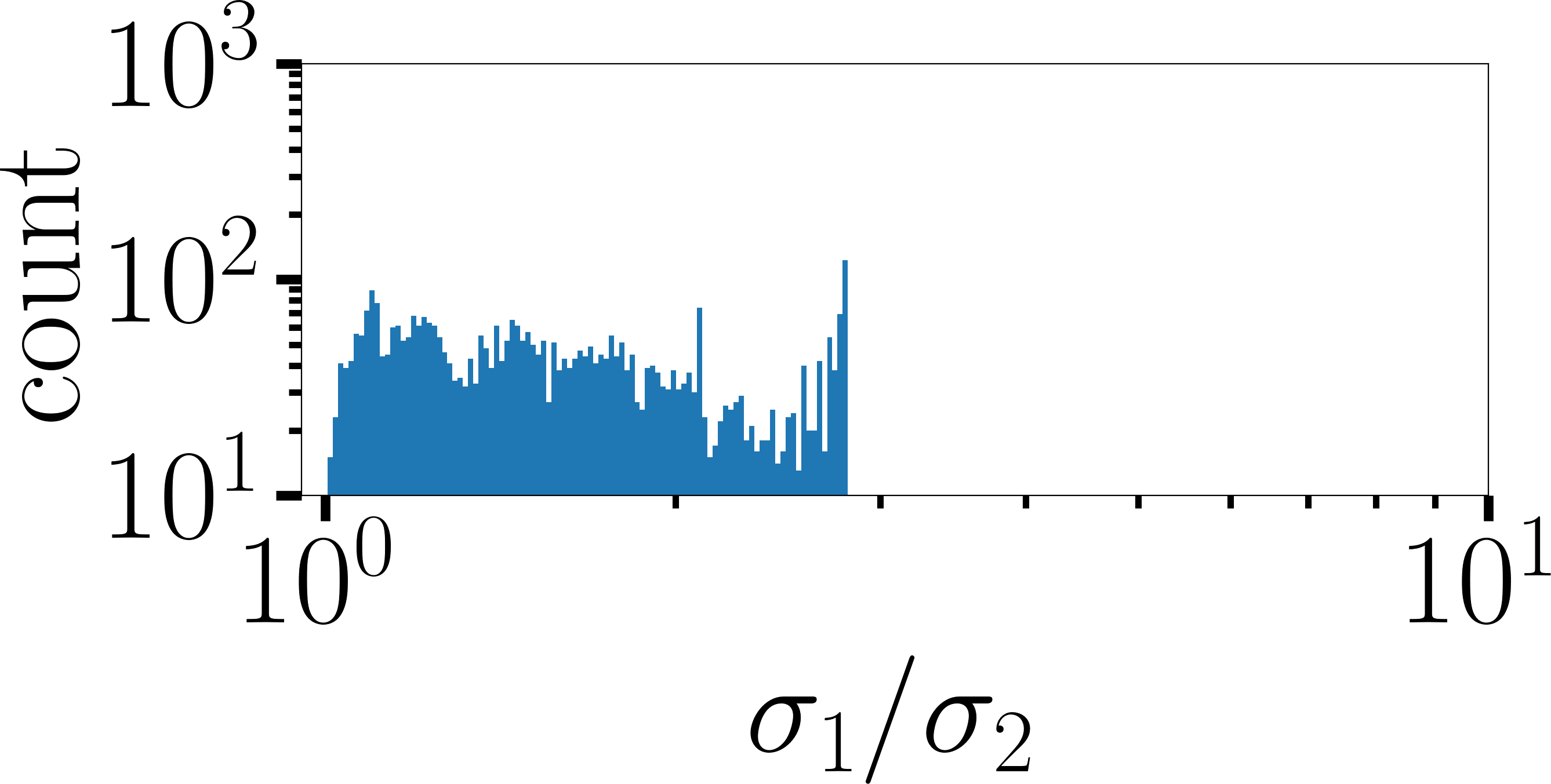}
			\includegraphics[width=\linewidth]{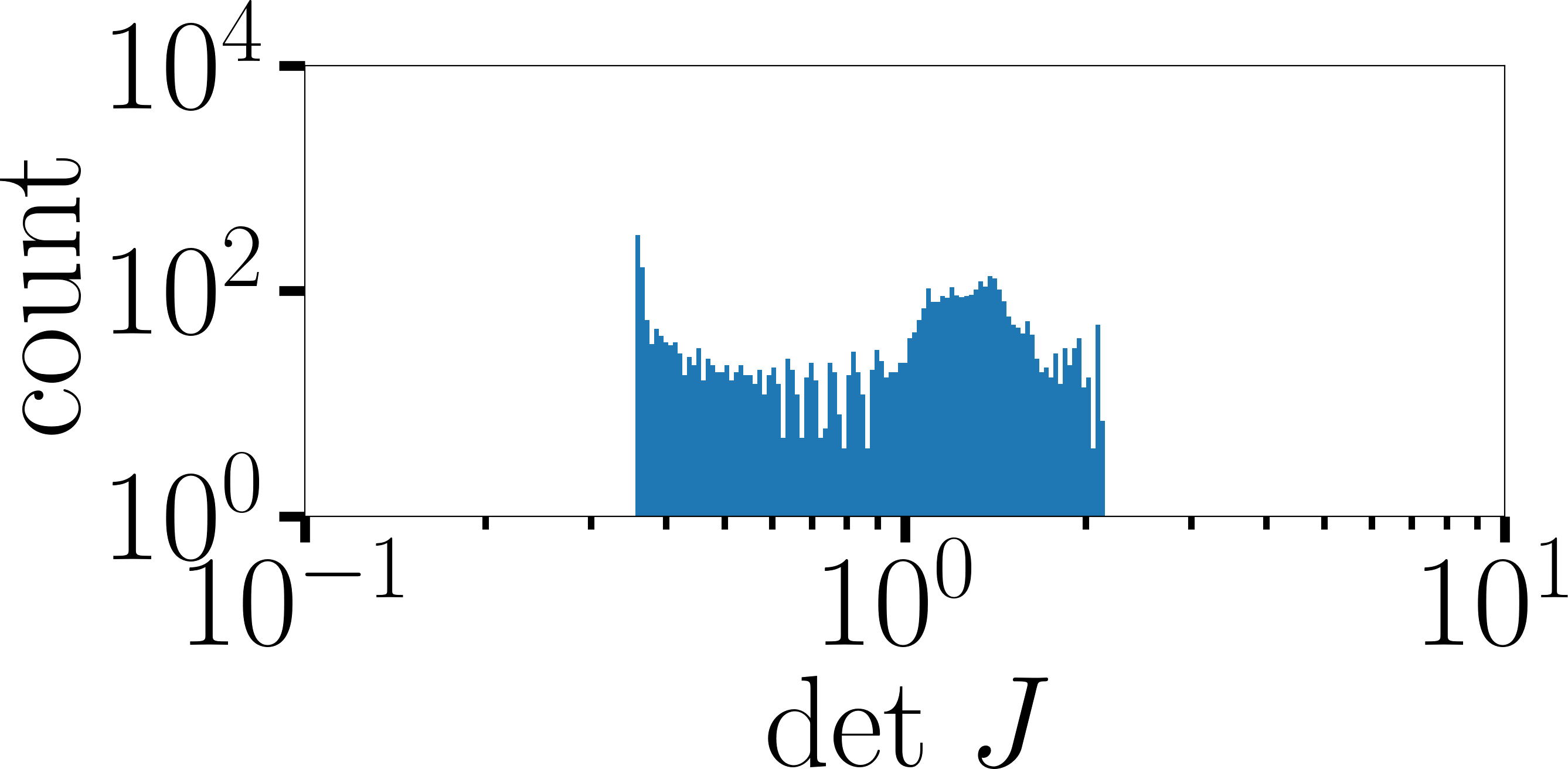}
			\textbf{(c)}
		\end{minipage}
		\begin{minipage}[!t]{.22\linewidth}\vspace{0pt}
			\centering
			\includegraphics[width=\linewidth]{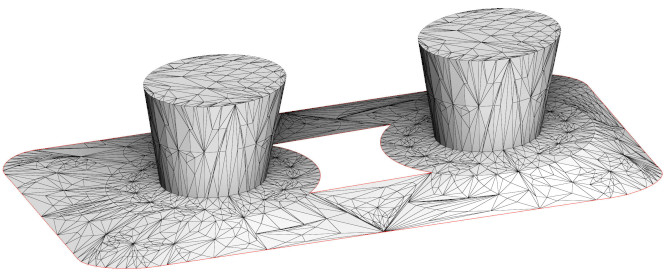}
			\includegraphics[width=\linewidth]{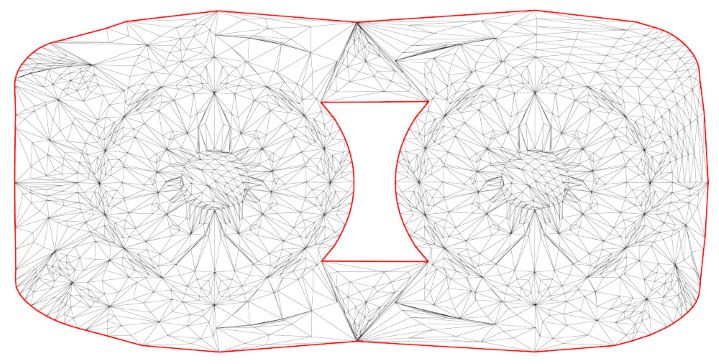}
			\includegraphics[width=\linewidth]{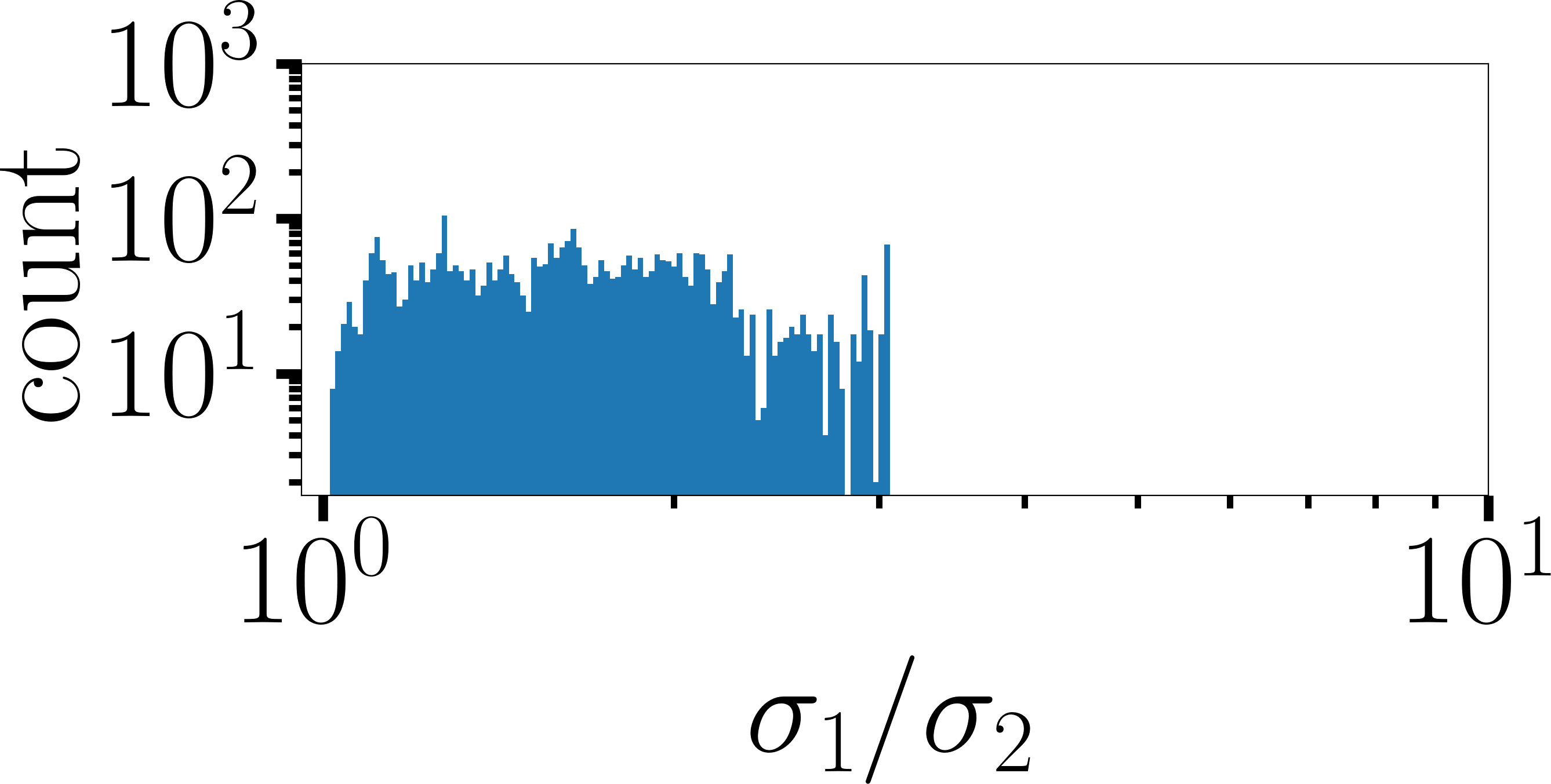}
			\includegraphics[width=\linewidth]{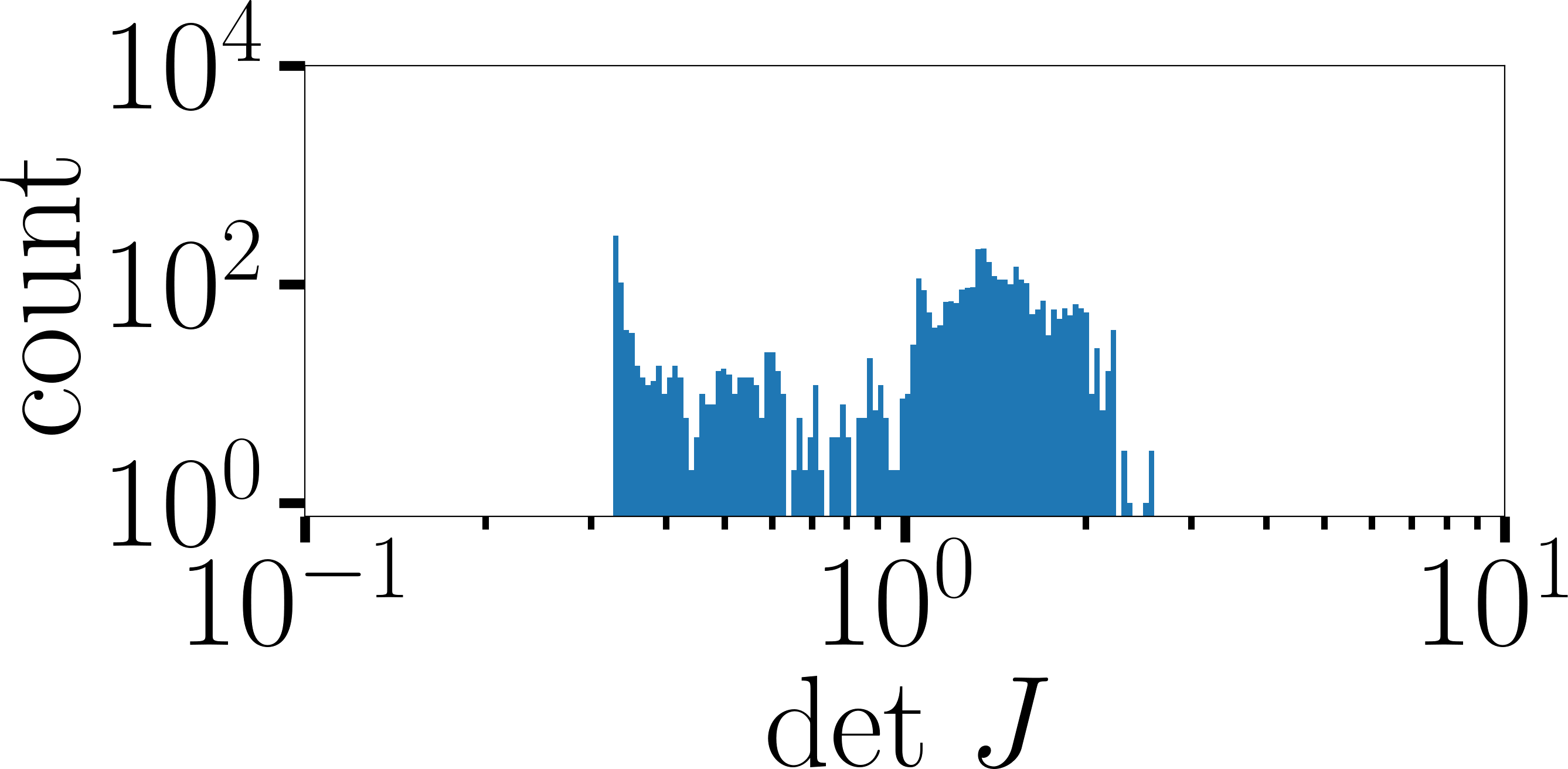}
			\textbf{(d)}
		\end{minipage}
	}
	\caption{Our method shows an excellent stability irrespective of input mesh quality. Here we show four free-boundary quasi-isometric maps ($\theta=\frac12$) computed on different meshes.
		\textbf{(a)--(c):} Behavior under coarsening (from 73k to 4k triangles) of an isotropic mesh. \textbf{(d):} Stress test, the input mesh has a very bad quality with the maximum aspect ratio of $10^8$ over the triangulation. Despite that, the resulting map is still of a good quality.}
	\label{fig:stability}
\end{figure*}

\paragraph*{Stability}
As we have already mentioned, our approach is a discretization of a well-posed variational scheme, and it has an advantage that type, size and quality of mesh elements in the deformed object have a weak influence on the computed deformation.
Here we show that attainable quality threshold estimates (quasi-isometry constants) do not deteriorate with mesh coarsening which is unique property of the proposed algorithm.
To illustrate this point, we have computed four free-boundary quasi-isometric maps on different meshes of the same object (Fig.~\ref{fig:stability}).
Under coarsening of an input isotropic mesh, the distortion bound remains almost the same (2.07, 2.10, 2.11 for the maximum jacobian condition number, respectively).
We have also perfomed a stress test Fig.~\ref{fig:stability}--(d) with maximum aspect ratio of $10^8$ over the input triangulation,
and the maximum jacobian condition number we obtained equals to 2.25.


\subsection{Untangling global parameterizations}
An important application of parameterizations is the generation of quad meshes.
Given a 3D surface and its 2D flattening, applying the inverse of the map to a 2D grid generates a grid on the surface, i.e. a quad mesh.
Naturally, this technique requires the map to be locally invertibile, hence the importance of the untangling approach.
For producing more complex quad meshes~\cite{MIQ2009}, it is also possible to introduce discontinuities in the map.
To generate a valid quad mesh, however, we need to impose constraints along these discontinuities: the transition function that maps one side of a cut to the other side must be grid-preserving, i.e. it transforms the 2D unit grid onto itself (see Fig.~\ref{fig:transition}).

All such grid preserving transition functions can be decomposed into a rotation of $k\pi/2$ plus an integer translation.
In practice, producing such a global parameterization requires two parameterization steps: one where only the rotation is known (from a frame field e.g. \cite{Desobry2021}) and one where the translation is also known (after quantization e.g. \cite{IGM2013}).
In both cases, these boundary constraints are affine and can be easily introduced in our optimization scheme using \cite{Bommes:2010:PMO}.
Figure~\ref{fig:pipeline} illustrates the global parameterization pipeline.

\begin{figure}
	\centering
	\includegraphics[width=.6\linewidth]{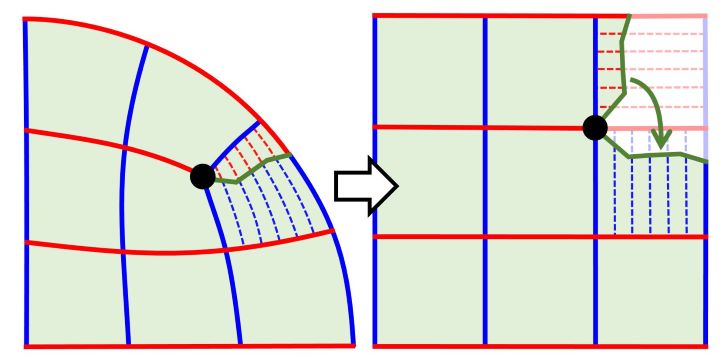}
	\caption{The map is discontinuous across the red cut, but the projection of the unit grid cells from the map (right) to the object (left) coincide thanks to the grid preserving transition function (green arrow). }
	\label{fig:transition}
\end{figure}

\begin{figure*}
	\centering
	\includegraphics[width=\linewidth]{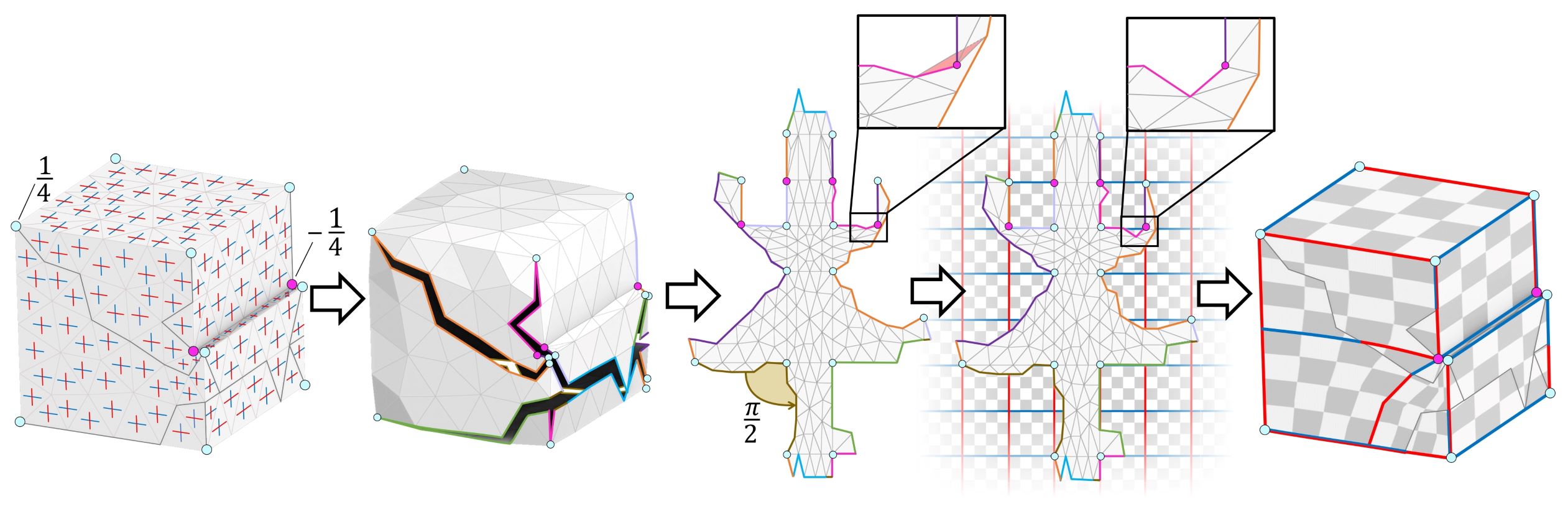}
	\textbf{(a)}\hspace{32mm}
	\textbf{(b)}\hspace{32mm}
	\textbf{(c)}\hspace{32mm}
	\textbf{(d)}\hspace{32mm}
	\textbf{(e)}
	\caption{Quad generation via global parameterization pipeline.
		\textbf{(a):} we compute a frame field over the input triangulation, it allows us to determine singular vertices.
		Then  the mesh is cut open~\textbf{(b)} and flattened under grid preserving constraints along the cuts.
		Least squares solution~\textbf{(c)} has inverted elements that we need to untangle~\textbf{(d)}.
		Finally, the unit grid is projected back to the mesh to define the quad mesh~\textbf{(e)}.
	}
	\label{fig:pipeline}
\end{figure*}

Recall that being free of inverted elements does not imply local invertibility~\cite[\S 3.3]{garanzha2021foldoverfree}.
In difficult cases, double coverings may appear. It is possible to prevent this with a brute force solution~\cite{garanzha2021b}.
Local injectivity can be enforced by adding extra (virtual) triangles. Unfortunately, this approach rigidifies the mesh and can be time consuming.
We can do better for global parameterizations.

\begin{figure}
	\centering
	\includegraphics[width=\linewidth]{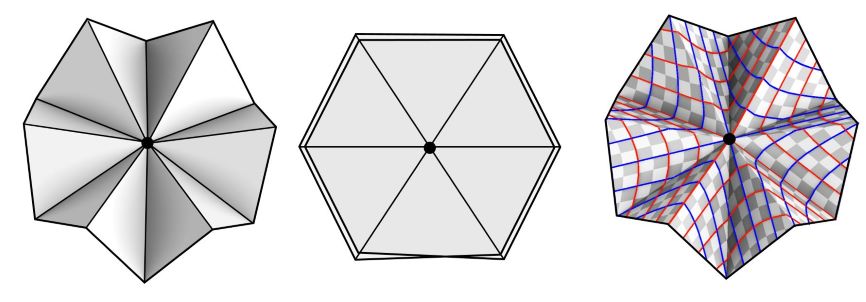}
	\caption{A double covering example. \textbf{Left:} a surface to flatten is made of 12 equilateral triangles. \textbf{Middle:} the surface can be mapped to a plane without inverted elements, producing a double covering.
		The map is not invertible in one point.
		\textbf{Right:} gradients of the parameterization form two orthogonal vector fields (shown in red and blue), both of them have a -1 singularity at the center.
	}
	\label{fig:covering}
\end{figure}

We can interpret the gradients of the parameterization as a frame field \cite{NSDF2008}.
Double coverings arise from index -1 singularities of this field (refer to Fig.~\ref{fig:covering}).
Poincaré-Hopf theorem states that the sum of indices is constant, so an index $1$ singularity must be placed somewhere to compensate for the index $-1$.
With free boundary mapping (as in our example in Fig.~\ref{fig:covering}), the index $1$ is placed outside of the domain, simply adding a loop to the boundary.
For the global parameterization case, there is no free boundary, so the index $1$ must appear at a vertex.
Recall however, that our maps have positive Jacobian determinant over all elements, and thus it is impossible to place singularity 1 at a regular vertex, since it is a pole singularity that would force the map to have degenerate elements.
Therefore, the only possibility for the solver is to place it on a vertex that already has a negative index singularity (typically $-1/4$) as illustrated in Fig.~\ref{fig:promotion}.
\begin{figure}
	\centering
	\includegraphics[width=.9\linewidth]{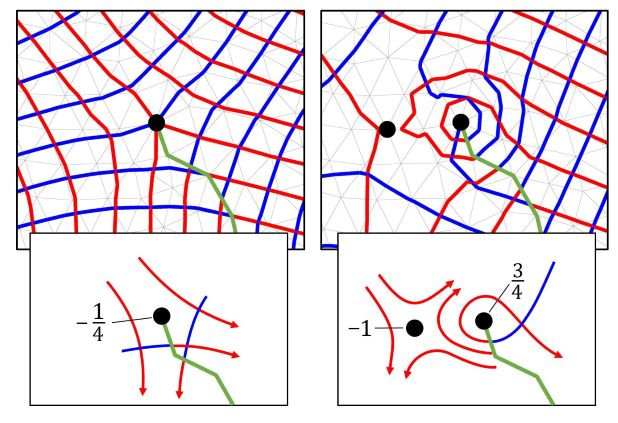}
	\caption{
		Quad mesh generation via global parameterization approach. The triangular domain is mapped to a plane, the green line corresponds to a (grid preserving) discontinuity in the map.
		Red and blue lines correspond to the flat unit grid under the action of the inverse map.
		\textbf{Left:} one singularity of index $-1/4$ is present in the domain, a valid quad mesh is generated.
		\textbf{Right:} a double covering ($-1$ singularity) can appear if and only if the $-1/4$ singularity is ``promoted'' to $3/4$, thus total sum is still equal to -1/4.
		The map is no longer invertbile (even if all elements have positive Jacobian!), leading to problems in quad mesh generation.
	}
	\label{fig:promotion}
\end{figure}

This observation allows to avoid double coverings by simply forcing vertices with negative index singularity to preserve the index.
To this end, for each such vertex $v$, we flatten its one ring, and compute the rotation + scale (with respect to $v$) that send each adjacent vertex to the next one.
The rotation is then scaled according to the index of the vertex ($\times 5/4$ for the index $-1/4$), and these affine equations tying in adjacent vertices are introduced as constraints to our system.
This solution forces the angle distortion to be perfectly distributed on the one ring of these vertices. A local mesh refinement is applied to prevent these new constraints to conflict.

As illustrated on Fig.~\ref{fig:quads}, our method provides injective maps, even with the high distortion required to produce coarse quad meshes.

\begin{figure}
	\centering
	\includegraphics[width=.45\linewidth]{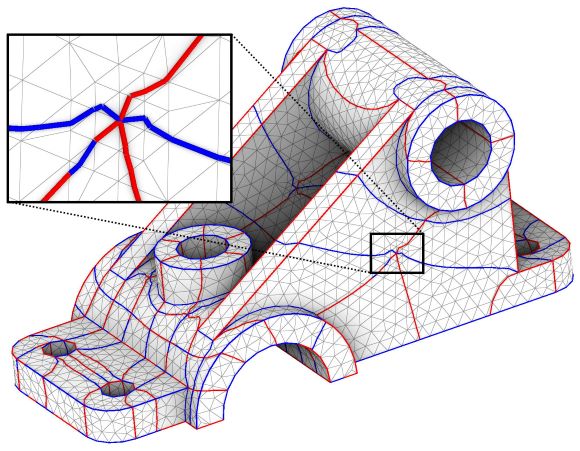}
	\includegraphics[width=.45\linewidth]{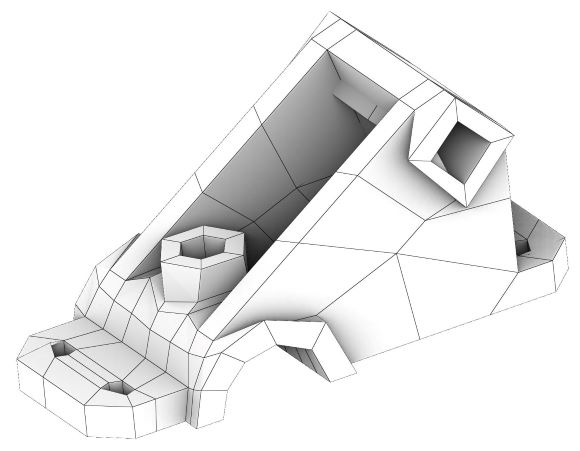}
	\caption{To avoid double coverings, we constrain singularities with negative index and untangle the map (left).
		This allows us to obtain valid parameterization even for coarsest quad layouts (right).}
	\label{fig:quads}
\end{figure}

\section{Limitations and future work}
Note that the ``finite number of steps'' theorem cannot be applied to the case when the diameter of the feasible set is zero.
This may happen, for example, when the parameter $t^*$ is set to the maximum attainable value.
To handle marginal cases of small size of the feasible set, we can assign to $\sigma_0$ parameter quite small value, say $10^{-3}$.
For a moment it is not clear whether we can attain deformation with best possible distortion numerically.

QIS algorithm is based on the idea of incremental contraction of the set of admissible mesh deformations, i.e. the deformations which provide finite value of QIS functional.
While rigorous analysis is missing, one can hope that with contraction the diameter of the admissible set tends to zero while its boundary, which is evidently infinitely smooth, is rounded and resembles the multidimensional sphere.
Evidently one should stop at some finite diameter, expecting that minimum in the contracted set converges to limiting point faster than the diameter coverges to zero.
Of course, much worser alternative is possible when measure of the feasible set tends to zero while its diameter is frozen near certain fixed value.
In order to get certain one should obtain estimates for positive definite part of the Hessian of QIS functional,
similar to that for untangling functional in \cite{garanzha2021foldoverfree}, and consider
limiting values, i.e. relation between diameter of the admissible set and curvature of its boundary.
This analysis is beyond the scope of present paper.

\section{Conclusion}

We formulate a set of variational problems potentially covering the complete technological chain for construction of optimal mappings and deformations with fixed as well as free boundaries.
We start with the continuation problem with respect to parameter $\varepsilon$,
this minimization allows us to compute optimal in the average deformations.
Then we formulate a continuation problem for worst quality measure maximization (quasi-isometric stiffening, QIS), which retains polyconvexity and smoothness of deformation.
At all stages we take care to demonstrate that finite number of basic optimization steps is enough in order to solve the problem.
We illustrate performance of our algorithm with challenging 2D and 3D numerical tests.
Importance of polyconvexity is underlined since some competing algorithms for mesh optimization which potentially may attain good quality criteria for deformations tend to lose deformation smoothness and symmetries even in the simplest test cases.

\appendix
\section{Relationship between distortion measures and the quasi-isometry constant}
\label{app:relation}

Given a deformation of a mesh, let us denote by $\Gamma$ its quasi-isometry constant (maximal relative length distortion of the map).
In our optimization procedure we do not optimize for $\Gamma$ directly, we minimize a mesh distortion measure $f$ instead.
Let us show the relation between the two.
More precisely, we want to estimate $\Gamma$ through available values of $f(J_k)$, where $J_k$ is the Jacobian matrix of $k$-the simplex of the mesh.

First of all, we can write down following estimation, where $\sigma_j(J_k)$ stand for (ordered) singular values of $J_k$:
\begin{equation}
	\label{eq:Gamma}
	\frac1{\Gamma} \leq \sigma_j(J_k) \leq \Gamma.
\end{equation}

In practice many different distortion measures can be used, but we are interested in those that satisfy $f(J_k) \geq 1$ and guarantee that inequality
\begin{equation}
	\label{eq:bounded-distortion}
	f(J_k) < \frac1t,\quad t < 1
\end{equation}
implies \eqref{eq:Gamma}.
Naturally, for QIS scheme to work with a certain distortion measure $f$, we need the bound for $\Gamma$ to tend to 1 as $t \to 1$.
In this section we analyze two possible choices of distortion measures, namely, Eq.~\eqref{eq:distortion} and Symmetric Dirichlet~\cite{Schuller2013}.

Let us start with $f := (1 - \theta) f_s + \theta f_v$.
First of all, let us note that $f_s \geq 1$ and $f_v \geq 1$,
so following inequalities hold:
\[
\frac{(\frac1d \tr J_i^\top J_i)^{d/2}}{\det J_i} < c_1, \quad c_1 :=  \left( \frac{1 - t  \theta}{t (1 - \theta)} \right)^{d/2}
\]
and
\[
\frac12 \left( \det J_i + \frac1{\det J_i} \right) < c_2, \quad c_2 :=   1 + \frac{1-t}{t \theta}.
\]

Reshetnyak's inequality \cite{reshetnyak1966space} implies that
\[
\frac{\sigma_1(J_i)}{\sigma_d(J_i)} < c_1 + \sqrt{c_1^2 -1}
\]

From the above estimates we obtain the required bounds for $\sigma_j(J_i)$~\cite{garanzha2000barrier}, so
\[
\Gamma < \left(c_1 + \sqrt{c_1^2 -1}\right)^{(d-1)/d} \left(c_2 + \sqrt{c_2^2 -1}\right)^{1/d}
\]
Note that in the 2D case and with $\theta = 1/2$ the bound for $\Gamma$ takes the simplest form
\[
\Gamma < \frac{(1 + \sqrt{1-t})^2}{t}.
\]
Indeed, Eq.~\eqref{eq:distortion} forces the quasi-isometry constant $\Gamma$ to tend to unity when $t\rightarrow 1$,
so our QIS techinque is correct.

Simple estimates for $\Gamma$ can be also derived for Symmetric Dirichlet (SD) distortion measure.
This measure can be written as follows:
\[
f(J) := \frac1{2d} \sum\limits_{j=1}^d \left(\sigma_j^2 + \frac1{\sigma_j^2}\right).
\]
In this case
\[
\Gamma < c_3 + \sqrt{c_3^2 -1}, \ \ c_3 = 1 + d \frac{1-t}{t}
\]
The bound satisfies the requirements for QIS algorithm, so SD can be used in our stiffening scheme.
Note, however, that even though SD distortion measure is a convex function of singular values, it is not clear whether it is polyconvex (convex with respect to minors of $J$).

\bibliographystyle{ACM-Reference-Format}
\bibliography{quality_mapping}


\begin{thebibliography}{37}


\ifx \showCODEN    \undefined \def \showCODEN     #1{\unskip}     \fi
\ifx \showDOI      \undefined \def \showDOI       #1{#1}\fi
\ifx \showISBNx    \undefined \def \showISBNx     #1{\unskip}     \fi
\ifx \showISBNxiii \undefined \def \showISBNxiii  #1{\unskip}     \fi
\ifx \showISSN     \undefined \def \showISSN      #1{\unskip}     \fi
\ifx \showLCCN     \undefined \def \showLCCN      #1{\unskip}     \fi
\ifx \shownote     \undefined \def \shownote      #1{#1}          \fi
\ifx \showarticletitle \undefined \def \showarticletitle #1{#1}   \fi
\ifx \showURL      \undefined \def \showURL       {\relax}        \fi
\providecommand\bibfield[2]{#2}
\providecommand\bibinfo[2]{#2}
\providecommand\natexlab[1]{#1}
\providecommand\showeprint[2][]{arXiv:#2}

\bibitem[\protect\citeauthoryear{Andersen and Andersen}{Andersen and
  Andersen}{2000}]%
        {Andersen2000}
\bibfield{author}{\bibinfo{person}{Erling~D. Andersen} {and}
  \bibinfo{person}{Knud~D. Andersen}.} \bibinfo{year}{2000}\natexlab{}.
\newblock \bibinfo{booktitle}{\emph{The Mosek Interior Point Optimizer for
  Linear Programming: An Implementation of the Homogeneous Algorithm}}.
\newblock \bibinfo{publisher}{Springer US}, \bibinfo{address}{Boston, MA},
  \bibinfo{pages}{197--232}.
\newblock
\showISBNx{978-1-4757-3216-0}
\urldef\tempurl%
\url{https://doi.org/10.1007/978-1-4757-3216-0_8}
\showDOI{\tempurl}


\bibitem[\protect\citeauthoryear{Ball}{Ball}{1976}]%
        {ball1976convexity}
\bibfield{author}{\bibinfo{person}{John~M Ball}.}
  \bibinfo{year}{1976}\natexlab{}.
\newblock \showarticletitle{Convexity conditions and existence theorems in
  nonlinear elasticity}.
\newblock \bibinfo{journal}{\emph{Archive for rational mechanics and Analysis}}
  \bibinfo{volume}{63}, \bibinfo{number}{4} (\bibinfo{year}{1976}),
  \bibinfo{pages}{337--403}.
\newblock


\bibitem[\protect\citeauthoryear{Bommes, Campen, Ebke, Alliez, and
  Kobbelt}{Bommes et~al\mbox{.}}{2013}]%
        {IGM2013}
\bibfield{author}{\bibinfo{person}{David Bommes}, \bibinfo{person}{Marcel
  Campen}, \bibinfo{person}{Hans-Christian Ebke}, \bibinfo{person}{Pierre
  Alliez}, {and} \bibinfo{person}{Leif Kobbelt}.}
  \bibinfo{year}{2013}\natexlab{}.
\newblock \showarticletitle{Integer-Grid Maps for Reliable Quad Meshing}.
\newblock \bibinfo{journal}{\emph{ACM Trans. Graph.}} \bibinfo{volume}{32},
  \bibinfo{number}{4}, Article \bibinfo{articleno}{98} (\bibinfo{date}{jul}
  \bibinfo{year}{2013}), \bibinfo{numpages}{12}~pages.
\newblock
\showISSN{0730-0301}
\urldef\tempurl%
\url{https://doi.org/10.1145/2461912.2462014}
\showDOI{\tempurl}


\bibitem[\protect\citeauthoryear{Bommes, Zimmer, and Kobbelt}{Bommes
  et~al\mbox{.}}{2009}]%
        {MIQ2009}
\bibfield{author}{\bibinfo{person}{David Bommes}, \bibinfo{person}{Henrik
  Zimmer}, {and} \bibinfo{person}{Leif Kobbelt}.}
  \bibinfo{year}{2009}\natexlab{}.
\newblock \showarticletitle{Mixed-Integer Quadrangulation}.
\newblock \bibinfo{journal}{\emph{ACM Trans. Graph.}} \bibinfo{volume}{28},
  \bibinfo{number}{3}, Article \bibinfo{articleno}{77} (\bibinfo{date}{jul}
  \bibinfo{year}{2009}), \bibinfo{numpages}{10}~pages.
\newblock
\showISSN{0730-0301}
\urldef\tempurl%
\url{https://doi.org/10.1145/1531326.1531383}
\showDOI{\tempurl}


\bibitem[\protect\citeauthoryear{Bommes, Zimmer, and Kobbelt}{Bommes
  et~al\mbox{.}}{2012}]%
        {Bommes:2010:PMO}
\bibfield{author}{\bibinfo{person}{David Bommes}, \bibinfo{person}{Henrik
  Zimmer}, {and} \bibinfo{person}{Leif Kobbelt}.}
  \bibinfo{year}{2012}\natexlab{}.
\newblock \showarticletitle{Practical mixed-integer optimization for geometry
  processing}. In \bibinfo{booktitle}{\emph{Proceedings of the 7th
  international conference on Curves and Surfaces}} (Avignon, France).
  \bibinfo{publisher}{Springer-Verlag}, \bibinfo{address}{Berlin, Heidelberg},
  \bibinfo{pages}{193--206}.
\newblock


\bibitem[\protect\citeauthoryear{Ciarlet and Geymonat}{Ciarlet and
  Geymonat}{1982}]%
        {Ciarlet}
\bibfield{author}{\bibinfo{person}{P.G. Ciarlet} {and} \bibinfo{person}{G.
  Geymonat}.} \bibinfo{year}{1982}\natexlab{}.
\newblock \showarticletitle{Sur les lois de comportement en elasticite
  non-lineaire compressible}.
\newblock \bibinfo{journal}{\emph{C.R. Acad.Sci. Paris Ser.II}}
  \bibinfo{volume}{295} (\bibinfo{year}{1982}), \bibinfo{pages}{423 -- 426}.
\newblock


\bibitem[\protect\citeauthoryear{Ciarlet and Necas}{Ciarlet and Necas}{1985}]%
        {ciarlet1985stiffening}
\bibfield{author}{\bibinfo{person}{Philippe Ciarlet} {and}
  \bibinfo{person}{Jindrich Necas}.} \bibinfo{year}{1985}\natexlab{}.
\newblock \showarticletitle{Unilateral problems in nonlinear, three-dimensional
  elasticity}.
\newblock \bibinfo{journal}{\emph{Arch. Rational Mech. Anal.}}
  \bibinfo{volume}{87} (\bibinfo{year}{1985}), \bibinfo{pages}{319--338}.
\newblock
\urldef\tempurl%
\url{https://doi.org/10.1007/BF00250917}
\showDOI{\tempurl}


\bibitem[\protect\citeauthoryear{Desobry, Protais, Ray, Corman, and
  Sokolov}{Desobry et~al\mbox{.}}{2021}]%
        {Desobry2021}
\bibfield{author}{\bibinfo{person}{David Desobry},
  \bibinfo{person}{Fran{\c{c}}ois Protais}, \bibinfo{person}{Nicolas Ray},
  \bibinfo{person}{Etienne Corman}, {and} \bibinfo{person}{Dmitry Sokolov}.}
  \bibinfo{year}{2021}\natexlab{}.
\newblock \showarticletitle{Frame Fields for CAD Models}. In
  \bibinfo{booktitle}{\emph{Advances in Visual Computing}},
  \bibfield{editor}{\bibinfo{person}{George Bebis}, \bibinfo{person}{Vassilis
  Athitsos}, \bibinfo{person}{Tong Yan}, \bibinfo{person}{Manfred Lau},
  \bibinfo{person}{Frederick Li}, \bibinfo{person}{Conglei Shi},
  \bibinfo{person}{Xiaoru Yuan}, \bibinfo{person}{Christos Mousas}, {and}
  \bibinfo{person}{Gerd Bruder}} (Eds.). \bibinfo{publisher}{Springer
  International Publishing}, \bibinfo{address}{Cham},
  \bibinfo{pages}{421--434}.
\newblock
\showISBNx{978-3-030-90436-4}


\bibitem[\protect\citeauthoryear{Du, Aigerman, Zhou, Kovalsky, Yan, Kaufman,
  and Ju}{Du et~al\mbox{.}}{2020}]%
        {Du2020}
\bibfield{author}{\bibinfo{person}{Xingyi Du}, \bibinfo{person}{Noam Aigerman},
  \bibinfo{person}{Qingnan Zhou}, \bibinfo{person}{Shahar~Z. Kovalsky},
  \bibinfo{person}{Yajie Yan}, \bibinfo{person}{Danny~M. Kaufman}, {and}
  \bibinfo{person}{Tao Ju}.} \bibinfo{year}{2020}\natexlab{}.
\newblock \showarticletitle{Lifting Simplices to Find Injectivity}.
\newblock \bibinfo{journal}{\emph{ACM Trans. Graph.}} \bibinfo{volume}{39},
  \bibinfo{number}{4}, Article \bibinfo{articleno}{120} (\bibinfo{date}{July}
  \bibinfo{year}{2020}), \bibinfo{numpages}{17}~pages.
\newblock
\showISSN{0730-0301}
\urldef\tempurl%
\url{https://doi.org/10.1145/3386569.3392484}
\showDOI{\tempurl}


\bibitem[\protect\citeauthoryear{Fang, Li, Jiang, and Kaufman}{Fang
  et~al\mbox{.}}{2021}]%
        {Fang2021IDP}
\bibfield{author}{\bibinfo{person}{Yu Fang}, \bibinfo{person}{Minchen Li},
  \bibinfo{person}{Chenfanfu Jiang}, {and} \bibinfo{person}{Danny~M. Kaufman}.}
  \bibinfo{year}{2021}\natexlab{}.
\newblock \showarticletitle{Guaranteed Globally Injective 3D Deformation
  Processing}.
\newblock \bibinfo{journal}{\emph{ACM Trans. Graph. (SIGGRAPH)}}
  \bibinfo{volume}{40}, \bibinfo{number}{4}, Article \bibinfo{articleno}{75}
  (\bibinfo{year}{2021}).
\newblock


\bibitem[\protect\citeauthoryear{Fu and Liu}{Fu and Liu}{2016}]%
        {Fu2016}
\bibfield{author}{\bibinfo{person}{Xiao-Ming Fu} {and} \bibinfo{person}{Yang
  Liu}.} \bibinfo{year}{2016}\natexlab{}.
\newblock \showarticletitle{Computing Inversion-Free Mappings by Simplex
  Assembly}.
\newblock \bibinfo{journal}{\emph{ACM Trans. Graph.}} \bibinfo{volume}{35},
  \bibinfo{number}{6}, Article \bibinfo{articleno}{216} (\bibinfo{date}{Nov.}
  \bibinfo{year}{2016}), \bibinfo{numpages}{12}~pages.
\newblock
\showISSN{0730-0301}
\urldef\tempurl%
\url{https://doi.org/10.1145/2980179.2980231}
\showDOI{\tempurl}


\bibitem[\protect\citeauthoryear{Garanzha}{Garanzha}{2000}]%
        {garanzha2000barrier}
\bibfield{author}{\bibinfo{person}{Vladimir Garanzha}.}
  \bibinfo{year}{2000}\natexlab{}.
\newblock \showarticletitle{The barrier method for constructing quasi-isometric
  grids}.
\newblock \bibinfo{journal}{\emph{Computational Mathematics and Mathematical
  Physics}}  \bibinfo{volume}{40} (\bibinfo{year}{2000}),
  \bibinfo{pages}{1617--1637}.
\newblock


\bibitem[\protect\citeauthoryear{Garanzha and Kaporin}{Garanzha and
  Kaporin}{1999}]%
        {Garanzha99}
\bibfield{author}{\bibinfo{person}{Vladimir Garanzha} {and}
  \bibinfo{person}{Igor Kaporin}.} \bibinfo{year}{1999}\natexlab{}.
\newblock \showarticletitle{Regularization of the barrier variational method of
  grid generation}.
\newblock \bibinfo{journal}{\emph{Comput. Math. Math. Phys.}}
  \bibinfo{volume}{39}, \bibinfo{number}{9} (\bibinfo{year}{1999}),
  \bibinfo{pages}{1426--1440}.
\newblock


\bibitem[\protect\citeauthoryear{Garanzha, Kaporin, Kudryavtseva, Protais, Ray,
  and Sokolov}{Garanzha et~al\mbox{.}}{2021a}]%
        {garanzha2021foldoverfree}
\bibfield{author}{\bibinfo{person}{Vladimir Garanzha}, \bibinfo{person}{Igor
  Kaporin}, \bibinfo{person}{Liudmila Kudryavtseva}, \bibinfo{person}{François
  Protais}, \bibinfo{person}{Nicolas Ray}, {and} \bibinfo{person}{Dmitry
  Sokolov}.} \bibinfo{year}{2021}\natexlab{a}.
\newblock \showarticletitle{Foldover-free maps in 50 lines of code}.
\newblock \bibinfo{journal}{\emph{ACM Transactions on Graphics}}
  \bibinfo{volume}{40}, \bibinfo{number}{4} (\bibinfo{year}{2021}).
\newblock
\urldef\tempurl%
\url{https://doi.org/10.1145/3450626.3459847}
\showDOI{\tempurl}


\bibitem[\protect\citeauthoryear{Garanzha, Kaporin, Kudryavtseva, Protais, Ray,
  and Sokolov}{Garanzha et~al\mbox{.}}{2021b}]%
        {garanzha2021b}
\bibfield{author}{\bibinfo{person}{Vladimir Garanzha}, \bibinfo{person}{Igor
  Kaporin}, \bibinfo{person}{Liudmila Kudryavtseva}, \bibinfo{person}{Fran{\c
  c}ois Protais}, \bibinfo{person}{Nicolas Ray}, {and} \bibinfo{person}{Dmitry
  Sokolov}.} \bibinfo{year}{2021}\natexlab{b}.
\newblock \showarticletitle{{On Local Invertibility and Quality of
  Free-boundary Deformations}}. In \bibinfo{booktitle}{\emph{{IMR 2021 - 29th
  International Meshing Roundtable}}}. \bibinfo{address}{Virtual, United
  States}.
\newblock
\urldef\tempurl%
\url{https://doi.org/10.5281/zenodo.5559040}
\showDOI{\tempurl}


\bibitem[\protect\citeauthoryear{Garanzha and Kudryavtseva}{Garanzha and
  Kudryavtseva}{2019}]%
        {10.1007/978-3-030-10934-9_35}
\bibfield{author}{\bibinfo{person}{Vladimir Garanzha} {and}
  \bibinfo{person}{Liudmila Kudryavtseva}.} \bibinfo{year}{2019}\natexlab{}.
\newblock \showarticletitle{Hypoelastic Stabilization of Variational Algorithm
  for Construction of Moving Deforming Meshes}. In
  \bibinfo{booktitle}{\emph{Optimization and Applications}},
  \bibfield{editor}{\bibinfo{person}{Yury Evtushenko},
  \bibinfo{person}{Milojica Ja{\'{c}}imovi{\'{c}}}, \bibinfo{person}{Michael
  Khachay}, \bibinfo{person}{Yury Kochetov}, \bibinfo{person}{Vlasta Malkova},
  {and} \bibinfo{person}{Mikhail Posypkin}} (Eds.).
  \bibinfo{publisher}{Springer International Publishing},
  \bibinfo{address}{Cham}, \bibinfo{pages}{497--511}.
\newblock
\showISBNx{978-3-030-10934-9}


\bibitem[\protect\citeauthoryear{Garanzha, Kudryavtseva, and
  Utyuzhnikov}{Garanzha et~al\mbox{.}}{2014}]%
        {Garanzha2014}
\bibfield{author}{\bibinfo{person}{Vladimir Garanzha},
  \bibinfo{person}{Liudmila Kudryavtseva}, {and} \bibinfo{person}{Sergei
  Utyuzhnikov}.} \bibinfo{year}{2014}\natexlab{}.
\newblock \showarticletitle{Variational method for untangling and optimization
  of spatial meshes}.
\newblock \bibinfo{journal}{\emph{J. Comput. Appl. Math.}}
  \bibinfo{volume}{269} (\bibinfo{year}{2014}), \bibinfo{pages}{24 -- 41}.
\newblock
\showISSN{0377-0427}
\urldef\tempurl%
\url{https://doi.org/10.1016/j.cam.2014.03.006}
\showDOI{\tempurl}


\bibitem[\protect\citeauthoryear{Gillespie, Springborn, and Crane}{Gillespie
  et~al\mbox{.}}{2021}]%
        {CEPS}
\bibfield{author}{\bibinfo{person}{Mark Gillespie}, \bibinfo{person}{Boris
  Springborn}, {and} \bibinfo{person}{Keenan Crane}.}
  \bibinfo{year}{2021}\natexlab{}.
\newblock \showarticletitle{Discrete Conformal Equivalence of Polyhedral
  Surfaces}.
\newblock \bibinfo{journal}{\emph{ACM Trans. Graph.}} \bibinfo{volume}{40},
  \bibinfo{number}{4} (\bibinfo{year}{2021}).
\newblock


\bibitem[\protect\citeauthoryear{Hestenes and Stiefel}{Hestenes and
  Stiefel}{1952}]%
        {Hestenes1952MethodsOC}
\bibfield{author}{\bibinfo{person}{Magnus~R. Hestenes} {and}
  \bibinfo{person}{Eduard Stiefel}.} \bibinfo{year}{1952}\natexlab{}.
\newblock \showarticletitle{Methods of conjugate gradients for solving linear
  systems}.
\newblock \bibinfo{journal}{\emph{Journal of research of the National Bureau of
  Standards}}  \bibinfo{volume}{49} (\bibinfo{year}{1952}),
  \bibinfo{pages}{409--435}.
\newblock


\bibitem[\protect\citeauthoryear{Hormann and Greiner}{Hormann and
  Greiner}{2000}]%
        {Hormann2000MIPS}
\bibfield{author}{\bibinfo{person}{K. Hormann} {and} \bibinfo{person}{G.
  Greiner}.} \bibinfo{year}{2000}\natexlab{}.
\newblock \showarticletitle{MIPS: An Efficient Global Parametrization Method}.
  In \bibinfo{booktitle}{\emph{Curve and Surface Design}}.
  \bibinfo{publisher}{Vanderbilt University press}.
\newblock


\bibitem[\protect\citeauthoryear{Ivanenko}{Ivanenko}{2000}]%
        {Ivanenko2000}
\bibfield{author}{\bibinfo{person}{S.A. Ivanenko}.}
  \bibinfo{year}{2000}\natexlab{}.
\newblock \showarticletitle{Control of cell shapes in the course of grid
  generation}.
\newblock \bibinfo{journal}{\emph{Zh. Vychisl. Mat. Mat. Fiz.}}
  \bibinfo{volume}{40} (\bibinfo{date}{01} \bibinfo{year}{2000}),
  \bibinfo{pages}{1662--1684}.
\newblock


\bibitem[\protect\citeauthoryear{Jacquotte}{Jacquotte}{1988}]%
        {jacquotte1988mechanical}
\bibfield{author}{\bibinfo{person}{Olivier-P Jacquotte}.}
  \bibinfo{year}{1988}\natexlab{}.
\newblock \showarticletitle{A mechanical model for a new grid generation method
  in computational fluid dynamics}.
\newblock \bibinfo{journal}{\emph{Computer methods in applied mechanics and
  engineering}} \bibinfo{volume}{66}, \bibinfo{number}{3}
  (\bibinfo{year}{1988}), \bibinfo{pages}{323--338}.
\newblock


\bibitem[\protect\citeauthoryear{Kovalsky, Aigerman, Basri, and
  Lipman}{Kovalsky et~al\mbox{.}}{2015}]%
        {LargeScaleBD:2015}
\bibfield{author}{\bibinfo{person}{Shahar~Z. Kovalsky}, \bibinfo{person}{Noam
  Aigerman}, \bibinfo{person}{Ronen Basri}, {and} \bibinfo{person}{Yaron
  Lipman}.} \bibinfo{year}{2015}\natexlab{}.
\newblock \showarticletitle{Large-scale bounded distortion mappings}.
\newblock \bibinfo{journal}{\emph{ACM Transactions on Graphics (proceedings of
  ACM SIGGRAPH Asia)}} \bibinfo{volume}{34}, \bibinfo{number}{6}
  (\bibinfo{year}{2015}).
\newblock


\bibitem[\protect\citeauthoryear{Lipman}{Lipman}{2012}]%
        {Lipman2012}
\bibfield{author}{\bibinfo{person}{Yaron Lipman}.}
  \bibinfo{year}{2012}\natexlab{}.
\newblock \showarticletitle{Bounded Distortion Mapping Spaces for Triangular
  Meshes}.
\newblock \bibinfo{journal}{\emph{ACM Trans. Graph.}} \bibinfo{volume}{31},
  \bibinfo{number}{4}, Article \bibinfo{articleno}{108} (\bibinfo{date}{July}
  \bibinfo{year}{2012}), \bibinfo{numpages}{13}~pages.
\newblock
\showISSN{0730-0301}
\urldef\tempurl%
\url{https://doi.org/10.1145/2185520.2185604}
\showDOI{\tempurl}


\bibitem[\protect\citeauthoryear{Lévy, Petitjean, Ray, and Maillo~t}{Lévy
  et~al\mbox{.}}{2002}]%
        {levy2002}
\bibfield{author}{\bibinfo{person}{Bruno Lévy}, \bibinfo{person}{Sylvain
  Petitjean}, \bibinfo{person}{Nicolas Ray}, {and} \bibinfo{person}{Jérome
  Maillo~t}.} \bibinfo{year}{2002}\natexlab{}.
\newblock \showarticletitle{Least Squares Conformal Maps for Automatic Texture
  Atlas Generation}. In \bibinfo{booktitle}{\emph{ACM SIGGRAPH conference
  proceedings}}, \bibfield{editor}{\bibinfo{person}{ACM}} (Ed.).
\newblock


\bibitem[\protect\citeauthoryear{Naitsat, Zhu, and Zeevi}{Naitsat
  et~al\mbox{.}}{2020}]%
        {Naitsat2019}
\bibfield{author}{\bibinfo{person}{Alexander Naitsat}, \bibinfo{person}{Yufeng
  Zhu}, {and} \bibinfo{person}{Yehoshua~Y Zeevi}.}
  \bibinfo{year}{2020}\natexlab{}.
\newblock \showarticletitle{Adaptive Block Coordinate Descent for Distortion
  Optimization}. In \bibinfo{booktitle}{\emph{Computer Graphics Forum}},
  Vol.~\bibinfo{volume}{39}. Wiley Online Library, \bibinfo{pages}{360--376}.
\newblock


\bibitem[\protect\citeauthoryear{Prager}{Prager}{1957}]%
        {prager1957stiffening}
\bibfield{author}{\bibinfo{person}{William Prager}.}
  \bibinfo{year}{1957}\natexlab{}.
\newblock \showarticletitle{On ideal locking materials}.
\newblock \bibinfo{journal}{\emph{Transactions of the Society of Rheology}}
  \bibinfo{volume}{1}, \bibinfo{number}{1} (\bibinfo{year}{1957}),
  \bibinfo{pages}{169--175}.
\newblock


\bibitem[\protect\citeauthoryear{Ray, Vallet, Li, and L\'{e}vy}{Ray
  et~al\mbox{.}}{2008}]%
        {NSDF2008}
\bibfield{author}{\bibinfo{person}{Nicolas Ray}, \bibinfo{person}{Bruno
  Vallet}, \bibinfo{person}{Wan~Chiu Li}, {and} \bibinfo{person}{Bruno
  L\'{e}vy}.} \bibinfo{year}{2008}\natexlab{}.
\newblock \showarticletitle{N-Symmetry Direction Field Design}.
\newblock \bibinfo{journal}{\emph{ACM Trans. Graph.}} \bibinfo{volume}{27},
  \bibinfo{number}{2}, Article \bibinfo{articleno}{10} (\bibinfo{date}{may}
  \bibinfo{year}{2008}), \bibinfo{numpages}{13}~pages.
\newblock
\showISSN{0730-0301}
\urldef\tempurl%
\url{https://doi.org/10.1145/1356682.1356683}
\showDOI{\tempurl}


\bibitem[\protect\citeauthoryear{Reshetnyak}{Reshetnyak}{1966}]%
        {reshetnyak1966space}
\bibfield{author}{\bibinfo{person}{Yu~G Reshetnyak}.}
  \bibinfo{year}{1966}\natexlab{}.
\newblock \showarticletitle{Bounds on moduli of continuity for certain
  mappings}.
\newblock \bibinfo{journal}{\emph{Siberian Mathematical Journal}}
  \bibinfo{volume}{7}, \bibinfo{number}{5} (\bibinfo{year}{1966}),
  \bibinfo{pages}{879--886}.
\newblock


\bibitem[\protect\citeauthoryear{Rumpf}{Rumpf}{1996}]%
        {rumpf1996variational}
\bibfield{author}{\bibinfo{person}{Martin Rumpf}.}
  \bibinfo{year}{1996}\natexlab{}.
\newblock \showarticletitle{A variational approach to optimal meshes}.
\newblock \bibinfo{journal}{\emph{Numer. Math.}} \bibinfo{volume}{72},
  \bibinfo{number}{4} (\bibinfo{year}{1996}), \bibinfo{pages}{523--540}.
\newblock


\bibitem[\protect\citeauthoryear{S.~K.~Godunov}{S.~K.~Godunov}{1995}]%
        {Godunov1995}
\bibfield{author}{\bibinfo{person}{G.~A.~Chumakov S.~K.~Godunov, V.
  M.~Gordienko}.} \bibinfo{year}{1995}\natexlab{}.
\newblock \showarticletitle{Variational principle for 2-D regular
  quasi-isometric grid generation}.
\newblock \bibinfo{journal}{\emph{International Journal of Computational Fluid
  Dynamics}} \bibinfo{volume}{5}, \bibinfo{number}{1-2} (\bibinfo{year}{1995}),
  \bibinfo{pages}{99--118}.
\newblock


\bibitem[\protect\citeauthoryear{Sawhney and Crane}{Sawhney and Crane}{2017}]%
        {BFF}
\bibfield{author}{\bibinfo{person}{Rohan Sawhney} {and} \bibinfo{person}{Keenan
  Crane}.} \bibinfo{year}{2017}\natexlab{}.
\newblock \showarticletitle{Boundary First Flattening}.
\newblock \bibinfo{journal}{\emph{ACM Trans. Graph.}} \bibinfo{volume}{37},
  \bibinfo{number}{1}, Article \bibinfo{articleno}{5} (\bibinfo{date}{dec}
  \bibinfo{year}{2017}), \bibinfo{numpages}{14}~pages.
\newblock
\showISSN{0730-0301}
\urldef\tempurl%
\url{https://doi.org/10.1145/3132705}
\showDOI{\tempurl}


\bibitem[\protect\citeauthoryear{Sch{\"u}ller, Kavan, Panozzo, and
  Sorkine-Hornung}{Sch{\"u}ller et~al\mbox{.}}{2013}]%
        {Schuller2013}
\bibfield{author}{\bibinfo{person}{Christian Sch{\"u}ller},
  \bibinfo{person}{Ladislav Kavan}, \bibinfo{person}{Daniele Panozzo}, {and}
  \bibinfo{person}{Olga Sorkine-Hornung}.} \bibinfo{year}{2013}\natexlab{}.
\newblock \showarticletitle{Locally Injective Mappings}.
\newblock \bibinfo{journal}{\emph{Computer Graphics Forum (proceedings of
  Symposium on Geometry Processing)}} \bibinfo{volume}{32}, \bibinfo{number}{5}
  (\bibinfo{year}{2013}).
\newblock


\bibitem[\protect\citeauthoryear{Sokolov}{Sokolov}{2022}]%
        {supplemental}
\bibfield{author}{\bibinfo{person}{Dmitry Sokolov}.}
  \bibinfo{year}{2022}\natexlab{}.
\newblock \bibinfo{title}{Supplemental material for ``Practical lowest
  distortion mapping''}.
\newblock \bibinfo{howpublished}{\url{https://github.com/ssloy/QIS}}.
\newblock
\newblock
\shownote{Accessed: 2022-01-27.}


\bibitem[\protect\citeauthoryear{Sorkine, Cohen-Or, Goldenthal, and
  Lischinski}{Sorkine et~al\mbox{.}}{2002}]%
        {Sorkine2002}
\bibfield{author}{\bibinfo{person}{O. Sorkine}, \bibinfo{person}{D. Cohen-Or},
  \bibinfo{person}{R. Goldenthal}, {and} \bibinfo{person}{D. Lischinski}.}
  \bibinfo{year}{2002}\natexlab{}.
\newblock \showarticletitle{Bounded-distortion piecewise mesh
  parameterization}. In \bibinfo{booktitle}{\emph{IEEE Visualization, 2002. VIS
  2002.}} \bibinfo{pages}{355--362}.
\newblock
\urldef\tempurl%
\url{https://doi.org/10.1109/VISUAL.2002.1183795}
\showDOI{\tempurl}


\bibitem[\protect\citeauthoryear{Springborn, Schr\"{o}der, and
  Pinkall}{Springborn et~al\mbox{.}}{2008}]%
        {CETM}
\bibfield{author}{\bibinfo{person}{Boris Springborn}, \bibinfo{person}{Peter
  Schr\"{o}der}, {and} \bibinfo{person}{Ulrich Pinkall}.}
  \bibinfo{year}{2008}\natexlab{}.
\newblock \showarticletitle{Conformal Equivalence of Triangle Meshes}.
\newblock \bibinfo{journal}{\emph{ACM Trans. Graph.}} \bibinfo{volume}{27},
  \bibinfo{number}{3} (\bibinfo{date}{aug} \bibinfo{year}{2008}),
  \bibinfo{pages}{1–11}.
\newblock
\showISSN{0730-0301}
\urldef\tempurl%
\url{https://doi.org/10.1145/1360612.1360676}
\showDOI{\tempurl}


\bibitem[\protect\citeauthoryear{Zhu, Byrd, and Nocedal}{Zhu
  et~al\mbox{.}}{1997}]%
        {LBFGS}
\bibfield{author}{\bibinfo{person}{C. Zhu}, \bibinfo{person}{R.~H. Byrd}, {and}
  \bibinfo{person}{J. Nocedal}.} \bibinfo{year}{1997}\natexlab{}.
\newblock \showarticletitle{{L-BFGS-B}: {Algorithm} 778: {L-BFGS-B}, {FORTRAN}
  routines for large scale bound constrained optimization}.
\newblock \bibinfo{journal}{\emph{ACM Trans. Math. Software}}
  \bibinfo{volume}{23}, \bibinfo{number}{4} (\bibinfo{year}{1997}),
  \bibinfo{pages}{550--560}.
\newblock


\end{thebibliography}
\end{document}